\newtheorem{lemma}{Lemma}
\newtheorem{theorem}{Theorem}
\newtheorem{prop}{Proposition}
\newcommand\norm[1]{\left\lVert#1\right\rVert}
\newcommand{\Tau}{\mathrm{T}}
\newcommand{\supp}{\text{supp}}
\def\multiset#1#2{\ensuremath{\left(\kern-.3em\left(\genfrac{}{}{0pt}{}{#1}{#2}\right)\kern-.3em\right)}}
\definecolor{darkblue}{rgb}{0.0,0.0,0.6}
\begin{document}

\title{Cooperation in Small Groups - an Optimal Transport Approach\footnote{I am indebted to  Truman Bewley, John Geanakopolos, Xiangliang Li and Larry Samuelson for inspiring discussions and their encouragements. I am grateful to Ian Ball, Laura Doval, Michael Greinecker, Ryota Iijima, Masaki Miyashita, Allen Wong, Weijie Zhong and participants at conferences and	seminars for helpful comments. I am responsible for all remaining mistakes and typos. First version: March 2017. }}
\author{Xinyang Wang\footnote{Department of Economics, Yale University, email: \href{mailto:xinyang.wang@yale.edu}{xinyang.wang@yale.edu} }}
\maketitle

\abstract
If agents cooperate only within small groups of some bounded sizes, is there a way to partition the population into small groups such that no collection of agents can do better by forming a new group? This paper revisited f-core in a transferable utility setting. By providing a new formulation to the problem, we built up a link between f-core and the transportation theory. Such a link helps us to establish an exact existence result, and a characterization result of f-core for a general class of agents, as well as some improvements in computing the f-core in the finite type case.

\section{Introduction}

In this paper, we study a continuum of players form small groups in order to share group surpluses. Group sizes are bounded by natural numbers or percentiles, and group surpluses are determined by the types of its members. We wish to know whether there is a stable state in this game: that is, whether we can partition the continuum of players into small groups such that agents have a way to share group surpluses and no coalition of players can do better by forming a new group. We use the term stable assignments to denote such partitions. Therefore, our question is whether a stable assignment exists. It is worth noting that when the group size is exactly two, the problem is a (roommate) matching problem.

In the literature, there are some partial answers to the question we posed: when group sizes are bounded by natural numbers, \cite{KanekoWooders96} proved an approximately stable assignment exists\footnote{In their definition of stable assignment, there is an additional approximate feasibility condition. In addition, their framework is for games with non-transferable utility. In this paper, when we mention their work, we talk about the application of their result to a game with transferable utility.}. It is only known that the approximation notion can be dropped when the continuum of players share a finite number of types\footnote{See  \cite{wooders2012theory} for a proof.}.  When group sizes are bounded by percentiles, \cite{Schmeidler72} proved, in exchange economies, core allocations are not blocked by any group of epsilon sizes. However, this observation alone will not lead to the existence of a stable assignment, since core allocations  cannot  be achieved by  reallocation in small groups. 

As a result, to my knowledge, when group sizes are bounded by either natural numbers or percentiles, there is no existence result for general models. To make matters worse, even though we know the existence of a stable assignment when group sizes are finitely bounded and the type space is finite, it is not computationally feasible to use the current method, linear programming, to find a stable assignment, as the number of group types becomes astronomical even when group sizes are very small. For example, when there are 1000 types of players and every group contains up to 4 players, there are $\sum_{n=1}^4 \multiset{1000}{n}\sim 4.2\times 10^{10}$ types of groups in this game. If we use linear programming to find a stable assignment, we need to solve a maximization problem with $4.2\times 10^{10}$ unknowns, which is not a feasible task computationally.

In this paper, conceptually, we prove the existence of a stable assignment for general type spaces and general surplus functions, when group sizes are bounded by either natural numbers or percentiles. Furthermore, when group sizes are bounded by natural numbers, our work provides a parallel yet simpler formulation for the classical assignment problem; when group sizes are bounded by percentiles, our model is new.

Computationally, when there are finitely many types of agents in the market, our method provides the computational feasibility for finding stable assignments by reducing the number of unknowns from about $|I|^N$, in linear programming, to about $|I|$, where $|I|$ is the number of types, $N$ is maximum group size and $|I|$ is much larger than $N$. In particular, for games with 1000 types of players and group sizes are up to 4, our method reduces the number of unknowns in the maximization problem from about $4.2\times 10^{10}$ to $25000$.\footnote{See \hyperref[subsubsection: production]{Section \ref{subsubsection: production}} for more details.} 

Methodologically, we prove the existence results by connecting the stability problem to a multi- or ``continuum"- marginal transport problem. For games with finite-size groups, our key observation is that any stable assignment can be identified by a {\it symmetric} transport plan in some replicated agent spaces. When the maximum group size is $N$, we replicate the agents space $N!$ many times. We choose the number $N!$ as any group, containing no more than $N$ players, can be represented by a vector of length $N!$ via replications. As a consequence of our observation, we are able to reformulate the welfare maximization problem as a symmetric transport problem. This reformulation helps us in two ways. First, the duality theorem of the multi-marginal transport problems helps us prove the existence of a stable assignment. Second, as \cite{friesecke2018breaking} pointed out, the symmetric structure created by our identification trick helps us reduce the number of unknowns. For games with positive-size groups, we prove the existence of a stable assignment by extending Kantorovich-Koopmans' duality theorem to a ``continuum"-marginal setting.

This paper is organized as follows. In  \hyperlink{section.2}{Section 2}, we study games with finite-size groups. In \hyperlink{section.3}{Section 3}, we study games with positive-size groups. We review related literature at the end of each section. In \hyperlink{section.4}{Section 4}, we summarize our results.

\section{Games with Finite-Size Groups} \label{section: finitegames}
In this section, we study transferable utility games with a continuum of agents who form small groups in order to share group surpluses. Group sizes are bounded by natural numbers. We use a natural number $N\ge 2$ to denote the upper bound on group sizes and a natural number $N'\le N$ to denote the lower bound on group sizes.

This section is organized as follows.  In \hyperlink{subsection.2.1}{Section 2.1}, we describe the model. In \hyperlink{subsection.2.2}{Section 2.2}, we state our results. In \hyperlink{subsection.2.3}{Section 2.3}, we prove our results. In \hyperlink{subsection.2.4}{Section 2.4}, we study four examples. In \hyperlink{subsection.2.5}{Section 2.5}, we discuss three applications of our results. Lastly, in \hyperlink{subsection.2.6}{Section 2.6}, we review the literature.

\subsection{Model}
We study a cooperative (transferable utility) game denoted by the tuple $((I,\mu),s, N', N)$. Here, the natural numbers $N'$ and $N$ correspond to the lower and the upper bound on group sizes.

\subsubsection{Type Space}
The {\it type space} of players is summarized by a measure space $(I,\mu)$. In particular, the set $I$ is a compact
metric space\footnote{\label{footnote: cptassumption}The compactness of the type space can be relaxed with no essential changes in the rest of the paper. See \hyperref[subsection: remarks]{Section \ref{subsection: remarks}} for more details.} representing a set of players' types. The measure $\mu\in \mathcal{M}_+(I)$ is a non-negative  finite Borel measure on $I$ representing the distribution of players' types. 

There are two canonical examples of the type space. Firstly, $I$ is a finite set and $\mu$ is an $|I|$-dimensional real vector with positive entries. In this finite type case, the $i$-th coordinate of $\mu$, $\mu(i)>0$, denotes the mass of type $i$ players in this game. We will use the finite type case to explain some definitions in the following subsections. Secondly, $I$ is the unit interval $[0,1]$ and $\mu$ is some probability measure on $I$ with or without atoms. 


\subsubsection{Groups} \label{subsection: groups1}
Groups are the units in which players interact with each other. Since we only distinguish players according to their types, we can only distinguish groups according to group types. To be simple, we abuse the language by calling group types groups.

Formally, for any permitted group size $n\in\mathbb{N}$ such that $N'\le n\le N$, {\it the set of $n$-person groups} $\mathcal{G}_n$ consists of all multisets of cardinality $n$ with elements taken from $I$. That is,
$$\mathcal{G}_n=\left\{G:I\rightarrow \mathbb{N} \left\vert \sum_{i\in I}G(i)=n\right. \right\}\footnote{\label{footnote:infinitesum} Implicitly, for any function $G\in\mathcal{G}_n$, $G$ has nonzero values at most finitely many points in $I$. That is, the support of G, $\supp(G)=\{i\in I: G(i)\neq 0\}$, is a finite set. The summation $\sum_{i\in I} G(i)$ is thus defined as $\sum_{i\in \supp(G)} G(i)$.}$$
In any $n$-person group $G\in \mathcal{G}_n$, there are $G(i)$ type $i$ players, for any type $i\in I$. 

{\it The set of groups} $\mathcal{G}$ consists of all groups of permitted sizes. Therefore,
$$\mathcal{G}=\bigcup_{n=N'}^N \mathcal{G}_n$$
Naturally, a {\it group} is an element in $\mathcal{G}$. All groups in this section consist of finitely many types of players. Groups consisting of infinitely many types of players will be discussed in \hyperlink{section.3}{Section 3}.

Next, we identify $n$-person groups by the equivalence classes on the product space $I^n$. For any natural number $n\ge 2$, we define an equivalence relation $\sim_n$ on $I^n$: for any type lists $(i_1,...,i_n), (j_1,...,j_n)\in I^n$,
$$(i_1,...,i_n)\sim_n (j_1,...,j_n)\Longleftrightarrow j_k=i_{\sigma(k)}~\text{for~all~}1\le k\le n\text{,~for~some~permutation~}\sigma\in S_n$$
That is, two type lists $(i_1,...,i_n), (j_1,...,j_n)$ in $I^n$ are equivalent if they are the same up to some index permutation. It is easy to verify there is a bijection  between the set of $n$-person groups $\mathcal{G}_n$ and the set of equivalence classes $I^n/\sim_n$\footnote{The bijective map is defined by $T: I^n/\sim_n\rightarrow \mathcal{G}_n$ such that $T([i_1,...,i_n])=G$, where $G: I\rightarrow \mathbb{N}$ is defined by $G(i)=|\{1\le k\le n: i_k=i\}|$ for all $i\in I$. }.  i.e. $\mathcal{G}_n \simeq I^n/\sim_n$. Therefore, we write an $n$-person group $G\in \mathcal{G}_n$ as
$$G=[i_1,...,i_n]$$
by listing all its members' types with repetitions. By identifying $n$-person groups by equivalence classes $I^n/\sim_n$, we know $\mathcal{G}_n$ is metrizable under the quotient topology.\footnote{See  \hyperref[appendix:equivalenceclass]{Appendix \ref{appendix:equivalenceclass}} for a proof.}.

In the finite type case, our model can be described in the language of hypergraph theory: the type set $I$ corresponds to vertices of a hypergraph, and the set of $n$-person groups $\mathcal{G}_n$ corresponds to $n$-uniform hyperedges in the hypergraph. It is well known that the number of $n$-person groups, or $n$-uniform hyperedges, is given by,
\begin{equation}
|\mathcal{G}_n|=\multiset{|I|}{n} =  {|I|+n-1 \choose n}=\frac{(|I|+n-1)!}{n!(|I|-1)!} 
\end{equation}
In particular, for any fixed group size $n$, there are $\Theta(I^n)$ types of $n$-person groups.

\subsubsection{Surplus Function}
A surplus function specifies the total amount of surplus group members can share. Formally, a {\it surplus function} $s$ is a non-negative valued function on the set of groups $\mathcal{G}$. Its restriction on $\mathcal{G}_n$, $s_n=s|_{\mathcal{G}_n}$, is a surplus function on $n$-person groups. We impose the following assumption on the surplus function: 
\begin{enumerate}
	\item[$(A1)$]\label{A1} for any permitted group size $N'\le n\le N$, $s_n$ is continuous in $\mathcal{G}_n$\footnote{\label{footnote: ctsassumption}The continuity assumption can be replaced by an upper semi-continuity assumption with no essential changes on the rest of this paper. See \hyperref[subsection: remarks]{Section \ref{subsection: remarks}} for more details.}. i.e. for any $[i_1^k,...,i_n^k]\rightarrow [i_1,...,i_n]$ in $\mathcal{G}_n$, 
	$$\lim_{k\rightarrow \infty} s_n([i_1^k,...,i_n^k])= s_n([i_1,...,i_n)])$$
\end{enumerate}
\hyperref[A1]{Assumption (A1)} assumes, for any permitted group size $n$, the surplus function on $n$-person groups is continuous. 
In the finite type case, this assumption impose no restriction on the surplus function.

In addition, we remark that we assume no relation between surplus functions on different group sizes. In particular, the surplus function need not be super-additive: the departure of an agent from a group might increase the surplus of the remaining group members. The non super-additivity helps us to analyze examples such as exchange economies with consumption externalities\footnote{An example is given in \hyperlink{subsubsection.2.4.4}{Section 2.4.4}.}. 



\subsubsection{Assignments} \label{subsection: assignments1}

An assignment is a partition of the continuum of players into groups of permitted sizes. 

Rather than studying the partition directly, we define an assignment by a statistical representation of the partition, in which the mass of each group in the partition is specified. By using this statistical representation, we simplify the classical definition of assignment in the literature\footnote{In \cite{kaneko1986core} and \cite{KanekoWooders96}, a partition $p$ of the player space $I$ is defined to be measure-consistent if there is a partition of $I$ into $N$ measurable sets $I_1,...,I_N$ and each set $I_{n}$ has a partition, consisting of measurable subsets $\{I_{n1},...,I_{nn}\}$, with the following property: there are measure preserving isomorphisms $\psi_{n1},...,\psi_{nn}$ from $I_{n1}$ to $I_{n1},...,I_{nn}$, respectively, such that $\psi_{n1}$ is the identity map and $\{\psi_{n1},...,\psi_{nn}\}\in p$ for all $i\in I_{k1}$.} and also obtain some measure structure in the definition.

To describe this statistical representation, we first need to explore the structure of $\mathcal{G}_n$, the set of $n$-person groups. In particular, we will define a collection of partitions of $\mathcal{G}_n$:  for every measurable subset $A\subset I$ and every natural number
 $0\le k\le n$, the set $\mathcal{G}_n(A,k)$ is defined to be a set consisting of all $n$-person groups in which there are $k$ agents whose types are in the set A. That is,
$$\mathcal{G}_n(A,k)=\left\{G\in\mathcal{G}_n: \sum_{i\in A} G(i)=k\right\}\footnote{Similar to  \hyperref[footnote:infinitesum]{Footnote \ref{footnote:infinitesum}}, $\sum_{i\in A} G(i)$ is defined  to be $\sum_{i\in A\cap \supp(G)} G(i)$.}$$
It is routine to check that, for every measurable $A\subset I$, $\{\mathcal{G}_n(A,k):0\le k\le n\}$ is a partition of $\mathcal{G}_n$. That is, the collection of partitions we defined is $\{\{\mathcal{G}_n(A,k):0\le k\le n\}: \text{~measurable~} A\subset I  \}$. Moreover, $\mathcal{G}_n(A,k)$ is a measurable set in $\mathcal{G}_n=I^n/\sim_n$.\footnote{See  \hyperref[appendix: Gnkmeasurable]{Appendix \ref{appendix: Gnkmeasurable}} for a proof.}.

Now, we define assignments. An {\it assignment} is a tuple $\tau=(\tau_{N'},...,\tau_N)$, where $\tau_n$ is a non-negative measure on $\mathcal{G}_n$ satisfying the following consistency condition: for any measurable $A\subset I$,
\begin{equation}\label{eqn: consistency}
\sum_{n=N'}^N \sum_{k=0}^n k\tau_n (\mathcal{G}_n(A,k))=\mu(A)
\end{equation}
In this equation, $\tau_n (\mathcal{G}_n(A,k))$ is the total mass of all $n$-person groups containing exactly $k$ players whose types are in $A$. Thus, $k\tau_n (\mathcal{G}_n(A,k))$ is the total mass of players, whose types are in $A$, that are assigned to some $n$-person groups containing exactly $k$ players whose types are in set $A$. Summing over $k$, $\sum_{k=0}^n k\tau_n (\mathcal{G}_n(A,k))$ is the total mass of players, whose types are in $A$, that are assigned to some $n$-person groups. Therefore, $\sum_{n=N'}^N \sum_{k=0}^n k\tau_n (\mathcal{G}_n(A,k))$ is the total mass of players, whose types are in $A$, that are assigned by the assignment. Since all players need to be assigned by the assignment, this total mass is equal to $\mu(A)$, which is the total mass of players whose types are in set $A$. Consequently, the consistency condition means all players are assigned to some group in the partition specified by the assignment.

In the finite type case, an assignment is a vector $\tau$ with $|\mathcal{G}|$ non-negative real entries. For any group $G\in \mathcal{G}$, $\tau(G)$ is the mass of group $G$ in the partition represented by $\tau$. In this case, the consistency condition is: for any type $i\in I$,
$$\sum_{k=0}^N \sum_{G\in \mathcal{G}(i,k)}k\tau (G)=\mu(i)$$
where $\mathcal{G}(i,k)$ is the set of groups containing exactly $k$ type $i$ players.

We finish this subsubsection by defining the following notions.

Firstly, we use a set $\Tau$ to denote {\it the set of assignments}. That is,
$$\Tau=\left\{\tau=(\tau_{n})_{N'\le n\le N}: \tau_{n}\in \mathcal{M}_+(\mathcal{G}_n), \forall n, \sum_{n=N'}^N \sum_{k=0}^n k\tau_n (\mathcal{G}_n(A,k))=\mu(A), \forall A\subset I  \right\}$$

Secondly, given any group  $G=[i_1,...,i_n]$, we say {\it a type $i$ agent is in group $G$}, written as $i\in G$, if $i=i_k$ for some $1\le k\le n$. In this case, a type $i$ agent in group $G$ has a {\it group partner} (or a trade partner) of type $i_m$, for all $m\neq k$.

Thirdly, given any assignment $\tau\in \Tau$, we say a group $G\in\mathcal{G}$ is {\it formed}, or is a formed group, under assignment $\tau$ if $G\in \supp(\tau)=\cup_{n=N'}^N \supp(\tau_n)$. In the finite type case, a group is a formed group under assignment $\tau$ if and only if $\tau(G)>0$. i.e. there is a positive mass of group $G$ in the partition represented by assignment $\tau$.


\subsubsection{Stability}
In the game, players form small groups in order to share group surplus. We say an assignment is stable if there is a way to split group surpluses such that no group of agents can jointly do better by forming a new group. 

Formally, an assignment $\tau\in \Tau$ is {\it stable} if there is an {\it imputation} $u\in L^1(I,\mu)$ satisfying the following two conditions:
\begin{enumerate}
	\item Feasibility: $\sum_{i\in G}u(i)\le s(G)$\footnote{In this paper, $\sum_{i\in G} f(i)=\sum_{i\in I} G(i)f(i)$. This definition is well-defined since $G=0$ at all but finitely many points in its domain. See \hyperref[footnote:infinitesum]{Footnote \ref{footnote:infinitesum}} for a definition of summation over $G$.}, for all formed groups $G\in \supp(\tau)$
	\item No-blocking: $\sum_{i\in G}u(i)\ge s(G)$, for all groups $G\in \mathcal{G}$
\end{enumerate}
In this definition, an imputation specifies a payoff for each type of players. In particular, the same type of player has the same payoff, as otherwise the player with a lower payoff has an incentive to form a group with the group partners of the player with a higher payoff such that all members in the new group have higher payoffs. The feasibility condition ensures that players have small enough payoffs such that they can achieve these payoffs by sharing  group surpluses. The no-blocking condition ensures that players have high enough payoffs such that no group of players can jointly do better by forming a new group.

\subsection{Results}\label{subsection: result1}
In this section, we state our results. The proofs of these results will be given in \hyperref[sect: finitegameproof]{Section \ref{sect: finitegameproof}}. 

The first main theorem in this paper is the existence of a stable assignment when group sizes are finitely bounded.

\begin{theorem}\label{theorem:finiteexistence}
	For any game $((I,\mu),s, N', N)$ satisfying \hyperref[A1]{Assumption (A1)}, there is a stable assignment and its associated imputation is continuous.
\end{theorem}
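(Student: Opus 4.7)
The plan is to cast the existence question as a welfare-maximization problem, reformulate it as a symmetric multi-marginal optimal transport problem, and use Kantorovich-Koopmans duality to extract a continuous imputation.

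First, I would define the primal program of maximizing the total social surplus $W(\tau) = \sum_{n=N'}^{N} \int_{\mathcal{G}_n} s_n \, d\tau_n$ over the set $\Tau$ of consistent assignments. The set $\Tau$ is nonempty, weak-$*$ closed, and tight (the total mass of each $\tau_n$ is bounded by $\mu(I)/N'$, and $I^n$ is compact), hence weak-$*$ compact by Prokhorov's theorem. Combined with Assumption (A1) and the boundedness of $s_n$ on the compact sets $\mathcal{G}_n$, this makes $W$ upper semicontinuous and yields an optimizer $\tau^* \in \Tau$.

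Second, I would reformulate the primal as a symmetric multi-marginal transport problem. Via the identification $\mathcal{G}_n \simeq I^n/\sim_n$, each $\tau_n$ lifts to a symmetric Borel measure on $I^n$. To handle all admissible group sizes uniformly, I would invoke the $N!$-replication trick: every $n$-person group $[i_1,\ldots,i_n]$ embeds into $I^{N!}/\sim_{N!}$ by repeating each coordinate $N!/n$ times (an integer because $n \mid N!$), which aggregates $(\tau_n)_{N' \le n \le N}$ into a single symmetric measure $\pi$ on $I^{N!}$. The consistency condition (\ref{eqn: consistency}) becomes exactly the marginal constraint that every one-dimensional marginal of $\pi$ equals a fixed scalar multiple of $\mu$, and the surplus functional lifts to a continuous cost $c \colon I^{N!} \to \mathbb{R}_+$.

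Third, I would apply Kantorovich-Koopmans duality. Since $I^{N!}$ is compact and $c$ is continuous, strong duality holds and the dual attains its optimum at some $u \in C(I)$ satisfying $\sum_{i \in G} u(i) \ge s(G)$ for every $G \in \mathcal{G}$; this is exactly the no-blocking condition. Complementary slackness forces equality on $\supp(\pi)$, which corresponds to $\supp(\tau^*)$, giving feasibility on formed groups. Continuity of $u$ follows from the standard $c$-conjugate construction $u(i) = \sup_{G \ni i}\{s(G) - \sum_{j \in G,\, j \ne i} u(j)\}$, which is continuous on the compact metric space $I$ by uniform continuity of $s_n$ on $\mathcal{G}_n$.

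The main obstacle will be executing the $N!$-replication step cleanly: one must verify that symmetrization preserves the consistency condition across the different group sizes simultaneously, and that a symmetric, continuous dual optimizer exists rather than merely a measurable or $L^\infty$ one (as the general version of Kantorovich duality would deliver). Compactness of $I$, continuity of $s$, and the $c$-conjugate representation of the dual optimizer should make this upgrade from measurable to continuous go through.
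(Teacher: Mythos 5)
Your proposal follows essentially the same route as the paper: reformulate welfare maximization as a symmetric $N!$-marginal transport problem, apply the multi-marginal duality theorem, upgrade the dual optimizer to a continuous function via $c$-conjugation on each $\mathcal{G}_n$, and use complementary slackness to obtain feasibility on formed groups. The only slip is that the lifted cost on $I^{N!}$ is merely upper semi-continuous (the extension by zero off the replication sets $K_n$ destroys continuity), but this is harmless since the duality theorem applies to u.s.c.\ costs and, as you anticipate, continuity of the imputation is recovered from the continuity of each $s_n$ separately.
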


Following the long-term wisdom in works such as \cite{koopmans1957assignment}, \cite{ShapleyShubik71}, \cite{gretsky1992nonatomic}, we prove the existence theorem by establishing a duality relation. In our setting, the duality relation is that the maximum social welfare achieved by forming small groups is equal to the minimum social welfare such that no blocking coalition exists. The formal statement of the duality relation is given by the following theorem.

\begin{theorem} \label{theorem:finiteduality}
	For any game $((I,\mu),s, N', N)$ satisfying \hyperref[A1]{Assumption (A1)}, we  have
	\begin{equation}\label{eqn: duality}
	\sup_{\tau\in \Tau} \sum_{n=N'}^N \int_{\mathcal{G}_n} s_n d\tau_n=\inf_{u\in \mathcal{U}} \int_I u d\mu
	\end{equation}
	where $\mathcal{U}=\{u\in L^1(I,\mu): \sum_{i\in G} u(i)\ge s(G), \forall G\in\mathcal{G}\}$ and the inifimum can be achieved by a continuous function.
\end{theorem}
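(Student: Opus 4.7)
The plan is to recognize this as a Kantorovich-type duality for a coupled multi-marginal optimal transport problem, resolved via minimax. First, I would establish the \emph{consistency identity}: for every $u \in C(I)$ and every $\tau \in \Tau$,
$$\sum_{n=N'}^N \int_{\mathcal{G}_n} \sum_{i \in G} u(i)\, d\tau_n(G) = \int_I u\, d\mu.$$
This is equation (\ref{eqn: consistency}) applied to $u = \mathbb{1}_A$, extended to continuous $u$ by approximation with simple functions. Weak duality ($\mathrm{LHS} \le \mathrm{RHS}$) is then immediate: for any $u \in \mathcal{U}$ and $\tau \in \Tau$, $s(G) \le \sum_{i\in G} u(i)$ pointwise, so integrating against $\tau_n$ and summing yields $\sum_n \int s_n\, d\tau_n \le \int u\, d\mu$.

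For the reverse inequality, I would lift each $\tau_n$ to a symmetric non-negative measure $\tilde\tau_n$ on $I^n$ via the quotient map $q_n: I^n \to \mathcal{G}_n$, by $S_n$-averaging. Under this identification, an assignment becomes a tuple of symmetric measures whose (common) marginals $m_n$ satisfy the single constraint $\sum_n n\,m_n = \mu$, and the objective becomes $\sum_n \int_{I^n} (s_n \circ q_n)\, d\tilde\tau_n$---a coupled family of symmetric multi-marginal OT problems with free individual marginals but a shared linear constraint. Introducing a Lagrange multiplier $u \in C(I)$ for this shared constraint, the Lagrangian reads
$$\mathcal{L}\bigl((\tilde\tau_n), u\bigr) = \int u\, d\mu + \sum_n \int_{I^n}\!\Bigl(s_n\circ q_n - \sum_{k=1}^n u(x_k)\Bigr)\, d\tilde\tau_n.$$
Supremizing over $\tilde\tau_n \ge 0$ yields $+\infty$ unless $\sum_{i\in G} u(i) \ge s(G)$ for every $G \in \mathcal{G}$ (i.e.\ $u \in \mathcal{U}$), in which case the sup equals $\int u\, d\mu$; hence $\inf_u \sup_\tau \mathcal{L} = \mathrm{RHS}$, while $\sup_\tau \inf_u \mathcal{L} = \mathrm{LHS}$.

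The core work is to exchange the infimum and supremum. I would invoke a minimax theorem (Sion's, or equivalently Fenchel--Rockafellar duality in the $(C(\mathcal{G}), \mathcal{M}(\mathcal{G}))$ pairing), using: (i) weak-$*$ compactness of $\Tau$, which follows from Banach--Alaoglu and the a priori bound $\tau_n(\mathcal{G}_n) \le \mu(I)/N'$; (ii) weak-$*$ continuity of $\tau \mapsto \sum_n \int s_n\, d\tau_n$ via Assumption (A1); and (iii) linearity of $\mathcal{L}$ in each argument. For continuity of the dual optimizer, I would replace any $L^1$ minimizer $u^\star$ by the $s$-transform $\overline u(i) := \sup\bigl\{\tfrac{1}{G(i)}\bigl(s(G) - \sum_{j \ne i} G(j)\, u^\star(j)\bigr) : G \in \mathcal{G},\, i \in G\bigr\}$; standard Berge-type maximum-theorem arguments---compactness of each $\mathcal{G}_n$, continuity of $s$, and continuity of $u^\star$ on the support---give $\overline u \in C(I)$ with $\overline u \in \mathcal{U}$ and $\int \overline u\, d\mu \le \int u^\star\, d\mu$, after suitable iteration to ensure the constraint closes back on $\overline u$.

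The main obstacle is the strong-duality step: the shared-marginal constraint $\sum_n n\,m_n = \mu$ is not a fixed-marginal constraint, so the standard multi-marginal Kantorovich duality does not apply off the shelf. Overcoming this requires either the Fenchel--Rockafellar argument sketched above (which in turn needs the weak-$*$ closedness of the constraint set on the primal side and a qualification condition on the dual side), or an explicit reduction to a standard multi-marginal OT problem on an enlarged product space such as $I^{N!}$ via the replication trick highlighted in the introduction, which converts the variable-size groups into a single symmetric $N!$-fold transport plan with fixed marginal $\mu$.
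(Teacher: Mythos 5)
Your overall strategy is viable and genuinely different from the paper's: the paper does not set up a Lagrangian/minimax argument on the coupled problem with free individual marginals, but instead reduces everything to a \emph{single} fixed-marginal multi-marginal transport problem on $I^{N!}$ (Proposition \ref{prop:linearprogrammingtotransport}: embed each $n$-group as an $N!/n$-fold replicated ``fractional group'', define $\hat{s}=\max_n \frac{N}{n}\hat{s}_n$, and pull back measures on $I^n/\sim_n$ to symmetric measures on $I^{N!}$), then invokes Kellerer's duality off the shelf and shows via Lemma \ref{lemma: Uhat} that the symmetrized dual feasible set is exactly $\mathcal{U}$. Your consistency identity and weak duality are correct, and your description of the primal as a family of symmetric measures with the shared constraint $\sum_n n\,m_n=\mu$ matches Lemma \ref{lemma:gammaprop}. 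However, the step you yourself call ``the core work'' is not actually carried out, and the first tool you name for it does not apply as stated: Sion's theorem needs compactness of one of the two feasible sets for the saddle function, but once you Lagrangian-relax the constraint $\sum_n n\,m_n=\mu$ the primal variable ranges over the \emph{unbounded} cone $\{\tilde\tau_n\ge 0\}$ (it must, or the inner supremum cannot produce the indicator of $\mathcal{U}$), while the a priori bound $\tau_n(\mathcal{G}_n)\le\mu(I)/N'$ you cite in (i) is a bound on the \emph{constrained} set $\Tau$ and is unavailable after relaxation; $\mathcal{U}$ is likewise noncompact. The Fenchel--Rockafellar version in the pairing $\bigl(C(K),\mathcal{M}(K)\bigr)$ with $K=\bigsqcup_n I^n$ does work here (the qualification point is a large constant function, using compactness of $I$ and boundedness of $s$), and would give a more direct proof than the paper's replication trick --- but you only gesture at it, and your fallback option is precisely the paper's own construction. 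As written, strong duality is asserted, not proved.

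The continuity claim for the dual optimizer is also too quick. Your one-shot $s$-transform $\overline u(i)=\sup\{\frac{1}{G(i)}(s(G)-\sum_{j\neq i}G(j)u^\star(j))\}$ is applied to a minimizer $u^\star$ that is only known to lie in $L^1(I,\mu)$, so ``continuity of $u^\star$ on the support'' is not available, it is not clear that $\overline u$ is finite and continuous, nor that $\overline u\in\mathcal{U}$ (the transform can break the very constraints it is meant to enforce --- this is exactly what your phrase ``after suitable iteration'' is hiding). The paper's argument (Appendix \ref{appendix: ctsmin}) handles this by working one group size at a time: it runs a full iterated conjugation $w_1,\dots,w_n$ against the continuous, symmetric $\tilde s_n=s\circ Q_n$ until the tuple is $\tilde s_n$-conjugate (hence each $w_k$ is continuous, as a uniformly equicontinuous supremum over the compact $I^{n-1}$), sets $v_n=\frac{1}{n}(w_1+\dots+w_n)\le u^\star$ with $v_n\in\mathcal{U}_n$, and takes $v=\max(v_{N'},\dots,v_N)$; crucially it uses complementary slackness ($\sum_{i\in G}u^\star(i)=s(G)$ for $\tau^*_n$-almost all $G$, from the equivalence lemma) to show $\int v_n\,d\mu_n=\int u^\star\,d\mu_n$ on the marginal $\mu_n$ of players in $n$-groups, so that the optimal value is preserved. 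Your sketch omits both the per-size decomposition and the slackness argument, and without them the inequality $\int\overline u\,d\mu\le\int u^\star\,d\mu$ together with $\overline u\in\mathcal{U}$ is not established.
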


In particular, this duality theorem suggests that, in a stable state, the payoff of a type $i$ player is equal to his marginal contribution to the maximum social welfare. See \hyperref[subsection: intepretation1]{Section \ref{subsection: intepretation1}} for more details.

%

Our key observation in the proof of these two theorems is that any assignment can be identified by a {\it symmetric} transport plan in some replicated agent space. We will discuss this identification trick in more detail in \hyperref[subsubsection:uniquegroupsize]{Section \ref{subsubsection:uniquegroupsize}} and \hyperref[subsubsection: pullback]{Section \ref{subsubsection: pullback}}. As a consequence of the observation, the social welfare maximization problem can be reformulated as a symmetric transport problem. 

\begin{prop}\label{prop:linearprogrammingtotransport}
	For any game $((I,\mu),s, N', N)$ satisfying \hyperref[A1]{Assumption (A1)}, there is a symmetric\footnote{S is symmetric if, for any permutation $\sigma\in S_{N!}$, any $(i_1,...,i_{N!})\in I^{N!}$,  $S(i_1,...,i_{N!})=S(i_{\sigma(1)},...,i_{\sigma(N!)})$.} upper semi-continuous function $S:I^{N!}\rightarrow \mathbb{R}$ such that
	\begin{equation}\label{eqn: lptotransport}
	\sup_{\tau\in \Tau} \sum_{n=N'}^N \int_{\mathcal{G}_n} s_n d\tau_n=\sup_{\hat{\gamma}\in \hat{\Gamma}_{sym}} \int_{I^{N!}} \frac{S}{N} d\hat{\gamma}
	\end{equation}
	where $\hat{\Gamma}_{sym}$ is the set of symmetric measures on $I^{N!}$ such that all marginals are $\mu$. i.e. 
	$$\hat{\Gamma}_{sym}=\{\hat{\gamma}\in \mathcal{M}_+(I^{N!}): \hat{\gamma}\text{~is~symmetric}, \hat{\gamma}(A\times I\times...\times I)=\mu(A),\forall \text{measurable~}A\subset I \}$$
\end{prop}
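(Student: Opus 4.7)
The plan is to construct a canonical symmetric upper semi-continuous $S:I^{N!}\to\mathbb{R}$ directly from $s$ via an optimal-partition construction, and then verify \eqref{eqn: lptotransport} by explicit transformations in both directions. Let $\Pi$ denote the finite set of set-partitions $\pi=\{B_1,\dots,B_{|\pi|}\}$ of $\{1,\dots,N!\}$ whose blocks all have sizes $|B_j|\in[N',N]$; $\Pi$ is nonempty because $n\mid N!$ for every $n\in[N',N]$. For each $\pi\in\Pi$ I would define
$$f_\pi(i_1,\dots,i_{N!}) \;=\; \sum_{j=1}^{|\pi|} s_{|B_j|}\bigl([i_l:l\in B_j]\bigr),$$
which is continuous by \hyperref[A1]{Assumption (A1)} combined with continuity of the quotients $I^n\to\mathcal{G}_n$, and set $S=(N/N!)\max_{\pi\in\Pi}f_\pi$. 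Any permutation of coordinates acts on $\Pi$ by relabeling blocks and preserves the maximum, so $S$ is symmetric; as a finite maximum of continuous functions it is continuous and hence USC.

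For the $\le$ direction in \eqref{eqn: lptotransport}, the plan is to pass from an assignment $\tau$ to its \emph{lift}. Given $G=[i_1,\dots,i_n]\in\mathcal{G}_n$, let $\tilde G\in\mathcal{G}_{N!}$ be obtained by repeating each member $N!/n$ times, and let $\delta_{\tilde G}^{\mathrm{sym}}$ be the uniform symmetric probability measure on the orderings of $\tilde G$ inside $I^{N!}$. Put $\hat\gamma_\tau=\sum_{n=N'}^N n\int_{\mathcal{G}_n}\delta_{\tilde G}^{\mathrm{sym}}\,d\tau_n(G)$. Each marginal of $\delta_{\tilde G}^{\mathrm{sym}}$ equals $\tfrac{1}{n}\sum_i G(i)\delta_i$, so the consistency condition \eqref{eqn: consistency} translates precisely into all marginals of $\hat\gamma_\tau$ equaling $\mu$; hence $\hat\gamma_\tau\in\hat\Gamma_{sym}$. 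Because the natural partition $\pi_G\in\Pi$ that regroups the $N!/n$ replicas back into identical copies of $G$ yields $f_{\pi_G}(\tilde G)=(N!/n)\,s_n(G)$, we get $S(\tilde G)\ge(N/n)\,s_n(G)$, and therefore $\int(S/N)\,d\hat\gamma_\tau\ge\sum_n\int s_n\,d\tau_n$.

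For the reverse inequality, the plan is to disintegrate an arbitrary $\hat\gamma\in\hat\Gamma_{sym}$ using a Borel argmax selection $\pi^*:I^{N!}\to\Pi$, which exists because $\Pi$ is finite (break ties by any fixed total order). For each $\pi$ and each block $B\in\pi$, the projection $P_B^\pi:I^{N!}\to\mathcal{G}_{|B|}$ sending $(i_1,\dots,i_{N!})$ to $[i_l:l\in B]$ is continuous, so
$$\tau_n \;=\; \frac{1}{N!}\sum_{\pi\in\Pi}\sum_{\substack{B\in\pi\\|B|=n}} (P_B^\pi)_*\bigl(\hat\gamma|_{\{\pi^*=\pi\}}\bigr)$$
is a well-defined nonnegative Borel measure on $\mathcal{G}_n$. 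Evaluating $\sum_n\int G(A)\,d\tau_n(G)$ and using the identity $\sum_{B\in\pi}|\{l\in B:x_l\in A\}|=|\{l\in[N!]:x_l\in A\}|$ together with each marginal of $\hat\gamma$ being $\mu$ yields the consistency condition, so $\tau\in\Tau$. The same bookkeeping gives
$$\sum_{n=N'}^N\int s_n\,d\tau_n \;=\; \frac{1}{N!}\int f_{\pi^*(x)}(x)\,d\hat\gamma(x) \;=\; \frac{1}{N}\int_{I^{N!}} S\,d\hat\gamma.$$

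The chief obstacle is the design of $S$: it has to be USC on all of $I^{N!}$ even though lifts of assignments occupy only a thin subset, and the normalization must produce equality in both directions simultaneously. The constant $N/N!$ is forced by the marginals (the factor $1/N!$ compensates for the $N!$ identical coordinates of $\hat\gamma$ carrying the same mass $\mu$) and by the $n$-multiplier in $\hat\gamma_\tau$ (absorbed into the $N$); the $\max$ over $\Pi$ is what permits a general symmetric $\hat\gamma$ to be reassembled into a genuine assignment of groups of admissible sizes. The remaining ingredients---continuity of $f_\pi$, symmetry of the maximum, and Borel selection of $\pi^*$---are routine once $\Pi$ is recognized as finite.
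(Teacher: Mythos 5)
Your proof is correct, but it builds a genuinely different $S$ from the paper's and organizes the two inequalities differently. The paper's $S$ comes from the fractional-group machinery: it vanishes off the thin closed sets $K_n$ of exact $N!/n$-fold replications and equals $\max_n\frac{N}{n}s_n$ there, which makes $S$ only upper semi-continuous (requiring a separate lemma), forces the combinatorial weights $c_n$ and a pullback-measure change of variables to pass between measures on $\mathcal{G}_n$ and symmetric measures on $I^{N!}$, and makes the reverse inequality work only when started from a \emph{maximizer} $\hat{\gamma}$ (so that its support avoids $R_0$; the paper must forward-reference the existence of a maximizer at that point). Your $S=(N/N!)\max_{\pi}f_\pi$ is instead the optimal-repartition value of the $N!$ coordinates into blocks of admissible sizes: it is meaningful at every point of $I^{N!}$, is continuous as a finite maximum of continuous functions (so upper semi-continuity is free), and precisely because it is defined everywhere, your Borel argmax disintegration converts an \emph{arbitrary} $\hat{\gamma}\in\hat{\Gamma}_{sym}$ into an assignment of equal value; the lift direction then only needs to produce an inequality, and no appeal to the existence of optimizers is required. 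Your bookkeeping checks out in both directions: the marginal of $\delta^{\mathrm{sym}}_{\tilde G}$ is $\frac{1}{n}\sum_i G(i)\delta_i$, so the factor $n$ in $\hat{\gamma}_\tau$ recovers the consistency condition, and the identity $\sum_{B\in\pi}|\{l\in B:x_l\in A\}|=|\{l:x_l\in A\}|$ combined with all marginals being $\mu$ gives consistency of the reassembled $\tau$. Note that your $S$ genuinely differs from the paper's in value (on the paper's example point $[1,1,1,2,2,2]$ the paper's function gives $3$ while yours gives $4.5$, since you allow regrouping the replicas into non-identical blocks), which is harmless for this proposition but means the downstream lemmas that quote the specific $\hat{s}$, $K_n$, $R_n$ structure --- in particular the identity $\hat{\mathcal{U}}=\mathcal{U}$ --- would have to be re-verified with your $S$; they do go through, since $(N-1)!\,S=\max_\pi f_\pi$ and block-wise summation of the constraints in $\mathcal{U}$ dominates every $f_\pi$, but that check is not automatic and is worth recording if you intend your construction to replace the paper's throughout.
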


This proposition helps us in two ways. Conceptually, it helps us to prove the existence of a stable assignment. In particular, it helps us to prove the duality theorem (\hyperref[theorem:finiteduality]{Theorem \ref{theorem:finiteduality}}), which further implies the existence of a stable assignment in the game (\hyperref[theorem:finiteexistence]{Theorem \ref{theorem:finiteexistence}}). Computationally, by generating symmetries to the problem, it helps us to reduce the number of unknowns for finding stable assignments significantly. In particular, this proposition enables the possibility of applying a dimensional reduction technique developed in \cite{friesecke2018breaking}. We will discuss this computational improvement in more detail in a production problem context in \hyperref
[subsubsection: production]{Section \ref{subsubsection: production}}.

In addition, the reformulation has the potential to answer the following questions: 
\begin{enumerate}
	\item the uniqueness of stable assignments
	\item the uniqueness of imputations
	\item whether players of the same type have the same group partners
\end{enumerate}
These questions are related to the uniqueness and purification properties of multi-marginal transport problems. To my knowledge, no existing work can be applied directly beyond the two marginal case. We refer to \cite{pass2011structural} and \cite{pass2015multi} for some related results. 

\subsubsection{Remarks}\label{subsection: intepretation1}

The duality relation in \hyperref[theorem:finiteduality]{Theorem \ref{theorem:finiteduality}} suggests that the payoff of a type $i$ player in any stable assignment is equal to his marginal contribution to the maximum social welfare. For instance, in the finite type case, the maximum social welfare, as a function of type distribution, is defined by a function $\Pi:\mathbb{R}^{|I|}_{+}\rightarrow \mathbb{R}$, where
\begin{equation} \label{eqn: finitemax}
\Pi(\mu)=\sup_{\tau\in \Tau} \sum_{G\in\mathcal{G}} s(G)\tau(G)
\end{equation}

Since there are a continuum of players of each type, the maximum social surplus function $\Pi$ is a concave function on $\mathbb{R}_+^{|I|}$.\footnote{For any type distributions $\mu_1$ and $\mu_2$, we fix two arbitrary assignments $\tau_1$ and $\tau_2$ representing the partitions of players with a type distribution $\mu_1$ and $\mu_2$ respectively. Any convex combination $c\tau_1+(1-c)\tau_2$ is an assignment representing a partition of players with a type distribution $c\mu_1+(1-c)\mu_2$.} Therefore, the super-derivative of the maximum social surplus is well-defined. 

On the other hand, the duality theorem, \hyperref[theorem:finiteduality]{Theorem \ref{theorem:finiteduality}}, suggests,
\begin{equation}\label{eqn: finitemin}
\Pi(\mu)=\inf_{u\in \mathcal{U}} \sum_{i\in I} u(i)\mu(i)
\end{equation}
where $\mathcal{U}=\{ u\in \mathbb{R}_+^{|I|}: \sum_{i\in G} u(i) \ge s(G),\forall G\in \mathcal{G}  \}$. Therefore, the imputation $u$ associated with a stable assignment is equal to the super-derivative of the concave maximum social welfare function. Formally, by Danskin's theorem (Proposition B.25 in \cite{bertsekas1999nonlinear}), 
$$\partial\Pi(\mu)=\left\{u\in \mathcal{U}: \sum_{i\in I} u(i)\mu(i) = \Pi(\mu) \right\}$$
That is, the payoff of a type $i$ player is equal to his marginal contrition to the maximum social surplus, provided the maximum social surplus function is differentiable. More generally, the imputations provide separating hyperplanes for the set of feasible social surplus.

\begin{figure}[h]
	\centering
	\includegraphics[width=0.8\textwidth]{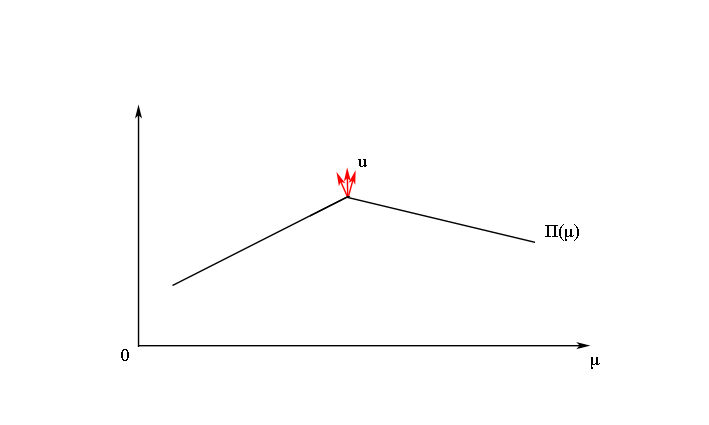}
	\caption{Maximum Social Welfare and The Imputation of a Stable Assignment}
	\label{fig: nondiff}
\end{figure}

By \hyperref[fig: nondiff]{Figure \ref{fig: nondiff}}, it is clear that, given a stable assignment, the corresponding imputation is unique if and only if the maximum social surplus is differentiable. Unfortunately, the differentiability requires stronger assumptions on the surplus function. An example of a game with a non-differentiable maximum social surplus function is given in \hyperref[subsubsection: nondiff]{Section \ref{subsubsection: nondiff}}. On the other hand, as the maximum social welfare function $\Pi$ is concave, $\Pi$ is differentiable almost everywhere. That is, in the finite type case, for almost all initial distributions of types, the payoff at the stable state is unique.

\subsubsection{Extensions} \label{subsection: remarks}

As remarked in \hyperref[footnote: cptassumption]{Footnote \ref{footnote: cptassumption}} and \hyperref[footnote: ctsassumption]{Footnote \ref{footnote: ctsassumption}}, our model and results in this section can be extended to the case where the type set is not compact and the surplus function is only upper semi-continuous. In this subsubsection,  we discuss two types of extensions.

Firstly, when $I$ is a (possibly non-compact) Polish space and the surplus function is Lipchitz continuous, i.e. every $s_n$ is Lipchitz continuous, 
with all definitions in this section remain the same, all results in this section hold, provided the surplus function satisfies Assumption $(A2^c)$: for any permitted group size $n$, there is a function $a_{n}\in C(I)$ such that 		$s_n([i_1,...,i_n])\le \sum_{k=1}^n a_{n}(i_k)$ for all $i_1,...,i_n\in I$. In particular, Assumption $(A2^c)$ is automatically satisfied when $s_n$ is bounded or $I$ is compact.

More generally, if $I$ is a (possibly non-compact) Polish space and the surplus function is upper semi-continuous, i.e. every $s_n$ is upper semi-continuous, we need to impose the following boundedness assumption on the surplus function:
\begin{itemize}
	\item [(A2)]\label{A2} for any permitted group size $n$, there is a function $a_{n}\in LSC(I)$ such that 	
	$s_n([i_1,...,i_n])\le \sum_{k=1}^n a_{n}(i_k)$ for all $i_1,...,i_n\in I$. 
\end{itemize} 
Again, Assumption $(A2)$ is automatically satisfied when $s_n$ is bounded or $I$ is compact. With Assumption $(A2)$ and the following changes on the definition of ``formed groups", all results in this section hold:
\begin{enumerate}
	\item in the definition of stable assignment, replace the term ``for all formed groups $G\in\supp(\tau)$" in the feasibility condition by the term ``for $\tau$-almost all $G\in\supp(\tau)$"\footnote{\label{footnote: redefformedgroups}Equivalently, we can redefine a notion for formed groups in order to keep the term ``for all formed groups" in the sentence. The definition is as follows: a measurable set $\mathcal{F}\subset \supp(\tau)$ is  a set of formed groups if $\mathcal{F}$ has a full $\tau$-measure. Formally, it means $\tau_n(\mathcal{F}\cap \mathcal{G}_n)=\tau_n(\mathcal{G}_n)$ for all permitted group size $n$. Then, in the definition of stable assignment, we just need to replace ``for all formed groups $G\in\supp(\tau)$" by ``for all formed groups $G\in \mathcal{F}$ where $\mathcal{F}$ is a set of formed groups". This subtlety about measure zero set comes from the measure theoretic language we choose to use.}
	\item \hyperref[theorem:finiteexistence]{Theorem \ref{theorem:finiteexistence}} holds if ``and its associated imputation is continuous" is removed.
	\item \hyperref[theorem:finiteduality]{Theorem \ref{theorem:finiteduality}} holds if ``by a continuous function" is removed.
	\item \hyperref[prop:linearprogrammingtotransport]{Proposition \ref{prop:linearprogrammingtotransport}} holds without any change.
\end{enumerate}


%

\subsection{Proof}\label{sect: finitegameproof}
Firstly, we prove \hyperref[prop:linearprogrammingtotransport]{Proposition \ref{prop:linearprogrammingtotransport}} which states the social welfare maximization problem can be reformulated as a symmetric transport problem. Then, we apply the duality theorem for multi-transport problem in \cite{Kellerer84} to prove the duality theorem \hyperref[theorem:finiteduality]{Theorem \ref{theorem:finiteduality}}. Lastly, we show the optimizers of the duality theorem can be attained and are equivalent to a stable assignment and its corresponding imputation.

To start with, we reformulate the social welfare maximization problem as a symmetric transport problem in two steps. 

In the first step, we transform a problem with multiple permitted group sizes up to $N$ to a problem with a unique permitted group size $N!$. We use the number $N!$ as it is a common multiple of all group sizes. The same argument works if we replace $N!$ by any common multiple of group sizes\footnote{This replacement will help the computation significantly. However, it makes the notations messier. So we stick with the choice $N!$ in this paper.}.

In the second step, we extend the domain of assignment from an unordered list of types to an ordered list of types. On the technical level, we get rid of equivalent classes by defining a pull-back measure carefully.

The proof is proceeded as follows. Firstly, we introduce these two steps formally in \hyperref[subsubsection:uniquegroupsize]{Section \ref{subsubsection:uniquegroupsize}} and \hyperref[subsubsection: pullback]{Section \ref{subsubsection: pullback}}. Next, we reformulate the social welfare maximization problem as a symmetric transport problem in \hyperref[subsubsection: maximizationastranport]{Section \ref{subsubsection: maximizationastranport}}. Lastly, we prove the existence of a stable assignment in \hyperref[subsubsection: existenceproof]{Section \ref{subsubsection: existenceproof}}.

\subsubsection{Unifying Group Sizes}\label{subsubsection:uniquegroupsize}

To unify the group sizes, we use fractional groups to identify groups of different sizes.

Formally, a fractional group is an element $\hat{G}=[i_1,...,i_{N!}]\in I^{N!}/\sim_{N!}$. Intuitively, a fractional group $\hat{G}=[i_1,...,i_{N!}]$ represents a set of players of total mass $N$ consisting of  $\frac{1}{(N-1)!}$ unit mass of type $i_n$ players for all permitted group size $n$. 

For any permitted group size $n$, an $n$-person group can be identified by some fractional group via replication. Therefore, the set of $n$-person groups corresponds to a collection of fractional groups. Formally, for any permitted group size $N'\le n\le N$, we define a subset $K_n\subset I^{N!}/\sim_{N!}$, where
\begin{equation}\label{eqn: Kn}
K_n=\left\{[i_1,...,i_{N!}]\in I^{N!}/\sim_{N!}:   |k\in \mathbb{N}: i_k=i| ~\text{is~divisible~by~}\frac{N!}{n}, \forall i\in I  \right\} 
\end{equation}
Intuitively, every fractional group in $K_n$ corresponds to an $N!/n$-fold replication of some $n$-person group. We define an identification map $P_n: K_n\rightarrow I^n/\sim_n$ by
$$P_n([\underbrace{i_1,...,i_1}_{N!/n \text{~many}},...,\underbrace{i_n,...,i_n}_{N!/n \text{~many}}])=[i_1,...,i_n]$$
It is easy to verify that the identification map $P_n$ is bijective. Therefore, $\mathcal{G}_n\simeq K_n$ and $\mathcal{G}_n=P_n(K_n)$.

In addition, we extend the domain of surplus function: for any $N'\le n\le N$, we define a function $\hat{s}_n:I^{N!}/\sim_{N!}\rightarrow \mathbb{R}$ by
$$\hat{s}_n(\hat{G})=\begin{cases}
s_n(P_n(\hat{G})), & \hat{G}\in K_n \\
0, & \hat{G}\notin K_n
\end{cases}$$  

Next, we define a surplus function on the set of fractional groups. We note that the collection of subsets $\{K_n\}_{N'\le n\le N}$ is not a partition of the set of fractional groups $I^{N!}/\sim_{N!}$. Therefore, we cannot combine the functions $\hat{s}_n$ to define this surplus function directly. However, we note that a welfare maximizing assignment assigns positive masses to two groups, which correspond to the same fractional group, if and only if the average surplus of these two groups are the same\footnote{Otherwise, only the group with a higher average surplus will be formed in any welfare maximized assignment.}.  
Therefore, we define the surplus function 
$\hat{s}:I^{N!}/\sim_{N!}\rightarrow \mathbb{R}$ by
\begin{equation}\label{eqn: hats}
\hat{s}=
\max\left(\frac{N}{n} \hat{s}_n: n\in \{N',...,N\}\right) \footnotemark
\end{equation}
\footnotetext{In the numerator, it is $N$ rather than $N!$ as total mass of players in a fractional group is normalized to be $N$.}
The surplus function $\hat{s}$ on fractional groups induces a partition $\{R_n\}_{n=0,N',...,N}$ of the space $I^{N!}/\sim_{N!}$ such that
\begin{itemize}
	\item $R_0\cap K_n=\emptyset, \forall n\in \{N',...,N\}$
	\item $R_n\subset K_n$ and $\hat{s}_n\ge \frac{n}{m} \hat{s}_m \text{~on~} R_n, \forall N'\le n,m\le N$
\end{itemize}
Such partition $\{R_n\}_{n=0,N',...,N}$ can be constructed iteratively:
	\begin{itemize}
		\item define $R_0 = I^{N!}/\sim_{N!} - \bigcup_{n=N'}^N K_n$, and $R_n=\emptyset$ for all $n\in\{N',...,N\}$
		\item set $n=N'$
		\item update $R_n$ to be $\{\hat{G}\in K_n-\cup_{k=n}^{N} R_k:\hat{s}(\hat{G})=\frac{N}{n} \hat{s}_n(\hat{G})\}$ for all $n\in\{N',...,N\}$
		\item if $n<N$, set $n$ to be $n+1$ and repeat the previous step, otherwise stops
\end{itemize}

%

Moreover, the surplus function $\hat{s}$ on fractional groups inherits the properties of the surplus function $s$:

\begin{lemma}\label{lemma:shatprop}
	If the surplus function $s$ satisfies \hyperref[A1]{Assumption (A1)} and \hyperref[A2]{Assumption (A2)}, then $\hat{s}$ is upper semi-continuous and there is a lower semi-continuous function $\hat{a}\in L^1(I,\mu)$ such that 
	$$\hat{s}([i_1,...,i_{N!}])\le \sum_{k=1}^{N!} \hat{a}(i_k)$$
\end{lemma}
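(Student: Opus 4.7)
The plan is to prove the two claims—upper semi-continuity of $\hat{s}$ and the pointwise bound—separately, both reducing to the corresponding properties of the family $(s_n)$ once the identification of $K_n$ with $\mathcal{G}_n$ is made topologically precise.

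For upper semi-continuity, since $\hat{s}=\max_{N'\le n\le N}(N/n)\hat{s}_n$ is a finite maximum, it suffices to show that each $\hat{s}_n$ is upper semi-continuous on $I^{N!}/\sim_{N!}$. The key structural step is to establish (i) that $K_n$ is closed in $I^{N!}/\sim_{N!}$, and (ii) that the bijection $P_n\colon K_n\to \mathcal{G}_n$ is a homeomorphism. For (i) I would pull back along the quotient map $\pi\colon I^{N!}\to I^{N!}/\sim_{N!}$: the preimage $\pi^{-1}(K_n)$ equals the finite union $\bigcup_{\sigma\in S_{N!}} \sigma\cdot R(I^n)$, where $R\colon I^n\to I^{N!}$ sends $(j_1,\dots,j_n)$ to the tuple in which each $j_k$ appears exactly $N!/n$ times. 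Each piece is the continuous image of a compact space, hence closed; a finite union of closed sets is closed. For (ii), the same map $R$ descends to a continuous bijection $I^n/\sim_n\to K_n$ which is precisely $P_n^{-1}$; since both spaces are compact Hausdorff, this bijection is automatically a homeomorphism.

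With (i) and (ii) in hand, $\hat{s}_n|_{K_n}=s_n\circ P_n$ is upper semi-continuous by Assumption (A1) (or its extension (A2)). Extending by zero to the rest of $I^{N!}/\sim_{N!}$, I would verify USC pointwise: if $\hat{G}\notin K_n$, then $\hat{G}$ lies in the open set $(K_n)^c$ on which $\hat{s}_n\equiv 0$; if $\hat{G}\in K_n$, any sequence $\hat{H}^k\to\hat{G}$ splits into the terms inside $K_n$ (controlled by USC on the closed subspace $K_n$) and those outside (where $\hat{s}_n=0\le \hat{s}_n(\hat{G})$, using $s\ge 0$). Thus $\hat{s}_n$ is USC on the full space, and so is the finite maximum $\hat{s}$.

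For the bound, fix $\hat{G}=[i_1,\dots,i_{N!}]\in K_n$ with $P_n(\hat{G})=[j_1,\dots,j_n]$, so the multiset $\{i_1,\dots,i_{N!}\}$ consists of $N!/n$ copies of each $j_k$. Assumption (A2) then gives
$$\frac{N}{n}\hat{s}_n(\hat{G})=\frac{N}{n}\,s_n([j_1,\dots,j_n])\le \frac{N}{n}\sum_{k=1}^n a_n(j_k)=\frac{N}{N!}\sum_{k=1}^{N!} a_n(i_k).$$
Off $K_n$, $\hat{s}_n$ vanishes, so the same bound holds as long as $a_n\ge 0$; this may be arranged without loss by replacing $a_n$ with its positive part $\max(a_n,0)$, which remains LSC and integrable. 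Setting
$$\hat{a}(i)\;:=\;\frac{N}{N!}\max_{N'\le n\le N}\max(a_n(i),0),$$
$\hat{a}$ is LSC as a finite maximum of LSC functions, lies in $L^1(I,\mu)$ by integrability of the $a_n$ (automatic when $I$ is compact), and satisfies $\hat{s}([i_1,\dots,i_{N!}])\le \sum_{k=1}^{N!}\hat{a}(i_k)$ after taking the maximum over $n$.

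The main obstacle is the topological step—closedness of $K_n$ and the homeomorphism property of $P_n$—since the quotient topology on $I^{N!}/\sim_{N!}$ makes these identifications delicate to verify. Once those are settled, the USC extension is a routine consequence of non-negativity of $s$, and the bound reduces to a termwise substitution.
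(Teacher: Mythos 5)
Your proposal is correct and follows essentially the same route as the paper's Appendix B.4: show $K_n$ is closed, establish upper semi-continuity of each zero-extended $\hat{s}_n$ via the identification $\hat{s}_n=s_n\circ P_n$ on $K_n$ and non-negativity of $s$ off $K_n$, pass to the finite maximum, and obtain the bound by pushing Assumption $(A2)$ through the $N!/n$-fold replication with the factor $N/N!=1/(N-1)!$. Your two deviations are both harmless refinements: you make explicit that $P_n$ is a homeomorphism (the paper uses this implicitly; note your compact-Hausdorff argument would need to be replaced by a direct check with the quotient metric of Lemma~\ref{lemma: quotientmetric} in the non-compact Polish extension), and you truncate $a_n$ to $\max(a_n,0)$ where the paper instead observes that $0\le s_n([i,\dots,i])\le n\,a_n(i)$ already forces $a_n\ge 0$.
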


\begin{proof}
	See \hyperref[appendix:shatprop]{Appendix \ref{appendix:shatprop}}.
\end{proof}

\subsubsection{A Change of Variable Trick}\label{subsubsection: pullback}
Next, we extend the domain of a measure. In particular, a non-negative measure $\tau$ on $I^n/\sim_{n}$ will be identified by a symmetric measure $\gamma$ on $I^n$. Here, $n$ is a natural number larger than or equal to 2. 

Formally, given any non-negative measure $\tau_n$ on $I^n/\sim_n$, we define a non-negative measure $\gamma_n$ on $I^n$ by
\begin{equation}\label{eqn: defgamma}
\gamma_n=c_n Q_n^{\#}\tau_n
\end{equation}
where $c_n:I^n\rightarrow\mathbb{R}$ is a measurable  function\footnote{See  \hyperref[appendix:cmeasurability]{Appendix \ref{appendix:cmeasurability}} for the proof of measurability.} defined by the combinatorial numbers
\begin{equation}\label{eqn:defc}
c_n(i_1,...,i_n)=\frac{1}{(n-1)!}\prod_{i\in I} n_i!~~ \footnote{There are finitely many $i\in I$ such that $n_i\neq 0$. The product over an infinite set $\prod_{i\in I} n_i!$ is defined to be $\prod_{i\in I: n_i\neq 0} n_i!$}
\end{equation}
where $n_i=|\{k\in \mathbb{N}: i_k=i\}|$ is the number of  type $i$ players in group $[i_1,...,i_n]\in\mathcal{G}_n$,
and $Q_n:I^n\rightarrow I^n/\sim_n$ is the quotient map such that
$$Q_n(i_1,...,i_n)=[i_1,...,i_n]$$ 

In order to define the pullback measure $Q_n^{\#}\tau_n$ on $I^n$, we partition $I^n$ into small components such that $Q_n$ is an injective map on each component.

\begin{lemma} \label{lemma: injectiveparition}
There is a partition $\{J_{\alpha}\}_{\alpha\in \mathrm{A}}$ of $I^n$ such that
\begin{itemize}
	\item the index set $\mathrm{A}$ is a finite set
	\item each component $J_\alpha$ is Borel measurable in $I^n$
	\item the quotient map $Q_n$ is injective on $J_\alpha$
\end{itemize}
\end{lemma}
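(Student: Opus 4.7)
The plan is to index the partition by the symmetric group $S_n$, so that $|\mathrm{A}|=n!$, with each piece $J_\sigma$ consisting of those tuples that are ``already sorted'' by $\sigma$ with respect to a fixed Borel linear order on $I$. The guiding observation is that $Q_n$ identifies tuples differing only by a permutation, so if we restrict to tuples whose coordinates appear in a prescribed non-decreasing order, every equivalence class has at most one representative in the restricted set.

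To implement this, I would first equip $I$ with a Borel-measurable linear order. Since $I$ is a compact metric space, it is Polish, so Kuratowski's isomorphism theorem supplies a Borel injection $f:I\to \mathbb{R}$; define $i\preceq j$ iff $f(i)\le f(j)$, which is a total order because $f$ is injective. For each $\sigma\in S_n$ set
$$D_\sigma = \left\{(i_1,\ldots,i_n)\in I^n : f(i_{\sigma(1)})\le f(i_{\sigma(2)})\le \cdots \le f(i_{\sigma(n)})\right\}.$$
Since $f^n:I^n\to\mathbb{R}^n$ is Borel and $\{x_{\sigma(1)}\le\cdots\le x_{\sigma(n)}\}$ is a closed subset of $\mathbb{R}^n$, each $D_\sigma$ is Borel; and every tuple can be sorted, so $\bigcup_{\sigma\in S_n} D_\sigma = I^n$. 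I would then turn this cover into a disjoint partition in the standard way: fix any enumeration $\sigma_1,\ldots,\sigma_{n!}$ of $S_n$ and set $J_{\sigma_k}=D_{\sigma_k}\setminus \bigcup_{l<k} D_{\sigma_l}$. Each $J_{\sigma_k}$ is Borel, the pieces are pairwise disjoint, their union is $I^n$, and the index set has cardinality $n!$.

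To verify injectivity of $Q_n$ on each $J_\sigma\subseteq D_\sigma$, suppose $(i_1,\ldots,i_n),(j_1,\ldots,j_n)\in D_\sigma$ with $Q_n(i_1,\ldots,i_n)=Q_n(j_1,\ldots,j_n)$. Then both tuples induce the same multiset of types, so $(i_{\sigma(1)},\ldots,i_{\sigma(n)})$ and $(j_{\sigma(1)},\ldots,j_{\sigma(n)})$ are both $f$-nondecreasing orderings of the same multiset of real numbers; these must agree coordinate-wise, and injectivity of $f$ then yields $i_{\sigma(k)}=j_{\sigma(k)}$ for every $k$, hence $(i_1,\ldots,i_n)=(j_1,\ldots,j_n)$.

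The only point in this argument that requires anything beyond elementary manipulation is the existence of a Borel-measurable linear order on a general compact metric type space, which is precisely why Kuratowski's isomorphism theorem is invoked; for concrete type spaces such as $I$ finite or $I=[0,1]$ this step is immediate (use the discrete or the usual order) and the construction collapses to the familiar ``sort the coordinates'' recipe.
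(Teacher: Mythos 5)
Your proof is correct, but it takes a genuinely different route from the paper's. The paper builds the partition combinatorially in two stages: it first splits $I^n$ according to the multiplicity pattern $m=(m_1,\ldots,m_k)$ of repeated coordinates (the sets $J_m$), and then splits each $J_m$ into $\frac{n!}{m_1!\cdots m_k!}$ permutation-indexed pieces $J_{m,\sigma}$, deleting repetitions; measurability of these pieces is handled separately by writing them as permutations of closed ``canonical form'' sets minus degenerate closed sets. You instead impose a Borel linear order on $I$ via a Borel injection $f:I\to\mathbb{R}$ (Kuratowski), take for each $\sigma\in S_n$ the ``sorted by $\sigma$'' region $D_\sigma$, and disjointify the resulting finite cover. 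Your injectivity verification is clean and complete: equality of the type multisets forces equality of the two $f$-nondecreasing rearrangements, and injectivity of $f$ then recovers the tuples coordinate-wise. What your approach buys is a shorter argument with exactly $n!$ pieces, a one-line measurability check (preimage of a closed cone under a Borel map), and immediate applicability to the non-compact Polish case discussed in the paper's extensions; what it costs is the appeal to a descriptive-set-theoretic fact (existence of a Borel injection into $\mathbb{R}$) that the paper's bare-hands combinatorial decomposition avoids. Both arguments establish the lemma; yours is arguably the tidier of the two, and for the canonical type spaces ($I$ finite or $I=[0,1]$) it reduces, as you note, to the familiar sort-the-coordinates construction without any appeal to Kuratowski.
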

\begin{proof}
We partition $I^n$ in two steps.

Firstly, we partition $I^n$ into $|\mathcal{M}|$ components, where the index set $\mathcal{M}$ is defined by 
$$\mathcal{M}=\bigcup_{k=1}^n \{m=(m_1,...,m_k)\in \mathbb{N}^k: m_1\ge...\ge m_k\ge 1, m_1+...+m_k=n\}$$
A component $J_m$ with a index $m=(m_1,...,m_k)\in \mathcal{M}$ consists of all elements in the set
$$\{(\underbrace{i_1,...,i_1}_{m_1~\text{many}},\underbrace{i_2,...,i_2}_{m_2~\text{many}},...,\underbrace{i_k,...,i_k}_{m_k~\text{many}})\in I^n: i_1,...,i_k \text{~are~disjoint}\}$$
and all their permutations. We know $J_m$ is a measurable set\footnote{See \hyperref[appendix:cmeasurability]{Appendix \ref{appendix:cmeasurability}} for a proof.}.

Secondly, for each index $m\in\mathcal{M}$, we partition $J_m$ into $\frac{n!}{m_1!...m_k!}$ components such that $Q_n$ is injective on each component. The construction is as follows: for each permutation $\sigma\in S_n$, we define
$$J_{m,\sigma}=\{(j_{\sigma(1)},...,j_{\sigma(n)})\in I^n: j=(\underbrace{i_1,...,i_1}_{m_1~\text{many}},\underbrace{i_2,...,i_2}_{m_2~\text{many}},...,\underbrace{i_k,...,i_k}_{m_k~\text{many}})\in I^n, i_1,...,i_k\text{~are~disjoint}\}$$
$J_{m,\sigma}$ is a measurable set\footnote{See \hyperref[appendix:cmeasurability]{Appendix \ref{appendix:cmeasurability}} for a proof.}. Moreover, for any permutations $\sigma_1,\sigma_2\in S_n$, we have either $J_{m,\sigma_1}=J_{m,\sigma_2}$ or $J_{m,\sigma_1}\cap J_{m,\sigma_2}=\emptyset$. Moreover, $\cup_{\sigma\in S_n} J_{m,\sigma}=J_m$. By deleting the repeated components from $\{J_{m,\sigma}\}_{\sigma\in S_n}$, we have a partition of the set $J_m$ such that $Q_n$ is injective on each component.  

In sum, by putting the indices $(m,\sigma)$ together as a single index set $\mathrm{A}$, we have the partition.
\end{proof}

With the partition $(J_\alpha)_{\alpha\in \mathrm{A}}$ of $I^n$, we define the pullback measure $Q_n^{\#}\tau_n$ on $I^n$ by
\begin{equation}\label{eqn: pullback}
Q_n^{\#}\tau_n(S)=\sum_{\alpha\in\mathrm{A}}\tau_n(Q_n(S\cap J_\alpha)), \forall \text{measurable~}  S\subset I^n
\end{equation}

We verify our definition of pullback measure by proving that $\tau_n$ is the push-forward measure of $\gamma_n$ under the map $Q_n/n$:
\begin{lemma} \label{lemma:gamma}
If $\gamma_n$ is defined by $\tau_n$ according to \hyperref[eqn: defgamma]{Equation \ref{eqn: defgamma}} and \hyperref[eqn: pullback]{Equation \ref{eqn: pullback}}, we have 
\begin{equation}\label{eqn: gammarelation}
n\tau_n=(Q_n)_\#\gamma_n
\end{equation}
\end{lemma}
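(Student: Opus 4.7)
The plan is to verify the identity measure-theoretically by evaluating both sides on an arbitrary Borel set $E\subset I^n/\sim_n$; concretely, I will show $\gamma_n(Q_n^{-1}(E))=n\tau_n(E)$. Using the partition $\{J_\alpha\}_{\alpha\in \mathrm{A}}$ supplied by Lemma~\ref{lemma: injectiveparition}, I would first write
\[
\gamma_n(Q_n^{-1}(E))=\int_{Q_n^{-1}(E)} c_n\,d(Q_n^{\#}\tau_n)=\sum_{\alpha\in\mathrm{A}} \int_{Q_n^{-1}(E)\cap J_\alpha} c_n\,d(Q_n^{\#}\tau_n),
\]
since the $J_\alpha$'s are pairwise disjoint and cover $I^n$.

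The next step is a change-of-variables piece on each $J_\alpha$. Two observations drive it. First, $c_n(i_1,\ldots,i_n)=\frac{1}{(n-1)!}\prod_i n_i!$ depends only on the multiplicity profile of the tuple, so it is constant on every fiber of $Q_n$ and therefore descends to a function $\bar c_n$ on $I^n/\sim_n$. Second, for any Borel $B\subset J_\alpha$, the defining sum in \eqref{eqn: pullback} collapses to a single term, yielding $(Q_n^{\#}\tau_n)(B)=\tau_n(Q_n(B))$; since $Q_n$ is injective on $J_\alpha$, this says that $Q_n^{\#}\tau_n\restriction J_\alpha$ is the push-forward of $\tau_n\restriction Q_n(J_\alpha)$ along $(Q_n|_{J_\alpha})^{-1}$. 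Standard change of variables then delivers
\[
\int_{Q_n^{-1}(E)\cap J_\alpha} c_n\,d(Q_n^{\#}\tau_n)=\int_{E\cap Q_n(J_\alpha)} \bar c_n\,d\tau_n.
\]

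The final step is a combinatorial count. Summing the previous display over $\alpha$ gives $\int_E \bar c_n(G)\cdot N(G)\,d\tau_n(G)$, where $N(G)=|\{\alpha\in\mathrm{A}:G\in Q_n(J_\alpha)\}|$. By the second-stage construction in the proof of Lemma~\ref{lemma: injectiveparition}, for a class $G=[i_1,\ldots,i_n]$ with multiplicity profile $(m_1,\ldots,m_k)$ (so $\prod_i n_i!=m_1!\cdots m_k!$), the distinct sets among $\{J_{m,\sigma}\}_{\sigma\in S_n}$ are in bijection with the orderings of a multiset with block sizes $(m_1,\ldots,m_k)$; hence $N(G)=\frac{n!}{m_1!\cdots m_k!}$. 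Therefore
\[
\bar c_n(G)\cdot N(G)=\frac{m_1!\cdots m_k!}{(n-1)!}\cdot \frac{n!}{m_1!\cdots m_k!}=n,
\]
so $\gamma_n(Q_n^{-1}(E))=n\tau_n(E)$, which is equivalent to \eqref{eqn: gammarelation}.

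The only non-routine part is the combinatorial count $N(G)=\tfrac{n!}{m_1!\cdots m_k!}$, and this is precisely the quantity the constant $c_n$ was defined to neutralize: the factor $\prod_i n_i!$ in \eqref{eqn:defc} cancels the fiber overcounting, while the $\frac{1}{(n-1)!}$ converts $n!$ into the factor $n$ appearing on the left-hand side. In other words, the choice of $c_n$ is tailor-made so that reweighting the pullback produces exactly $n$ copies of $\tau_n$ under push-forward, and once this is recognized the proof reduces to bookkeeping across the partition $\{J_\alpha\}$.
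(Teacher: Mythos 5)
Your proof is correct and follows essentially the same route as the paper's: both decompose $\gamma_n(Q_n^{-1}(E))$ over the partition $\{J_\alpha\}$ from Lemma~\ref{lemma: injectiveparition}, use that $c_n$ is constant equal to $\frac{m_1!\cdots m_k!}{(n-1)!}$ on each piece, and cancel this against the count $\frac{n!}{m_1!\cdots m_k!}$ of components whose image covers a given class. The only cosmetic difference is that you package the overcounting as a multiplicity function $N(G)$ inside a single integral over $E$, while the paper groups the indices into classes $[\beta_m]$ and sums; the content is identical.
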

\begin{proof}
	See \hyperref[appendix: gamma]{Appendix \ref{appendix: gamma}}.
\end{proof}

%

In the finite type case, \hyperref[lemma:gamma]{Lemma \ref{lemma:gamma}} implies, for any list of types $(i_1,...,i_n)\in I^n$,
$$n\tau_n([i_1,...,i_n])=\gamma_n(Q_n^{-1}([i_1,...,i_n]))=\sum_{i_1,...,i_n\in I} \gamma_n(i_1,...,i_n)$$
That is, for all groups $G=[i_1,...,i_n]$, the sum of the $\gamma$ values over all permutations of the group $G$ is equal to the total mass of players assigned to group $G$. Since $\gamma_n$ is a symmetric function, we have
$$\gamma_n(i_1,...,i_n)=c_n(i_1,...,i_n)\tau_n([i_1,...,i_n])$$
which gives us \hyperref[eqn: defgamma]{Equation \ref{eqn: defgamma}}. In particular, $\gamma_n(i_1,...,i_n)$ is the product of some positive constant and the mass of the $n$-person group $[i_1,...,i_n]$ in assignment $\tau\in \Tau$. Recall that a group $G=[i_1,...,i_n]$ is formed when $\tau([i_1,...,i_n])>0$. Therefore, an $n$-person group $G\in \mathcal{G}_n$ is formed if and only if $\gamma_n(Q_n^{-1}(G))>0$ for some permitted group size  $n$. That is, a preimage $Q_n^{-1}(G)$ is in the support of $\gamma_n$ for some permitted group size $n$. Consequently, the positive constant $c$ plays no role in our definition of stability. 

Other properties of $\gamma_n$ is given as follows: 

\begin{lemma}\label{lemma:gammaprop}
For any measure $\gamma_n$ defined by some measure $\tau_n$ and \hyperref[eqn: gammarelation]{Equation \ref{eqn: gammarelation}}, we have
\begin{enumerate}
	\item $\gamma_n$ is symmetric: for any measurable sets $A_1,...,A_n\subset I$, $$\gamma_n(A_1,...,A_n)=\gamma_n(A_{\sigma(1)},...,A_{\sigma(n)}), \forall \sigma\in S_n$$
	\item $\gamma_n(A,I,I...,I)=\sum_{k=0}^n k \tau_n(\mathcal{G}_n(A,k))$ for all measurable $A\subset I$
\end{enumerate}
\end{lemma}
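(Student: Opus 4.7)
The plan is to verify property 1 (symmetry) first by peeling off the combinatorial factor and reducing to symmetry of the raw pullback, and then derive property 2 from property 1 combined with \hyperref[lemma:gamma]{Lemma \ref{lemma:gamma}}.

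For property 1, I would first note that $c_n(i_1,\dots,i_n)$ depends only on the multiplicities $\{n_i\}$, hence $c_n\circ\sigma=c_n$ for every $\sigma\in S_n$. Thus symmetry of $\gamma_n=c_n\,Q_n^{\#}\tau_n$ reduces to showing $Q_n^{\#}\tau_n(\sigma^{-1}S)=Q_n^{\#}\tau_n(S)$ for every measurable $S\subset I^n$ and every $\sigma\in S_n$. Starting from the definition in \hyperref[eqn: pullback]{Equation \ref{eqn: pullback}} and using $Q_n\circ\sigma=Q_n$, one rewrites
\[
Q_n^{\#}\tau_n(\sigma^{-1}S)=\sum_{\alpha\in\mathrm{A}}\tau_n\bigl(Q_n(S\cap\sigma(J_\alpha))\bigr).
\]
The key lemma I would establish along the way is that the right-hand side of \hyperref[eqn: pullback]{Equation \ref{eqn: pullback}} does not depend on the particular injectivity partition $\{J_\alpha\}$ chosen in \hyperref[lemma: injectiveparition]{Lemma \ref{lemma: injectiveparition}}: if $\{J'_\beta\}$ is another such partition, passing to the common refinement $\{J_\alpha\cap J'_\beta\}$ and using that $Q_n$ is injective on each $J_\alpha$ (so its image of a disjoint union over the refinement is a disjoint union in $\mathcal{G}_n$) yields equal sums. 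Since $\{\sigma(J_\alpha)\}_\alpha$ is itself an injectivity partition (injectivity of $Q_n$ on $\sigma(J_\alpha)$ follows from $Q_n=Q_n\circ\sigma$ and injectivity on $J_\alpha$), the partition-independence gives symmetry.

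For property 2, I would decompose $A\times I^{n-1}$ according to how many coordinates lie in $A$. Set
\[
E_k=\bigl\{(i_1,\dots,i_n)\in I^n:\#\{\ell:i_\ell\in A\}=k\bigr\}=Q_n^{-1}(\mathcal{G}_n(A,k)),
\]
and $E_{k,\ell}=E_k\cap\{i_\ell\in A\}$. Then $A\times I^{n-1}=\bigsqcup_{k=1}^n E_{k,1}$, and summing the indicators gives $\sum_{\ell=1}^n\mathbb{1}_{E_{k,\ell}}=k\,\mathbb{1}_{E_k}$. Applying property 1, each $E_{k,\ell}$ has the same $\gamma_n$-measure, so $n\gamma_n(E_{k,1})=k\gamma_n(E_k)$. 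Finally, \hyperref[lemma:gamma]{Lemma \ref{lemma:gamma}} gives $\gamma_n(E_k)=\gamma_n(Q_n^{-1}(\mathcal{G}_n(A,k)))=n\tau_n(\mathcal{G}_n(A,k))$, so $\gamma_n(E_{k,1})=k\tau_n(\mathcal{G}_n(A,k))$, and summing over $k$ (the $k=0$ term vanishes) yields the claimed identity.

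The main obstacle is the bookkeeping in property 1: the pullback is defined through an auxiliary partition, so one cannot directly read off symmetry from the definition. Establishing partition-independence via the common-refinement argument is the pivot that makes the rest of both parts routine.
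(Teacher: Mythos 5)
Your proposal is correct, and both parts land on the same structural ingredients as the paper's proof (the injectivity partition $\{J_\alpha\}$ for symmetry; symmetry plus Lemma \ref{lemma:gamma} and a counting argument for the marginal identity), but the execution differs in two worthwhile ways. For part 1, the paper simply asserts $\sum_{\alpha} c(\alpha)\,\tau_n(Q_n(A_{\sigma(1)}\times\cdots\times A_{\sigma(n)}\cap J_\alpha))=\sum_{\alpha} c(\alpha)\,\tau_n(Q_n(A_{1}\times\cdots\times A_{n}\cap J_\alpha))$ without comment; your partition-independence lemma (via the common refinement, using injectivity of $Q_n$ on each piece to turn images of disjoint unions into disjoint unions) is precisely the justification that step needs, since permuting the rectangle amounts to replacing $\{J_\alpha\}$ by the injectivity partition $\{\sigma(J_\alpha)\}$. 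So your version is a rigorous completion of the paper's terse relabeling, and it additionally yields the stronger statement $\gamma_n(\sigma^{-1}S)=\gamma_n(S)$ for all measurable $S$, not just products. For part 2, the paper decomposes $Q_n^{-1}(\mathcal{G}_n(A,k))$ into $\binom{n}{k}$ rectangles of the form $A^k\times (A^c)^{n-k}$ and counts how many begin with $A$, whereas you double-count coordinates via $\sum_{\ell}\mathbb{1}_{E_{k,\ell}}=k\,\mathbb{1}_{E_k}$; these are the same combinatorial identity, but your form needs set-level (not just product-level) symmetry, which your part 1 supplies, and it avoids the binomial bookkeeping. Both routes are valid; yours is slightly more self-contained.
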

\begin{proof}
	See \hyperref[appendix:gammaprop]{Appendix \ref{appendix:gammaprop}}.
\end{proof}

An immediate consequence of \hyperref[lemma:gammaprop]{Lemma \ref{lemma:gammaprop}} 
is that, when there is a unique group size, i.e. $N'=N$, $\gamma_N$, defined by an assignment $\tau_N\in \Tau$, is a symmetric measure such that all its marginals are $\mu$.
That is, $\gamma_N$ is a symmetric transport plan in the $N$-fold replicated player space.

Lastly, we illustrate the change of variable trick
in a game with three types of agents. That is, $I=\{i_1,i_2,i_3\}$. In this game, the unique permitted group size is 2.

Recall that an assignment $\tau\in \Tau$ is a function defined on $\mathcal{G}=\mathcal{G}_2=I^2/\sim_2$ satisfying the consistency condition
$$\sum_{m\neq n}\tau_{nm}+2\tau_{nn}=1, \forall n\in \{1,2,3\}$$
where $\tau_{nm}=\tau([i_n,i_m])$ for any $n\le m$. In particular, $\tau_{nm}$ is the mass of groups consisting of one type $i_n$ player and one type $i_m$ player in assignment $\tau$. Therefore, an assignment $\tau$ can be represented by the upper triangular entries of a 3 by 3 matrix
$$\tau=\begin{pmatrix} 
\tau_{11} & \tau_{12} & \tau_{13}  \\
 &  \tau_{22} & \tau_{23} \\
 &  & \tau_{33}\\
\end{pmatrix}$$

Naturally, we wish to extend the domain of $\tau$ to the full domain $I^2$ such that an assignment can be presented by a matrix. There are infinitely many ways to extend the domain. Among them, one of the most natural ways is to extend the upper triangular entries symmetrically:
$$Q_3^{\#}\tau=\begin{pmatrix} 
\tau_{11} & \tau_{12} & \tau_{13}  \\
\tau_{21}&  \tau_{22} & \tau_{23} \\
\tau_{31}&  \tau_{32}& \tau_{33}\\
\end{pmatrix}$$

While $Q_3^{\#}\tau$ is a symmetric matrix, it is not a transport plan as it does not satisfy the marginal condition. In this case, it means the matrix is not bi-stochastic. To obtain a bi-stochastic matrix, we multiply the diagonal entries by 2:
$$\gamma=cQ_3^{\#}\tau=\begin{pmatrix} 
2\tau_{11} & \tau_{12} & \tau_{13}  \\
\tau_{21}&  2\tau_{22} & \tau_{23} \\
\tau_{31}&  \tau_{32}& 2\tau_{33}\\
\end{pmatrix}$$
By the consistency condition of the assignment $\tau$, $\gamma$ is bi-stochastic. Therefore, we obtain a symmetric transport plan $\gamma$ in a two-folds replication of agent space.
\begin{figure}[h]
	\centering	
	\includegraphics[width=0.35\textwidth]{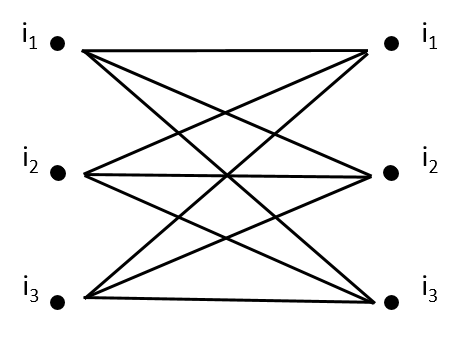}
	\caption{The symmetric transport plan $\gamma$ in a two-folds replicated agent space}
	\label{fig: symmetrictransport}
\end{figure}

A symmetric transport plan $\gamma$ is represented in \hyperref[fig: symmetrictransport]{Figure \ref{fig: symmetrictransport}}. For any $s,t\in\{1,2,3\}$, the mass transported from a source $i_s$ on the left to a destination $i_t$ on the right is $\gamma_{st}$. By the symmetry of the transport plan $\gamma$, the mass transported from a source $i_s$ on the left to a destination $i_t$ on the right is equal to the mass transported from a source $i_t$ on the left to a destination $i_s$ on the right.
	

\subsubsection{Symmetric Transport Problem}
\label{subsubsection: maximizationastranport}

Next, we prove \hyperref[prop:linearprogrammingtotransport]{Proposition \ref{prop:linearprogrammingtotransport}} by using the two tricks introduced in the previous two subsubsections.

Firstly, we extend the domain of surplus function on fractional groups to the whole domain by defining $S:I^{N!}\rightarrow \mathbb{R}$ where
\begin{equation}\label{eqn: bigs}
S(i_1,...,i_{N!})=\hat{s}([i_1,...,i_{N!}])
\end{equation}
where $\hat{s}$ is defined in \hyperref[eqn: hats]{Equation \ref{eqn: hats}}. By \hyperref[lemma:shatprop]{Lemma \ref{lemma:shatprop}}, when the surplus function satisfies assumption \hyperref[A1]{Assumption (A1)} and \hyperref[A2]{Assumption (A2)}, $S$ is upper semi-continuous in $I^{N!}$ and there is a lower semi continuous $\hat{a}\in L^1(I,\mu)$ such that
\begin{equation}\label{eqn: bigSbdd}
S(i_1,...,i_{N!})\le \sum_{k=1}^{N!} \hat{a}(i_k), \forall i_1,...,i_{N!}\in I
\end{equation}

Now, we prove \hyperref[prop:linearprogrammingtotransport]{Proposition \ref{prop:linearprogrammingtotransport}} by proving \hyperref[eqn: lptotransport]{Equation \ref{eqn: lptotransport}} holds. i.e.
	$$\sup_{\tau\in\Tau} \sum_{n=N'}^N \int_{\mathcal{G}_n} s_n d\tau_n=\sup_{\hat{\gamma}\in \hat{\Gamma}_{sym}} \int_{I^{N!}} \frac{S}{N} d\hat{\gamma}$$

Firstly, we show the left hand side is not larger than the right hand side in \hyperref[eqn: lptotransport]{Equation \ref{eqn: lptotransport}}: for any fixed $\tau\in \Tau$, we construct $\hat{\gamma}\in\hat{\Gamma}_{sym}$ such that
$$\sum_{n=N'}^N \int_{\mathcal{G}_n} s_n d\tau_n\le \int_{I^{N!}}\frac{S}{N} d\hat{\gamma}$$
Given an assignment $\tau\in \Tau$, we define a measure $\hat{\tau}_n$ on $I^{N!}/\sim_{N!}$ such that for any $S\subset I^{N!}/\sim_{N!}$,
$$\hat{\tau}_n(S)=\tau_n (P_{n} (S\cap K_n))$$
where $K_n$ is defined in \hyperref[eqn: Kn]{Equation \ref{eqn: Kn}}. Applying the change of variable trick in \hyperref[subsubsection: pullback]{Section \ref{subsubsection: pullback}}, we have a measure $\hat{\gamma}_n$ on $I^{N!}$ where
$$\hat{\gamma}_n=c\cdot Q_{N!}^{\#} \hat{\tau}_n$$
where $c:I^{N!}\rightarrow \mathbb{R}$ is a measurable function defined by the combinatorial numbers 
$$c(i_1,...,i_{N!})=\frac{1}{(N!-1)!}\prod_{i\in I}n_i!$$
where $n_i=|\{k\in\mathbb{N}: i_k=i\}|$. By \hyperref[lemma:gammaprop]{Lemma \ref{lemma:gammaprop}},
\begin{align*}
\hat{\gamma}_n(A\times I\times ... \times I)=& \sum_{k=0}^{N!} k\hat{\tau}_n(\mathcal{G}_{N!}(A,k))\\
=&\sum_{k=0}^{n} \frac{N!k}{n}\hat{\tau}_n(\mathcal{G}_{N!}(A, \frac{N!k}{n}))\\
=&\frac{N!}{n}\sum_{k=0}^{n} k\tau_n(\mathcal{G}_n(A,k))
\end{align*}
Therefore, we define
$$\hat{\gamma}=\frac{1}{N!}\sum_{n=N'}^N n \hat{\gamma}_n$$
By \hyperref[lemma:gammaprop]{Lemma \ref{lemma:gammaprop}}, $\hat{\gamma}$ is symmetric. By \hyperref[eqn: consistency]{Equation \ref{eqn: consistency}},  $\hat{\gamma}_n(A\times I\times ... \times I)=\mu(A)$ for any Borel measurable $A\subset I$. Consequently, $\hat{\gamma}\in \hat{\Gamma}_{sym}$.

Moreover, by \hyperref[eqn: hats]{Equation \ref{eqn: hats}},
$$\sum_{n=N'}^N \int_{\mathcal{G}_n} s_n d\tau_n =  \sum_{n=N'}^N \int_{I^{N!}/\sim_{N!}} \hat{s}_n d\hat{\tau}_n
\le  \sum_{n=N'}^N n\int_{I^{N!}/\sim_{N!}} \frac{\hat{s}}{N} d\hat{\tau}_n$$
Applying \hyperref[lemma:gamma]{Lemma \ref{lemma:gamma}}, we have
$$\int_{I^{N!}/\sim_{N!}} \frac{\hat{s}}{N} d\hat{\tau}_n=\frac{1}{N!}\int_{I^{N!}} \frac{S}{N} d\hat{\gamma}_n$$
Therefore,
$$\sum_{n=N'}^N \int_{\mathcal{G}_n} s_n d\tau_n\le \sum_{n=N'}^N \frac{n}{N!}\int_{I^{N!}} \frac{S}{N} d\hat{\gamma}_n=\int_{I^{N!}} \frac{S}{N} d\hat{\gamma}$$

Conversely, we show the left hand side is not less than the right hand side in \hyperref[eqn: lptotransport]{Equation \ref{eqn: lptotransport}}: for any maximizer $\hat{\gamma}\in \hat{\Gamma}_{sym}$ of the right hand side of \hyperref[eqn: lptotransport]{Equation \ref{eqn: lptotransport}}\footnote{The existence of a maximizer is proved in the next subsubsection.}, we construct a $\tau\in \Tau$ such that
\begin{equation}\label{eqn: constructtau}
\sum_{n=N'}^N \int_{\mathcal{G}_n} s_n d\tau_n= \int_{I^{N!}}\frac{S}{N} d\hat{\gamma}
\end{equation}

Firstly, since $\hat{\gamma}$ is a maximizer, the support of $\hat{\gamma}$ cannot intersect the set $Q_{N!}^{-1}(R_0)$. Otherwise, a normalized version of $\hat{\gamma}|_{Q_{N!}^{-1}(R_0)^c}$ will give a higher value for the maximization problem.


For every permitted group size $n\in \{N',...,N\}$, we define a measure $\hat{\gamma}_n$ on $I^{N!}$ by 
$$\hat{\gamma}_n(S)=\frac{N!}{n}\hat{\gamma}(S\cap Q_{N!}^{-1}(R_n)), \forall \text{measurable~} S\subset I^{N!}$$ 
In particular, the support of $\gamma_n$ lies in the set $Q_{N!}^{-1}(R_n)$. By the definition of $\hat{\Gamma}_{sym}$ and the fact that the support of $\hat{\gamma}$ is in $\cup_{n=N'}^N Q_{N!}^{-1}(R_n) $, we have
\begin{align}
\mu(A)&=\hat{\gamma}(A\times I\times...\times I) \nonumber \\
&=\hat{\gamma}(A\times I\times...\times I\cap (\cup_{n=N'}^N Q_{N!}^{-1}(R_n) )) \nonumber \\
&= \sum_{n=N'}^N \hat{\gamma}(A\times I\times...\times I\cap  Q_{N!}^{-1}(R_n) )\nonumber \\
&=\sum_{n=N'}^N \frac{n}{N!}\hat{\gamma}_n(A\times I\times...\times I) \label{eqn: gammahatmarginal}
\end{align}

Next, for every permitted group size $n$, we define a measure
$\hat{\tau}_n$ on the set of fractional groups $I^{N!}/\sim_{N!}$ by
$$\hat{\tau}_n=(Q_{N!})_{\#}\hat{\gamma}_n$$

Applying \hyperref[lemma:gammaprop]{Lemma \ref{lemma:gammaprop}} to \hyperref[eqn: gammahatmarginal]{Equation \ref{eqn: gammahatmarginal}}, we have
$$\mu(A)= \sum_{n=N'}^N \frac{n}{N!}\hat{\gamma}_n(A\times I\times...\times I)=\sum_{n=N'}^N  \sum_{k=0}^{N!} k \frac{n}{N!}\hat{\tau}_n(\mathcal{G}_{N!} (A,k) )$$

Recall the identification map $P_n:K_n\subset I^{N!}/\sim_{N!}\rightarrow \mathcal{G}_n$, we define a measure $\tau_n$ on the set of $n$-person groups $\mathcal{G}_n\simeq I^n/\sim_n$ by
$$\tau_{n}(S)=\hat{\tau}_n (P_n^{-1}(S)), \forall\text{ measurable~} S\subset \mathcal{G}_n$$ 
Therefore,
$$\sum_{k=0}^{N!} k \hat{\tau}_n(\mathcal{G}_{N!} (A,k) )=\sum_{k=0}^{n} \frac{N!}{n}k \hat{\tau}_n(\mathcal{G}_{N!} (A,\frac{N!}{n}k))=\sum_{k=0}^{n} \frac{N!}{n}k \tau_n(\mathcal{G}_{n} (A,k))$$
Consequently, we have
$$\mu(A)=\sum_{n=N'}^N\sum_{k=0}^{n} k \tau_n(\mathcal{G}_{n} (A,k))$$
i.e. $\tau=(\tau_{N'},...,\tau_N)\in\Tau$ is an assignment.

Lastly, we check that \hyperref[eqn: constructtau]{Equation \ref{eqn: constructtau}} holds:

\begin{align*}
\int_{I^{N!}}\frac{S}{N} d\hat{\gamma}=&\sum_{n=N'}^N \int_{Q_{N!}^{-1}(R_n)}\frac{S}{N} d\hat{\gamma}\\
=&\sum_{n=N'}^N \int_{Q_{N!}^{-1}(R_n)} \frac{S}{N}\frac{n}{N!} d\hat{\gamma}_n\\
=&\sum_{n=N'}^N \int_{R_n} \frac{n}{N}\hat{s} d\hat{\tau}_n\\
=&\sum_{n=N'}^N \int_{R_n} \hat{s}_n d\hat{\tau}_n\\
=&\sum_{n=N'}^N \int_{\mathcal{G}_n} s_n d\tau_n
\end{align*}

Hence, we proved \hyperref[eqn: lptotransport]{Equation \ref{eqn: lptotransport}} holds.

\subsubsection{Existence} \label{subsubsection: existenceproof}
Lastly, we prove the existence of a stable assignment. We will first prove the duality relation stated in \hyperref[theorem:finiteduality]{Theorem \ref{theorem:finiteduality}} holds and the optimizers can be achieved. Then, we establish the relationship between the duality relation and the stable assignment.

The proof of the duality relation is proceeded in three steps:
\begin{enumerate}
	\item reformulate the social welfare maximization problem as a symmetric transport problem, as stated in \hyperref[prop:linearprogrammingtotransport]{Proposition \ref{prop:linearprogrammingtotransport}}
	\item apply the duality result in multi-marginal transport problem in \cite{Kellerer84}
	\item prove the dual problem of the symmetric tranport problem is the same as a social welfare minimization problem such that no blocking coalition exists
\end{enumerate} 

To prove  \hyperref[theorem:finiteduality]{Theorem \ref{theorem:finiteduality}}, we state two lemmas. \hyperref[lemma: duality]{Lemma \ref{lemma: duality}} states that the theorem holds when there is a unique group size. \hyperref[lemma: Uhat]{Lemma \ref{lemma: Uhat}} implies the dual problem of the transport problem is the same as the social welfare minimization problem such that no blocking coalition exists. We need the following definitions to state these two lemmas:

\begin{enumerate}
	\item $\hat{\Gamma}$ is the set of measures on $I^{N!}$ such that all marginals are $\mu$:
	$$\hat{\Gamma}=\left\{\hat{\gamma}\in\mathcal{M}_+(I^{N!}):\hat{\gamma}(A\times I\times...\times I)=...=\hat{\gamma}(I\times...\times I\times A)=\mu(A), \forall A\subset I  \right\}$$
	\item 	$\hat{\mathcal{U}}$ is the set of imputations such that there is no blocking fractional group.
	$$\hat{\mathcal{U}}=\left\{u\in L^1(I,\mu): \sum_{k=1}^{N!} u(i_k)\ge (N-1)!S(i_1,...,i_{N!}), \forall i_1,...,i_{N!}\in I\right\}$$
\end{enumerate} 
For comparison, we recall that 
$$\mathcal{U}_n=\left\{u\in L^1(I,\mu): \sum_{k=1}^{n} u(i_k)\ge s([i_1,...,i_{n}]), \forall i_1,...,i_{n}\in I\right\}$$
and
$$\mathcal{U}=\left\{u\in L^1(I,\mu): \sum_{i\in G} u(i)\ge s(G), \forall G\in \mathcal{G} \right\}$$

\begin{lemma} \label{lemma: duality}
	For any symmetric upper semi-continuous function $S$ on $I^{N!}$ satisfying \hyperref[eqn: bigSbdd]{Equation \ref{eqn: bigSbdd}}, 
	$$\sup_{\hat{\gamma}\in\hat{\Gamma}_{sym}}  \int_{I^{N!}} \frac{S}{N} d\hat{\gamma}=\inf_{u\in \hat{\mathcal{U}}} \int_I u d\mu$$
	and the infimum can be achieved.
\end{lemma}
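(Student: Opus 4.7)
The plan is to prove the duality relation in three steps: first reduce the primal sup over $\hat\Gamma_{sym}$ to a sup over $\hat\Gamma$, then invoke Kellerer's multi-marginal duality theorem, and finally reduce the resulting tuple-valued dual to a scalar-valued dual over $\hat{\mathcal U}$ by symmetry.

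\textbf{Step 1 (symmetrize the primal).} For any $\hat\gamma\in\hat\Gamma$ and any $\sigma\in S_{N!}$, the coordinate-permuted measure $\sigma_\#\hat\gamma$ is again in $\hat\Gamma$ since all marginals equal $\mu$. Averaging over $S_{N!}$ produces a symmetric measure $\hat\gamma_{\mathrm{sym}}\in\hat\Gamma_{sym}$, and symmetry of $S$ gives $\int S\,d\hat\gamma_{\mathrm{sym}}=\int S\,d\hat\gamma$. Hence
\[
\sup_{\hat\gamma\in\hat\Gamma_{sym}}\int_{I^{N!}}\frac{S}{N}\,d\hat\gamma = \sup_{\hat\gamma\in\hat\Gamma}\int_{I^{N!}}\frac{S}{N}\,d\hat\gamma.
\]

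\textbf{Step 2 (apply Kellerer).} By \hyperref[lemma:shatprop]{Lemma \ref{lemma:shatprop}}, the cost $S/N$ is upper semi-continuous on $I^{N!}$ and dominated by the integrable coordinatewise sum $\sum_{k=1}^{N!}\hat a(i_k)/N$. These are exactly the hypotheses of the multi-marginal duality theorem of \cite{Kellerer84}, which yields
\[
\sup_{\hat\gamma\in\hat\Gamma}\int\frac{S}{N}\,d\hat\gamma \;=\; \inf\left\{\sum_{k=1}^{N!}\int_I u_k\,d\mu\;:\;u_k\in L^1(\mu),\;\sum_{k=1}^{N!}u_k(i_k)\ge\frac{S(i_1,\dots,i_{N!})}{N}\right\}
\]
with the infimum attained by some USC tuple $(u_1^\ast,\dots,u_{N!}^\ast)$.

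\textbf{Step 3 (symmetrize the dual).} Given any feasible tuple $(u_1,\dots,u_{N!})$ and any $\sigma\in S_{N!}$, the tuple $(u_{\sigma(1)},\dots,u_{\sigma(N!)})$ is again feasible because $S$ is symmetric. Averaging the $N!$ permuted tuples, the constant tuple $v=\frac{1}{N!}\sum_k u_k$ in each slot is feasible with the same total cost. Rescaling by $u:=N!\,v$ turns the constraint into $\sum_{k=1}^{N!} u(i_k)\ge (N-1)!\,S(i_1,\dots,i_{N!})$ (since $N!/N=(N-1)!$), i.e.\ $u\in\hat{\mathcal U}$, and the objective becomes $\int_I u\,d\mu$. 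Applying this symmetrization to the Kellerer minimizer $(u_1^\ast,\dots,u_{N!}^\ast)$ produces $u^\ast\in\hat{\mathcal U}$ that attains $\inf_{u\in\hat{\mathcal U}}\int u\,d\mu$. Combining the three steps establishes the desired equality together with attainment.

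\textbf{Main obstacle.} The conceptual content of the lemma lies entirely in invoking Kellerer's theorem; the symmetry arguments on both sides are elementary once the reformulation is set up. The care-requiring point is verifying that Lemma \ref{lemma:shatprop} supplies precisely the ingredients (USC cost, coordinatewise integrable envelope, Polish marginal spaces, finite mass $\mu$) needed to quote the multi-marginal duality and the attainment part of Kellerer's result in this symmetric setting.
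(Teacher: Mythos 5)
Your proposal is correct and follows essentially the same route as the paper: symmetrize the primal over $\hat{\Gamma}$ versus $\hat{\Gamma}_{sym}$, invoke Kellerer's multi-marginal duality, and collapse the tuple-valued dual to $\hat{\mathcal{U}}$ using the symmetry of $S$, ending with the same minimizer $u^\ast=\sum_k u_k^\ast$ (the paper verifies its feasibility via the $N!$ cyclic shifts, which is the same averaging argument you phrase through convexity of the constraint set). The only cosmetic point is to also record the trivial reverse inclusion that each $u\in\hat{\mathcal{U}}$ yields the feasible constant tuple $(u/N!,\dots,u/N!)$ with the same cost, which your rescaling step already implies.
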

\begin{proof}
	See \hyperref[appendix:duality]{Appendix \ref{appendix:duality}}.
\end{proof}

\begin{lemma}\label{lemma: Uhat}
$$\hat{\mathcal{U}}=\mathcal{U}=\bigcap_{n=N'}^N \mathcal{U}_n$$
\end{lemma}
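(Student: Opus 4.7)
The plan is to prove two separate set equalities: the easier identity $\mathcal{U}=\bigcap_{n=N'}^{N}\mathcal{U}_n$, and the harder one $\hat{\mathcal{U}}=\mathcal{U}$. For the first, I would simply observe that $\mathcal{G}=\bigsqcup_{n=N'}^N\mathcal{G}_n$, that for $G=[i_1,\dots,i_n]\in\mathcal{G}_n$ we have $\sum_{i\in G}u(i)=\sum_{k=1}^n u(i_k)$ by definition of the summation over a group, and that the defining inequalities of $\mathcal{U}_n$ and $\mathcal{U}|_{\mathcal{G}_n}$ therefore coincide. Intersecting over $n$ gives the claim.

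For the substantive part $\mathcal{U}\subseteq\hat{\mathcal{U}}$, take $u\in\mathcal{U}$ and fix $(i_1,\dots,i_{N!})\in I^{N!}$. I would first note $u\ge 0$ on $I$, by applying the constraint to any group consisting of $n$ copies of a single type and using $s\ge 0$. Next I examine $\hat{G}:=[i_1,\dots,i_{N!}]$. If $\hat{G}\notin\bigcup_n K_n$ then every $\hat{s}_n(\hat{G})=0$, so $S(i_1,\dots,i_{N!})=0$ and the constraint is trivial. Otherwise, for each $n\in\{N',\dots,N\}$ with $\hat{G}\in K_n$, the element $\hat{G}$ is by definition an $N!/n$-fold replication of an $n$-person group $G_n:=P_n(\hat{G})$, so $\hat{s}_n(\hat{G})=s(G_n)$, and the replication structure gives $\sum_{k=1}^{N!}u(i_k)=\tfrac{N!}{n}\sum_{i\in G_n}u(i)\ge\tfrac{N!}{n}s(G_n)=\tfrac{N!}{n}\hat{s}_n(\hat{G})$. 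Since $(N-1)!\cdot\tfrac{N}{n}=\tfrac{N!}{n}$, this is exactly $(N-1)!\cdot\tfrac{N}{n}\hat{s}_n(\hat{G})$. Taking the maximum over the finitely many admissible $n$ and recalling the definition of $\hat{s}$ from Equation~\eqref{eqn: hats}, I obtain $\sum_{k=1}^{N!}u(i_k)\ge (N-1)!\hat{s}(\hat{G})=(N-1)!\,S(i_1,\dots,i_{N!})$, so $u\in\hat{\mathcal{U}}$.

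For the reverse inclusion $\hat{\mathcal{U}}\subseteq\mathcal{U}$, take $u\in\hat{\mathcal{U}}$ and any $G=[j_1,\dots,j_n]\in\mathcal{G}_n$. I would form the $N!/n$-fold replication $\hat{G}$ by repeating each $j_k$ exactly $N!/n$ times and listing the entries as $(i_1,\dots,i_{N!})\in I^{N!}$; by construction $\hat{G}\in K_n$, so $\hat{s}_n(\hat{G})=s(G)$ and $\hat{s}(\hat{G})\ge\tfrac{N}{n}s(G)$. Applying the defining inequality of $\hat{\mathcal{U}}$ to this tuple and using $\sum_{k=1}^{N!}u(i_k)=\tfrac{N!}{n}\sum_{i\in G}u(i)$ yields
\[
\tfrac{N!}{n}\sum_{i\in G}u(i)\;\ge\;(N-1)!\,S(i_1,\dots,i_{N!})\;\ge\;(N-1)!\cdot\tfrac{N}{n}s(G)\;=\;\tfrac{N!}{n}s(G),
\]
and dividing by $N!/n>0$ recovers the blocking inequality for $G$. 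Since $G$ was arbitrary, $u\in\mathcal{U}$.

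The main obstacle, which is more bookkeeping than conceptual, is lining up the three scaling constants: the replication factor $N!/n$ (from passing between an $n$-person group and its fractional representative), the weight $N/n$ appearing in front of $\hat{s}_n$ in the definition of $\hat{s}$, and the constant $(N-1)!$ on the right-hand side of the $\hat{\mathcal{U}}$-inequality. The identity $(N-1)!\cdot(N/n)=N!/n$ is what makes everything cancel cleanly in both directions; the rest is a routine check that the definitions of $K_n$, $P_n$, $\hat{s}_n$ and $S$ behave as advertised under replication.
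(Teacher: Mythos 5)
Your proof is correct and follows essentially the same route as the paper's: both directions rest on the replication identity $\sum_{k=1}^{N!}u(i_k)=\tfrac{N!}{n}\sum_{i\in G}u(i)$ together with the cancellation $(N-1)!\cdot\tfrac{N}{n}=\tfrac{N!}{n}$, and the inclusion $\hat{\mathcal{U}}\subseteq\mathcal{U}$ is verbatim the paper's argument. The only cosmetic difference is in the direction $\mathcal{U}\subseteq\hat{\mathcal{U}}$, where you take the maximum over all admissible $n$ directly (using $u\ge 0$, which you rightly justify via single-type groups, to cover the $n$ with $\hat{G}\notin K_n$), whereas the paper runs a case-by-case check over the partition $\{R_0,R_{N'},\dots,R_N\}$; the two are logically equivalent.
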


\begin{proof}
		See \hyperref[appendix: Uhat ]{Appendix \ref{appendix: Uhat }}.
\end{proof}

Now, we are ready to prove the duality theorem, \hyperref[theorem:finiteduality]{Theorem \ref{theorem:finiteduality}}. 
 
\begin{proof}[Proof of the duality theorem] \label{proof: duality}
By \hyperref[prop:linearprogrammingtotransport]{Proposition \ref{prop:linearprogrammingtotransport}},
$$\sup_{\tau\in\Tau} \sum_{n=N'}^N \int_{\mathcal{G}_n} s_n d\tau_n=\sup_{\hat{\gamma}\in \hat{\Gamma}_{sym}} \int_{I^{N!}} \frac{S}{N} d\hat{\gamma}$$
By \hyperref[lemma: duality]{Lemma \ref{lemma: duality}}, 
	$$\sup_{\hat{\gamma}\in\hat{\Gamma}_{sym}}  \int_{I^{N!}} \frac{S}{N} d\hat{\gamma}=\inf_{u\in \hat{\mathcal{U}}} \int_I u d\mu$$
and the infimum on the right can be achieved. By \hyperref[lemma: Uhat]{Lemma \ref{lemma: Uhat}}, 
$$\inf_{u\in \hat{\mathcal{U}}} \int_I u d\mu=\inf_{u\in \mathcal{U}} \int_I u d\mu$$
Therefore, we have the duality relation:
$$\sup_{\tau\in\Tau} \sum_{n=N'}^N \int_{\mathcal{G}_n} s_n d\tau_n=\inf_{u\in \hat{\mathcal{U}}} \int_I u d\mu$$
See \hyperref[appendix: ctsmin]{Appendix \ref{appendix: ctsmin}} for a proof of the existence of a continuous minimizer.
\end{proof}

Next, we show the maximum social welfare can be achieved:
\begin{lemma}\label{lemma: maxachieved}
	There is a maximizer solving the maximization problem
	$$\sup_{\tau \in\Tau} \sum_{n=N'}^N \int_{\mathcal{G}_n} s_n d\tau_n$$
\end{lemma}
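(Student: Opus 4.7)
The plan is to reduce to the symmetric transport formulation and then apply a direct method of the calculus of variations. By \hyperref[prop:linearprogrammingtotransport]{Proposition \ref{prop:linearprogrammingtotransport}}, the supremum on the left equals $\sup_{\hat{\gamma}\in\hat{\Gamma}_{sym}}\int_{I^{N!}} (S/N)\, d\hat{\gamma}$, and the ``Conversely'' half of the proof of that proposition (carried out in \hyperref[subsubsection: maximizationastranport]{Section \ref{subsubsection: maximizationastranport}}) gives an explicit procedure that turns any maximizer $\hat{\gamma}^\ast$ of the transport problem into an assignment $\tau^\ast\in\Tau$ with $\sum_{n=N'}^{N}\int_{\mathcal{G}_n} s_n\, d\tau_n^\ast = \int_{I^{N!}} (S/N)\, d\hat{\gamma}^\ast$. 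Hence it is enough to exhibit a maximizer of the transport problem.

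To produce such a maximizer, I would apply the direct method on $\hat{\Gamma}_{sym}$. First, $\hat{\Gamma}_{sym}$ is weak-$*$ compact in $\mathcal{M}_+(I^{N!})$: since $I$ is compact metric, $I^{N!}$ is compact metric, every $\hat{\gamma}\in\hat{\Gamma}_{sym}$ has total mass $\mu(I)<\infty$, so Prokhorov's theorem yields sequential precompactness; weak-$*$ closedness holds because the marginal conditions test against the continuous coordinate projections $I^{N!}\to I$ and the symmetry condition tests against symmetrized continuous functions, and both classes of constraints pass to weak-$*$ limits. Second, the functional $F(\hat{\gamma})=\int_{I^{N!}} S\, d\hat{\gamma}$ is weak-$*$ upper semi-continuous on $\hat{\Gamma}_{sym}$: by \hyperref[lemma:shatprop]{Lemma \ref{lemma:shatprop}} together with \hyperref[eqn: bigs]{Equation \ref{eqn: bigs}}, $S$ is upper semi-continuous on the compact set $I^{N!}$, hence bounded above, so a standard Portmanteau-style approximation of $S$ from above by a decreasing sequence of continuous functions gives $\limsup_k F(\hat{\gamma}_k)\le F(\hat{\gamma})$ whenever $\hat{\gamma}_k\rightharpoonup \hat{\gamma}$. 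Taking a maximizing sequence and extracting a weak-$*$ convergent subsequence then produces $\hat{\gamma}^\ast\in\hat{\Gamma}_{sym}$ attaining the supremum.

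The main obstacle is the upper semi-continuity step: we only have $S$ upper semi-continuous, not continuous, so one must work with the monotone approximation rather than directly invoking weak-$*$ convergence of integrals of continuous test functions. The compactness of $I^{N!}$ makes this routine here, and in the extension of \hyperref[subsection: remarks]{Section \ref{subsection: remarks}} to Polish $I$ the same argument goes through once the dominating function $\hat{a}\in L^1(I,\mu)$ from \hyperref[lemma:shatprop]{Lemma \ref{lemma:shatprop}} is used both to enforce tightness of maximizing sequences (the marginals are all fixed at $\mu$, so tightness is inherited) and to justify Fatou-style bounds on the positive part of $S$.
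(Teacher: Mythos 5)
Your proposal is correct and follows essentially the same route as the paper: reduce via Proposition \ref{prop:linearprogrammingtotransport} to the transport problem, obtain a maximizer there by compactness of the feasible set (Prokhorov plus closedness of the marginal constraints) together with weak-$*$ upper semi-continuity of $\hat{\gamma}\mapsto\int S\,d\hat{\gamma}$ (monotone approximation of the u.s.c.\ function $S$ from above after subtracting the dominating function), and then push the maximizer back to an assignment via the ``conversely'' construction. The only cosmetic difference is that you run the direct method on $\hat{\Gamma}_{sym}$ directly, whereas the paper optimizes over the larger set $\hat{\Gamma}$ and then symmetrizes the maximizer using Lemma \ref{lemma: symmetricS}; both are valid.
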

\begin{proof}
	See \hyperref[appendix: maxachieved]{Appendix \ref{appendix: maxachieved}}.
\end{proof}

In sum, we have proved the duality relation in \hyperref[theorem:finiteduality]{Theorem \ref{theorem:finiteduality}} and the fact that the optima in the duality relation \hyperref[eqn: duality]{Equation \ref{eqn: duality}} can be achieved. To show the existence of a stable assignment, it remains to the show the equivalence relation between the stable assignment and the duality relation.

\begin{lemma}\label{lemma: equivalencerelation}
	In the duality relation, \hyperref[eqn: duality]{Equation \ref{eqn: duality}}, any maximizer $\tau$ is a stable assignment and a continuous minimizer $u$ gives a corresponding imputation. 
	
	Conversely, any stable assignment $\tau$ solves the  maximization problem in the duality relation, \hyperref[eqn: duality]{Equation \ref{eqn: duality}}, and any imputation $u$ associated with this stable assignment solves the minimization problem in the duality relation, \hyperref[eqn: duality]{Equation \ref{eqn: duality}}.	 
\end{lemma}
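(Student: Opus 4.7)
The plan is to extract the feasibility and no-blocking conditions from complementary slackness in the duality \eqref{eqn: duality}. The key tool is the elementary identity, valid for any $\tau\in\Tau$ and any $u\in L^1(I,\mu)$,
\begin{equation}\label{eqn:payoffidentity}
\int_I u\, d\mu \;=\; \sum_{n=N'}^N \int_{\mathcal{G}_n} \Big(\sum_{i\in G} u(i)\Big)\, d\tau_n(G),
\end{equation}
which I would prove by first taking $u=\mathbb{1}_A$: on $\mathcal{G}_n(A,k)$ the integrand equals $k$, so the right-hand side reduces to $\sum_{n,k} k\,\tau_n(\mathcal{G}_n(A,k))$, which is $\mu(A)$ by the consistency condition \eqref{eqn: consistency}. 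Linearity and monotone convergence then extend \eqref{eqn:payoffidentity} to simple and then general $L^1$ functions.

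For the forward direction, let $\tau$ maximize the left-hand side of \eqref{eqn: duality} and let $u$ be a continuous minimizer of the right-hand side. Membership $u\in\mathcal{U}$ is exactly the no-blocking condition, so only feasibility remains. Combining \eqref{eqn:payoffidentity} with the duality,
$$\sum_{n=N'}^N\int_{\mathcal{G}_n}\Big(\sum_{i\in G} u(i)-s(G)\Big)\,d\tau_n = \int_I u\,d\mu - \sum_{n=N'}^N\int_{\mathcal{G}_n} s_n\,d\tau_n = 0.$$
Each integrand is non-negative by no-blocking, hence vanishes $\tau_n$-a.e. Because $u$ and $s_n$ are continuous (Assumption (A1)), the map $G\mapsto \sum_{i\in G} u(i)-s(G)$ is continuous on $\mathcal{G}_n$, so the $\tau_n$-a.e.\ vanishing upgrades to vanishing on the closed set $\supp(\tau_n)$, yielding feasibility with equality.

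For the converse, let $\tau$ be a stable assignment with imputation $u$. Then $u\in\mathcal{U}$ by no-blocking, and feasibility gives $s(G)\ge \sum_{i\in G} u(i)$ on $\supp(\tau_n)$. Integrating against $\tau_n$ and applying \eqref{eqn:payoffidentity},
$$\sum_{n=N'}^N\int_{\mathcal{G}_n} s_n\,d\tau_n \ge \sum_{n=N'}^N\int_{\mathcal{G}_n}\Big(\sum_{i\in G} u(i)\Big)\,d\tau_n = \int_I u\,d\mu.$$
The reverse inequality $\sum_n\int s_n\,d\tau_n\le \int u\,d\mu$ is weak duality (pointwise no-blocking combined with \eqref{eqn:payoffidentity}), so both sides equal the common optimum $\sup=\inf$ guaranteed by Theorem~\ref{theorem:finiteduality}; hence $\tau$ maximizes and $u$ minimizes.

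I expect the only delicate point to be the upgrade from $\tau_n$-a.e.\ slackness to slackness at every element of $\supp(\tau_n)$ in the forward direction; continuity of the minimizer, guaranteed by Theorem~\ref{theorem:finiteduality}, is exactly what enables this step. Everything else amounts to the bilinear identity \eqref{eqn:payoffidentity} and a routine complementary-slackness argument.
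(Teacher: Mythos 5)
Your proposal is correct and follows the same overall skeleton as the paper's proof: membership in $\mathcal{U}$ gives no-blocking, complementary slackness plus continuity gives feasibility on all of $\supp(\tau)$, and the converse follows from weak duality combined with the feasibility condition. The one genuine difference is how you establish the central bilinear identity $\int_I u\, d\mu = \sum_{n}\int_{\mathcal{G}_n}\bigl(\sum_{i\in G}u(i)\bigr)d\tau_n$. The paper obtains it by passing through the pullback measures $\gamma_n^*$ on $I^n$ (Lemma \ref{lemma:gamma} and Lemma \ref{lemma:gammaprop}), i.e.\ it reuses the change-of-variable machinery built for the transport reformulation, whereas you prove it directly from the consistency condition (Equation \ref{eqn: consistency}) by testing against indicators $\mathbb{1}_A$, noting that $\sum_{i\in G}\mathbb{1}_A(i)=k$ on $\mathcal{G}_n(A,k)$, and then extending by linearity and monotone convergence. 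Your route is more self-contained and arguably more transparent --- it makes visible that the consistency condition is exactly the adjoint relation needed for complementary slackness, and it sidesteps the combinatorial constants $c_n$ entirely; the measurability of $G\mapsto\sum_{i\in G}u(i)$ needed for the extension comes for free from the measurability of the sets $\mathcal{G}_n(A,k)$. The paper's route buys consistency with the rest of the development, since the $\gamma_n$ objects are needed elsewhere (e.g.\ in the proof of the continuous minimizer), but for this lemma alone your derivation is a clean and slightly more elementary substitute. Both arguments handle the a.e.-to-everywhere upgrade identically, via continuity of $u$ and $s_n$ on the closed support.
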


\begin{proof}
	See \hyperref[appendix: equivalencerelation]{Appendix \ref{appendix: equivalencerelation}}.
\end{proof}

\subsection{Examples}

\subsubsection{Non-Differentiablity of The Maximum Social Welfare Function}\label{subsubsection: nondiff}
Firstly, we illustrate the maximum social welfare function
may not be differentiable. In the finite type case, the maximum social welfare function $\Pi:\mathbb{R}^{|I|}_{+}\rightarrow \mathbb{R}$ is given by
$$\Pi(\mu)=\sup_{\tau \in\Tau} \sum_{G\in \mathcal{G}} s(G)\tau(G)$$

We study a game with two types of players. e.g. $I=\{1,2\}$. The mass of type $1, 2$ agents are $\mu_1$ and $\mu_2$ respectively. In this game, the only permitted group size is 2. That is, $N'=N=2$. In addition, all groups have zero surplus except the groups consisting both types of players, which have one unit of surplus. i.e.
$$s([i,j])=\begin{cases}
0 & \text{~if~} i= j\\
1& \text{~if~} i\neq j
\end{cases}$$
It is easy to see that the social surplus is higher if there are more groups consisting of both types of players. Therefore, we have
$$\Pi(\mu_1,\mu_2)= \min(\mu_1,\mu_2)$$
Consequently, the maximum social surplus is not differentiable at $(1,1)$ since
\begin{align*}
&\lim_{\varepsilon\rightarrow 0^+}\Pi(1+\varepsilon,1)=1=\Pi(1,1)\\
&\lim_{\varepsilon\rightarrow 0^+}\Pi(1-\varepsilon,1)=1-\varepsilon=\Pi(1,1)-\varepsilon
\end{align*}
That is, the directional derivative of $\Pi$ at $(1,1)$ is 1 along the direction $(1,0)$, but is $0$ along the direction $(-1,0)$.

However, the set of imputations coincide with the superderivatives of $\Pi$: any stable assignment assigns $\min(\mu_1,\mu_2)$ unit mass of the groups consisting players of both types, and assigns the remaining players pairwisely. Players' payoffs at $\mu=(\mu_1,\mu_2)$ is given by
$$
\begin{cases}
\text{if~} \mu_1=\mu_2,& u_1+u_2=1, u_1\ge 0,u_2\ge 0\\
\text{if~} \mu_1>\mu_2,& u_1=0, u_2=1\\
\text{if~} \mu_1<\mu_2,& u_1=1, u_2=0
\end{cases}$$

\subsubsection{Unifying Group sizes}

Next, we illustrate the trick of reformulating a problem with multiple group sizes to a problem with a unique group size. 

We study a game with three types of players. e.g. $I = \{1,2,3\}$. There are a unit mass of each type of players in the game. i.e. $\mu_1=\mu_2=\mu_3=1$. Permitted group sizes are 2 and 3. i.e. $N'=2, N=3$. Group surpluses are given by 
$$\begin{cases}
	 s([1,2,2])=9, s([3,3,3])=4\\
	s([1,2])=2,  s([3,3])=3\\
	\text{all~other~groups~have~surplus~0}
\end{cases}$$

To reformulate a problem with multiple group sizes as a problem with a unique group size, we treat a 2-person group as $2/3$ unit of some fractional group. For instance, the 2-person group $[i,j]$ corresponds to $2/3$ unit of the fractional group $[i,i,i,j,j,j]$, which consists of 1.5 units of type $i$ players and 1.5 units of type $j$ players.

Therefore, to find an optimal partition of the players into small groups of size 2 and 3, it is sufficient to find the optimal partition of the players into fractional groups. Formally, we use a 6-tuple $[i_1,...,i_6]\in I^6/\sim_6$ to denote a fractional group. Intuitively, a fractional group $[i_1,...,i_6]$ is a subset of players of total mass 3, which consists of 1/2 unit of type $i_k$ players for each $1\le k\le 6$.  In this example, there are 13 types of fractional groups.
\begin{align*}
\hat{\mathcal{G}}=\{&[1,1,1,1,1,1],[2,2,2,2,2,2],[3,3,3,3,3,3],\\
&[1,1,1,2,2,2],[1,1,1,3,3,3],[2,2,2,3,3,3],\\
&[1,1,1,1,2,2],[1,1,1,1,3,3],[2,2,2,2,1,1],\\
&[2,2,2,2,3,3],[3,3,3,3,1,1],[3,3,3,3,2,2],[1,1,2,2,3,3]
\}
\end{align*}

In particular, a fractional group might correspond to two groups. For instance, both the 2-person group $[1,1]$ and the 3-person group $[1,1,1]$ correspond to the fractional group $[1,1,1,1,1,1]$. However, when social welfare is maximized, both the 2-person group $[1,1]$ and the 3-person group $[1,1,1]$ are formed if and only if $1.5s[1,1]=s[1,1,1]$. When $1.5s[1,1]>s[1,1,1]$, only the 2-person group $[1,1]$ will be formed as we can break the group  $[1,1,1]$ into 1.5 units of the 2-person group $[1,1]$ and the total surplus will increase. When $1.5s[1,1]<s[1,1,1]$, similar operations can be implemented to increase the social welfare. Therefore, we define the surplus of a fractional group to be the maximum total surplus this fractional group of players can obtain by forming 2- or 3-person groups. Numerically, the surplus function is given by:
$$
\hat{s}(\hat{G})=
\begin{cases}
1.5\times 2=3, & \text{~if~} \hat{G}=[1,1,1,2,2,2]\\
9, & \text{~if~} \hat{G}=[1,1,2,2,2,2]\\
\max(4,1.5\times 3)=4.5, & \text{~if~} \hat{G}=[3,3,3,3,3,3]\\
0, &\text{~otherwise~} \\
\end{cases}$$ 

It is worth noting that we define the surplus of the fractional group $\hat{G}=[1,1,1,2,2,2]$ to be the welfare of the group, consisting of 1.5 units of type 1 players and 1.5 units of type 2 players, when all players are assigned into 2-person groups, as no 3-person group corresponds to the fractional group $\hat{G}=[1,1,1,2,2,2]$. However, the maximum welfare of this group of players is at least $2/3\times 9=6$ since it contains 2/3 mass of 3-person group $[2,3,3]$. Consequently, $\hat{G}$ will not be formed (appear with positive mass in the partition) in any optimal partition of players into fractional groups. Besides, the average surplus of the 2-person group $[3,3]$, which is 3/2, is larger than the average surplus of the 3-person group $[3,3,3]$, which is 1. Therefore, all type 3 players will be assigned in pairs in any stable assignment.

Given the surplus of fractional groups, the welfare maximization problem is reformulated as 
$$\sup_{\hat{\tau}} \sum_{\hat{G}\in \hat{\mathcal{G}}}\hat{s}(\hat{G})\hat{\tau}(\hat{G}) $$

Here, $\hat{\tau}: \hat{\mathcal{G}}\rightarrow \mathbb{R}_+$ is the non-negative valued function on the set of fractional groups satisfying the following consistency condition: for any disjoint $i,j,k\in \{1,2,3\}$,
\begin{align*}
3\tau([i,i,i,i,i,i])+2 [\tau([i,i,i,i,j,j])+ \tau([i,i,i,i,k,k])]&\\
+1.5[\tau([i,i,i,j,j,j])+\tau([i,i,i,k,k,k])]&\\
+[\tau([i,i,j,j,j,j])+\tau([i,i,k,k,k,k])+\tau([i,i,k,k,j,j])]&=1
\end{align*}

That is, we reformulate the problem in which groups sizes are 2 or 3 to a problem in which the unique group size is $3!=6$.

\subsubsection{Unbounded Group Sizes}
When group sizes are unbounded, the existence of a stable assignment has been studied in an approximate manner\footnote{with the approximate feasibility condition.} in  \cite{hammond1989continuum}. In this subsection, we illustrate that an stable assignments may not exist when group sizes are unbounded. 

Consider a game with a continuum of homogeneous players. The surplus of any $n$-person group is given by 
$s_n=n-1$. By the equal treatment property of stable assignment, the payoff of any player, in any stable assignment, cannot be higher than or equal to 1. If there is a stable assignment in which all agents have a payoff $u<1-1/m$. Then, $m$ players will form a blocking coalition.  

In this example, the surplus function is not uniformly bounded: we have $s_n\rightarrow +\infty$ as $n\rightarrow +\infty$. When the surplus function is uniformly bounded, the uniform bound will impose a natural upper bound on the group sizes: for formed groups containing too many players, the average payoff of players will goes to zero. Therefore, there is a subset of players in this formed group who have incentive to form a blocking coalition.

\subsubsection{An Exchange Economy with Groupwise Externalities}\label{subsubsection: envy}

In exchange economies, when there is no consumption externalities, the surplus function is always super-additive. Here, the surplus of a group is defined by the maximum total welfare of the group members who exchange commodities within the group. In contrast, when there are consumption externalities, the surplus function may no longer be super-additive. We give one such example with groupwise envy. This type of groupwise externality differs from the widespread externalities introduced in \cite{hammond1989continuum}.

Now, we describe an economy with groupwise envy. In the economy, agents, with quasi-linear utility functions, either do not trade or trade in pairs. There are one consumption good and two types of agents, type 1 and type 2. Utility functions and initial endowments are given by 
$$u_1(m,x,G)=m_1+\sqrt{x_1}+(x_1-\max_{j\in G} x_j), \omega_1=(0,0)$$
$$u_2(m,x,G)=m_2+100\sqrt{x_2}+(x_2-\max_{j\in G} x_j), \omega_2=(0,100)$$

Here, utility functions are real-valued maps on the monetary good consumption, consumption good consumption and the group the agent is in. The externality term $x_i-\max_{j\in G} x_j$ suggests that an agent would envy his trade partner if his trade partner consumes more consumption goods than him.

By the definition of the surplus function,
$$s([1])=0, s([2])=1000$$
$$s([1,1])=0, s([2,2])=2000$$
$$s([1,2])=\max_{x_1,x_2\ge 0: x_1+x_2=100} \sqrt{x_1}+100\sqrt{x_2}+100-2\max(x_1,x_2)\simeq 900.083$$
where the maximizers are  $x_1\simeq 0.03, x_2\simeq 99.97$.

That is, the joining of a type 1 agent to the 1-person group consisting of just a type 2 agent will reduce the group surplus due to the existence of groupwise envy. Intuitively, since type 2 agents value the consumption good much more than type 1 agents, in any efficient outcome, almost all consumption good will be consumed by type 2 agents. Consequently, although the type 1 agent consumes a little consumption good by trading with the type 2 agent, his envy makes him much less happy. In addition, the type 2 agent, consuming a little less consumption good, is also less happy. Therefore, with groupwise envy, the sum of payoffs are decreased after the trade.


\subsection{Applications}
In this part, we apply our results to three problems.

\subsubsection{Surplus Sharing Problem}

In a market, a continuum of agents of different types form groups in order to share group surpluses. Each agent can participate to form only one group and each group can be formed by at most $N$ agents. The surplus of a formed group only depends on the type distribution of its members.  

By our results, there is a stable state in this market. At the stable state, agents are endogenously segmented into small groups and they have a way to share the group surplus such that no subset of agents have incentives to form a new group. 

We have three more observations regarding the stable state of the market. Firstly, at the stable state, the same type of agents have the same payoff, which can be understood as the wage level of this type of players. Secondly, at the stable state, the surplus of any formed group is equal to the total payoff of its members. Thirdly, at a stable state, even if an agent can initiate to form a new group and obtain all the group surplus of the new group by paying some other agents an amount no less than their wage levels, no agent has incentive to do so. The reason of this fact is that, at an stable state, the group surplus an agent could obtain by forming a new group minus all the wages he pays to the others cannot be larger than his current wage level.

\subsubsection{Production Problem} \label{subsubsection: production}

Given a bag of (finitely many or infinitely many) inputs (Lego bricks), there are numerous ways to combine them into outputs (toys), each of which has a value. The production of any output requires a specific finite combination of inputs (Lego bricks). Production technology specifies a relation between input distributions and output values. Knowing the production technology, there are two natural questions. Firstly, what is the bag's value, defined as the maximum total value of outputs obtained by using the inputs in the bag? Secondly, how to achieve this maximum value?

\begin{figure}[h]
\centering	
\includegraphics[width=0.5\textwidth]{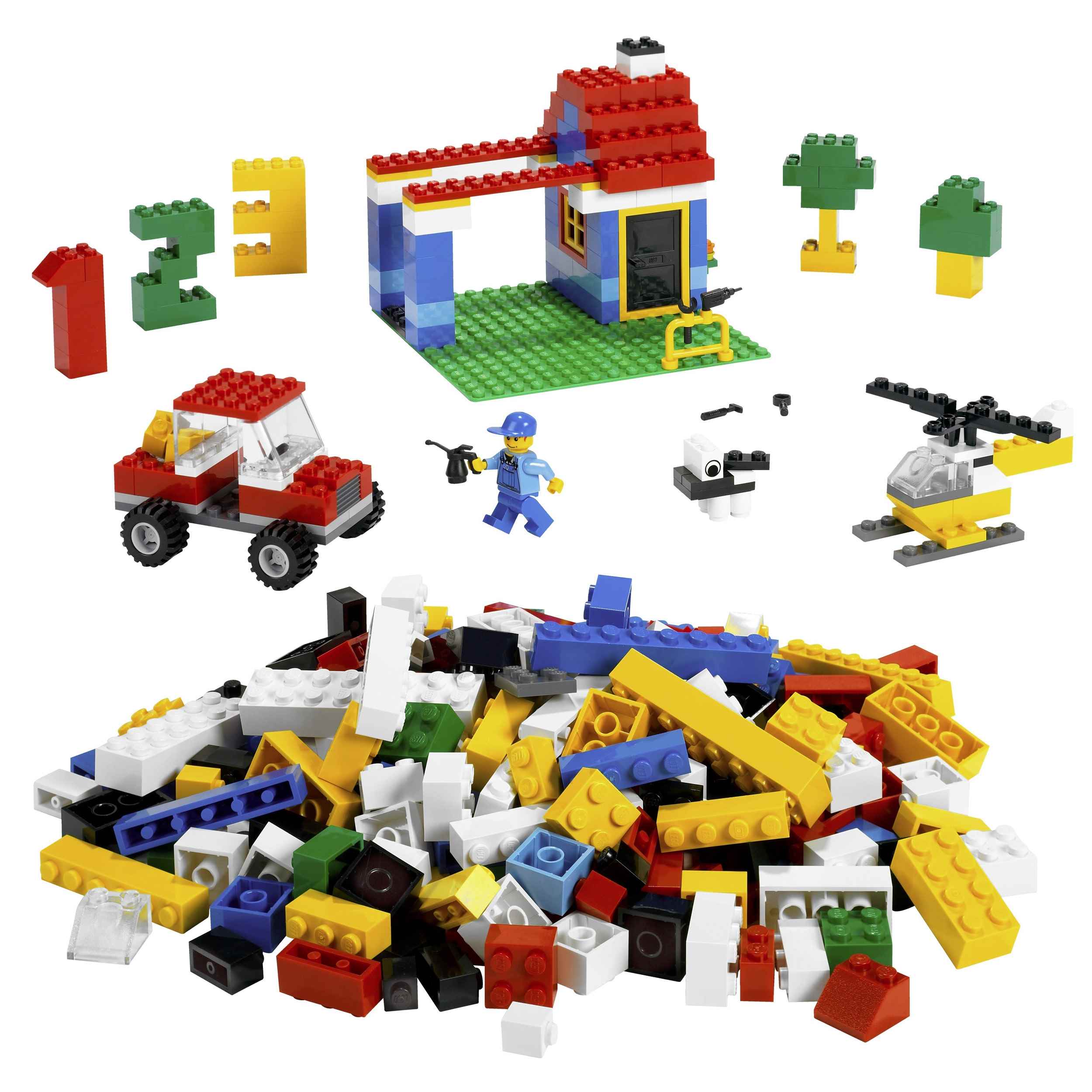}
\caption{Production problem\protect\footnotemark}
\label{fig:production problem}
\end{figure}

\footnotetext{an online picture.}

By our results, each type of inputs is endowed with a value defined by the imputation corresponding to a stable assignment. Conceptually, knowing these values, we can answer the first question on the bag's value in the following sense: when the bag contains a continuum of inputs, the sum (integration) of input values in the bag will give the bag's value. When the bag contains only finitely many inputs, the sum of input values in the bag will give an upper bound on the bag's value. When we have a large number of bags containing the same inputs, the total value of these bags is approximately equal to the total value of all inputs.

Actually, it is easy to see that these two questions can be answered by directly solving a profit maximization problem stated in \hyperref[eqn: finitemax]{Equation \ref{eqn: finitemax}}, in which the choice set is the set of all partitions of inputs into small groups. Each small group in the partition will be used to produce an output. When there are a continuum of inputs in the bag, the maximization problem in \hyperref[eqn: finitemax]{Equation \ref{eqn: finitemax}} gives exact answers to both questions. When there are only finitely many inputs, the maximization problem in \hyperref[eqn: finitemax]{Equation \ref{eqn: finitemax}} gives approximate answers to both questions.

However, it is usually computationally infeasible to solve the linear programming problem \hyperref[eqn: finitemax]{Equation \ref{eqn: finitemax}}, as the number of unknowns $|\mathcal{G}|$ becomes astronomical even when the maximum group size is very small.\footnote{$|\mathcal{G}|=\sum_{n=N'}^N \multiset{|I|}{n}=\Theta(|I|^N)$.} For example, when there are $|I|=1000$ types of inputs and each toy requires a combination of at most $4$ pieces of bricks, there are
$$|\mathcal{G}|=\sum_{n=1}^4 \multiset{1000}{n}\simeq 4.2\times 10^{10}$$
types of toys. That is, in order to answer either question, the direct method is to solve a maximization problem with $4.2\times 10^{10}$ unknowns.
 
To  deal with the curse of dimensionality, a first thought is that we could apply the duality relation in \hyperref[theorem:finiteduality]{Theorem \ref{theorem:finiteduality}} to answer the first question: to compute the bag's value, or a upper approximation of the bag's value, we only need to compute the values of inputs. That is, we only need to solve the dual problem in \hyperref[eqn: finitemin]{Equation \ref{eqn: finitemin}} with only $|I|$ unknowns.
However, this dual problem is also hard to solve as there are $|\mathcal{G}|$ constraints in it. Moreover, the minimization problem provides no answer to our second question about the optimal way of production.

It appears a better way to deal with the curse of dimensionality is to use \hyperref[prop:linearprogrammingtotransport]{Proposition \ref{prop:linearprogrammingtotransport}} to reformulate the maximization problem as an $N!$-marginal symmetric transport problem. Utilizing the symmetric structure in the symmetric transport problem, \cite{friesecke2018breaking} proved this $N!$-marginal symmetric transport problem can be further reformulated as a combinatorial maximization problem with only $(N!+1)|I|$ unknowns.\footnote{The number of unknown can be further reduced if not all group sizes are permitted. In particular, we can always replace the number $N!$ by the least common multiple of all group sizes. When $N$ is fixed and $|I|$ is large, $(N!+1)|I|=\Theta(|I|)$. Moreover, we note this reduction technique is not a free lunch. In particular, it transforms a linear programming problem to a combinatorial optimization problem. However, when $|I|$ is large, solving the linear programming problem is impossible due to the large number of unknowns. This reduction trick provides a possibility to solve the optimization problem even though the transformed problem is combinatorial. For instance, \cite{khoo2019convex} used the trick to solve a problem which has $10^{25}$ unknowns initially.} In the example with 1000 types of inputs and group sizes are up to 4, the reformulated maximization problem has 25000 unknowns. We refer to \cite{khoo2019convex} for a numerical implementation of this dimension reduction trick.

\subsubsection{Market Segmentation}
Lastly, we consider an exchange economy with no centralized market. In such economy, a continuum of agents exchange commodities with each other within small groups of bounded finite sizes. Each type of agent is represented by a continuous quasi-linear utility function and an initial endowment. In this economy, we allow the existence of groupwise externalities: agents' utility functions may depend on the consumption of their trade partners. An example of an economy with groupwise externalities is given in \hyperref[subsubsection: envy]{Section \ref{subsubsection: envy}}. 

Our result suggests, with necessary assumptions on the continuity and boundedness of utility functions and initial endowments, there is a stable state in which the market is segmented into small groups and no subset of agents can jointly do better by forming a new group to trade with each other.


\subsection{Related Literature}

We review the literature from two perspectives: the conceptual perspective and the methodological perspective.

From the conceptual perspective, the most relevant literature to the material in this section is the f-core literature in \cite{kaneko1982cores}, \cite{kaneko1986core}, \cite{hammond1989continuum}, \cite{KanekoWooders96}. With an approximate feasibility condition, this sequence of works proves the existence of an approximately stable assignment in games with a compact type space.\footnote{This sequence of works are mainly focused on models with non-transferable utility. Here, I am talking about the application of these works to a game with transferable utility. It is worth noting that the method we developed in this paper depending on linear dualities, and thus cannot be applied to nontransferable utility games directly. For a link between duality and games with nontransferable utility, we refer to \cite{noldeke2018implementation}.} 

There are three major differences between the literature mentioned above and the work in this section. Firstly, in this section, we developed a new formulation for the concept f-core by studying a statistical representation of the partition. Secondly, dropping the approximate feasibility assumption, we proved the existence of an  stable assignment in a game with possibly a non-compact type space. Thirdly, we showed that, when the surplus function is continuous, similar agents obtain similar payoffs in a stable state.  

The framework and results in this section unified the literature on matching problem, roommate problem and f-core. When the unique permitted group size is 2 and the surplus function has a bipartite structure, our work implies results on the matching problem. In particular,  \cite{ShapleyShubik71} and \cite{roth1993stable} studied matching problem with a finite type space. \cite{gretsky1992nonatomic}, \cite{chiappori2010hedonic} and \cite{chiappori2016multidimensional} studied matching problem with a general type space. When the permitted group size is $k$ and the surplus function is assumed to have a k-partite structure, our work implies results on the multi-matching problem such as \cite{carlier2010matching} and \cite{pass2015multi}. When there is an upper bound on group sizes, our work implies results on f-core in the transferable cases. In particular, \cite{kaneko1982cores}, \cite{kaneko1986core},  \cite{wooders2012theory} studied f-core with a finite type space. \cite{KanekoWooders96} studied f-core with a compact atomless type space. 

Our model assumes that there are a continuum of players in the game. It is valid to ask, with a {\it discrete} player set, whether a stable assignment exists or not. To my knowledge, our understanding is complete only when the unique permitted group size is 2: when the surplus function has a bipartite structure, \cite{koopmans1957assignment} proves the existence of a stable assignment. When the surplus function has no bipartite structure, \cite{chiappori2014roommate} proves the existence of a stable assignment if the player set is replicated 2 times. Both observations rely on the validity of Birkhoff-von Neumann theorem on matrices. Finding a high dimensional analog of Birkhoff von-Neumann theorem is an ongoing project\footnote{See \cite{linial2014vertices} for related literature.}. Therefore, we need to develop new tools to understand discrete models when the maximum group size is larger than 2.

From the methodological perspective, the literature on f-core such as \cite{kaneko1982cores}, \cite{kaneko1986core}, \cite{KanekoWooders96} did not use the optimization method to prove the existence of a stable assignment, as the problem is formulated in a non-transferable utility environment. However, in a transferable utility environment, it is most natural to use linear programming to analyze the problem. Indeed, when there are only finitely many types of players in the game, it is well known that linear programming helps to prove the existence of a stable assignment. We refer to \cite{wooders2012theory} for details. The idea of using linear programming to prove stability dates back to the ground-breaking works of  \cite{bondareva1963some} and \cite{shapley1967balanced}, in which the connection between
core and linear programming is established. Several works including \cite{kannai1969countably}, \cite{schmeidler1972cores}, \cite{kannai1992core} extended this connection to infinite type spaces. However,  these extensions are not applicable to our model in which group sizes are bounded.

In this section, we connected the literature on f-core to transport problems. This connection helps us to prove the existence of a stable assignment for general type space and significantly reduces the number of unknowns for finding stable assignments  in the finite type case. 

Actually, there is a long tradition of applying the transportation method to study stability. In this line, most works assume some structure on the surplus function:  when the surplus function has a bipartite structure and the unique permitted group size is 2, \cite{koopmans1957assignment}, \cite{ShapleyShubik71} studied stable matching with a finite type space. \cite{gretsky1992nonatomic}, \cite{chiappori2010hedonic}, \cite{chiappori2016multidimensional} studied stable matching with a general type space. When the surplus function has a $k$-partite structure and the unique permitted group size is $k$, \cite{carlier2010matching} and \cite{pass2015multi} studied stable multi-matching with general type spaces. When the surplus function has no special structure and the unique permitted group size is 2, \cite{chiappori2014roommate} studied stable matching in a discrete player model. Their key observation is that any pure matching in the replicated player space corresponds to an assignment. In this section, we propose an identification trick that is inversely related to their observation: an assignment can be identified by some symmetric transport plan in a replicated player space. This identification trick helps us in two ways. Firstly, it helps us to study games with a continuum of players, with small groups containing more than two players and with small groups of different sizes. Secondly, it helps to establish the equivalence relation between the set of stable assignments and the solution of a symmetric optimal transport problem.

Recently, \cite{friesecke2018breaking} introduced a dimension reduction trick to significantly reduce the number of unknowns in a symmetric optimal transport problem. This trick has a huge potential in computations. In particular, \cite{khoo2019convex} used this dimension reduction trick to solve a problem with $10^{25}$ unknowns. In this section, by relating stable assignments to symmetric transport plans, we enable the possibility of applying this dimension reduction trick to find stable assignments. 

For more complete surveys on optimal transport problems, we refer to \cite{Villani08} for 2-marginal problems, to \cite{pass2011structural} and \cite{Pass15} for multi-marginal problems.

\section{Games with Positive-Size Groups}
In \hyperlink{section.2}{Section 2}, a small group is defined as a finite subset of players. Therefore, every small group contains at most finitely many types of players and is of ``measure zero". In some applications such as \cite{Schmeidler72}, small groups may contain infinitely many types of players and have small masses. To study this phenomenon, in this section, we extend our previous model and study small groups with small positive masses. 


This section is organized as follows. In \hyperlink{subsection.3.1}{Section 3.1}, we set up the problem. In \hyperlink{subsection.3.2}{Section 3.2}, we state the results. In \hyperlink{subsection.3.3}{Section 3.3}, we review the literature. All proofs are postponed to \hyperref[appendix: sect3]{Appendix \ref{appendix: sect3}}.

\subsection{Model}
We study a cooperative (transferable utility) game represented by the tuple $((I,\mu),s, \varepsilon',\varepsilon)$. Here, the positive numbers $\varepsilon', \varepsilon \in (0,1]$ are the lower and upper bounds on group sizes respectively.

\subsubsection{Type Space}
The {\it type space} of players is summarized by a probability space $(I,\mu)$. In the tuple, $I$ is a  Polish space representing the set of players' types and $\mu\in \mathcal{P}(I)$ is a probability measure on $I$ representing the distribution of players' types. In particular, $\mu$ is not assumed to be atomless.

\subsubsection{Groups}
Groups are the units in which players interact with each other. Similar to the formulation of sub-population in \cite{che2019stable}, a group is defined by a non-negative measure $\nu\in \mathcal{M}_+(I)$ on $I$ that is not larger than $\mu$. The size of a group $\nu$ is its total measure $\norm{\nu}=\nu(I)$ on $I$. All group sizes are bounded from below by $\varepsilon'\in (0,1]$\footnote{Technically, it is important to assume that the group sizes are bounded away from zero. The strictly positive lower bound will imply that the set of assignments defined later is a compact set in weak topology.} and bounded from above by $\varepsilon\in (0,1]$, where $\varepsilon\ge\varepsilon'$. When $\varepsilon=\varepsilon'$, only one group size is permitted in the game.

Formally, a {\it group} is a non-negative measure $\nu$ on $I$ such that $\nu\le \mu$ and $\norm{\nu}\in [\varepsilon',\varepsilon]$. {\it The set of groups} $\mathcal{G}$ consists of all groups which sizes are bounded by $\varepsilon'$ and $\varepsilon$:
$$\mathcal{G}=\{\nu\in \mathcal{M}_+(I):\nu\le \mu, \varepsilon'\le \nu(I)\le \varepsilon\}$$

The set of groups $\mathcal{G}$ is endowed with the weak topology\footnote{By Theorem 8.3.2 in \cite{Bogachev2}, the weak topology on $\mathcal{M}_+(I)$ is metrizable by the Kantorovitch-Rubinstein norm $\norm{\cdot}_0$ defined by
$$\norm{\mu}_0=\sup\left\{\int_I f d\mu: f\in Lip_1(I), \norm{f}_{\infty}\le 1  \right\}$$}.

\subsubsection{Surplus Function}
The surplus function specifies the total amount of surplus group members can share. Formally, a {\it surplus function} $s:\mathcal{G}\rightarrow \mathbb{R}_+$ is a real valued function on the set of groups. There are two assumptions on the surplus function:

\begin{enumerate}
	\item[(B1)]\label{B1} $s$ is upper semi-continuous in the weak topology.
	\item[(B2)]\label{B2} there is a bounded continuous function $a\in C_b(I)$ such that $s(\nu)\le \int_I a d\nu$, for all $\nu\in \mathcal{G}$.
\end{enumerate}

\hyperref[B1]{Assumption (B1)} states the continuity requirement of the surplus function and \hyperref[B2]{Assumption (B2)} ensures the integrability of the surplus function. We note \hyperref[B2]{Assumption (B2)} is satisfied if there is a constant $c\ge 0$ such that $s(\nu)\le c\norm{\nu}$ for all $\nu\in \mathcal{G}$. That is, \hyperref[B2]{Assumption (B2)} is satisfied if the average contribution of a player to any group is bounded from above by a constant $c>0$.

We say a surplus function is {\it linear}, if for all $\nu\in\mathcal{G}$ and $\alpha>0$, we have $s(\alpha \nu)=\alpha s(\nu)$ whenever $\alpha\nu\in \mathcal{G}$. Moreover, we say a surplus function is {\it super-additive}, if for all $\nu_1,\nu_2\in\mathcal{G}$, we have $s(\nu_1+\nu_2)\ge s(\nu_1)+s(\nu_2)$ whenever $\nu_1+\nu_2\in\mathcal{G}$. In particular, a surplus function generated from a general equilibrium model with no externalities is both linear and super-additive.

\subsubsection{Assignments}

An assignment is a partition of a continuum replication of players into groups of permitted sizes. The continuum replication of players is consistent with our study of small groups: a group of players containing 99\% of the total population is small in the continuum replicated player set.

Formally, an {\it assignment} is a non-negative measure on the set of groups $\mathcal{G}$ satisfying the consistency condition:
$$\int_\mathcal{G} \nu(A) d\gamma(\nu)=\mu(A), \forall \text{measurable~} A\subset I\footnote{For any Borel measurable  $A\subset I$, the real valued map $\nu\rightarrow \nu(A)$ on $\mathcal{M}_+(I)$ is continuous in the weak topology since 
	$$|\nu_1(A)-\nu_2(A)|\le \int_A 1 d|\nu_1-\nu_2|\le \int_I 1 d|\nu_1-\nu_2|\le \norm{\nu_1-\nu_2}_0$$
	Therefore, the map $\nu\rightarrow \nu(A)$ is Borel measurable. That is, $\int_\mathcal{G} \nu(A) d\gamma(\nu)$ is well-defined.} $$

The consistency condition states that all players, whose types are in $A$, are assigned by the assignment $\gamma$.  Moreover, we use a set $\Gamma_{\mathcal{G}}$ to denote {\it the set of assignments}. i.e.
$$\Gamma_{\mathcal{G}}=\left\{\nu\in \mathcal{M}_+(\mathcal{G}): \int_\mathcal{G} \nu(A) d\gamma(\nu)=\mu(A), \forall \text{measurable~} A\subset I\right\}$$

Intuitively, $\gamma(\nu)$ gives the average mass of group $\nu$ in the partition of the replicated player set: when the player set is replicated by a large number, the product of $\gamma(\nu)$ and the number of replication
specifies the number of groups $\nu$ in the partition of the replicated player set determined by assignment $\gamma$. 

We finish this subsubsection by two remarks. 

Firstly, the set of assignments $\Gamma_{\mathcal{G}}$ is non-empty since  $\frac{2}{\varepsilon+\varepsilon'}\mathbb{1}_{\frac{\varepsilon+\varepsilon'}{2} \mu} \in \Gamma_{\mathcal{G}}$. That is, all players are assigned to the group $\frac{\varepsilon+\varepsilon'}{2}\mu\in \mathcal{G}$. 

Secondly, an assignment is not a probability measure on $\mathcal{G}$ in general. We prove by contradiction. If an assignment is a probability measure on $\mathcal{G}$, then, by taking $A=I$ in the consistency condition, we have that the left hand side of the consistency equation is less than $\varepsilon$ while the right hand side of the consistency equation is 1. Contradiction.

\subsubsection{Stability}
Similar to the case where group sizes are finite, we say an assignment $\gamma\in \Gamma_{\mathcal{G}}$ is {\it stable} if there is an imputation $u\in L^1(I,\mu)$ such that
\begin{enumerate}
	\item $\int_I u d\nu \le s(\nu)$, for $\gamma-$almost all groups $\nu\in \mathcal{G}$\footnote{Here, we need the term ``almost everywhere" since we did not prove that there exists a continuous imputation. See \hyperref[footnote: redefformedgroups]{Footnote \ref{footnote: redefformedgroups}} for the related discussions on ``formed groups".}
	\item $\int_I u d\nu \ge s(\nu)$, for all groups $\nu\in\mathcal{G}$
\end{enumerate}

The first condition is the feasibility condition, which states that, in any formed group (intuitively, a group with a positive mass in the partition), group members can in total share no more than the group surplus. The second condition is the no-blocking condition, which states no group of agents could jointly do better by forming a new group.

We note, when the surplus function is linear, a stable assignment can be interpreted as a stable way to partition of the continuum players into small groups with no lower bound on group sizes\footnote{When the surplus function is linear, the surplus of a group $\nu$ is equal to the total surplus of $1/\alpha$ units of groups $\alpha\nu$. Consequently, in any stable assignment $\gamma$, for any formed group $\nu$ and $\alpha>0$, $\int_{I} u d(\alpha \nu)\le s(\alpha \nu)$ provided the scaled group $\alpha \nu\in \mathcal{G}$.  Therefore,  in a stable assignment, any group with a positive mass in the partition can be split in an arbitrary uniform way without changing the stability property of the assignment.}.

\subsection{Results}
We prove the existence of a stable assignment by establishing a ``continuum"-marginal extension of the Koopmans-Kantorovitch duality theorem. Therefore, we have the following two theorems.

\begin{theorem}\label{theorem: infdimexistence}
	For any game $((I,\mu),s, \varepsilon', \varepsilon)$ satisfying \hyperref[B1]{Assumption (B1)} and  \hyperref[B2]{Assumption (B2)}, there is a stable assignment.
\end{theorem}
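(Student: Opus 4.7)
The plan is to transport the duality-based proof of Section 2 into the infinite-dimensional setting, though without any intermediate symmetric-transport reformulation, since groups are now themselves measures on the Polish space $I$ and there is no finite maximum size to replicate against. I would introduce the primal
\[
P = \sup_{\gamma \in \Gamma_{\mathcal{G}}} \int_{\mathcal{G}} s \, d\gamma
\]
and its formal dual $D = \inf_{u \in \mathcal{U}} \int_I u \, d\mu$, where $\mathcal{U} = \{u \in L^1(I,\mu) : \int_I u \, d\nu \ge s(\nu), \forall \nu \in \mathcal{G}\}$, show that $P = D$ with both attained at some pair $(\gamma^*, u^*)$, and then read off stability of $\gamma^*$ with imputation $u^*$ by complementary slackness.

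Compactness delivers attainment of the primal. Since each $\nu \in \mathcal{G}$ satisfies $\nu(I) \ge \varepsilon' > 0$, the consistency condition forces $\gamma(\mathcal{G}) \le 1/\varepsilon'$ for every $\gamma \in \Gamma_{\mathcal{G}}$, so total masses are uniformly bounded. Combined with tightness of $\mu$ on the Polish space $I$, a Prokhorov argument on $\mathcal{M}_+(\mathcal{G})$ shows that $\Gamma_{\mathcal{G}}$ is weakly relatively compact, and the consistency condition passes to weak limits of the marginal map $\gamma \mapsto \int \nu(\cdot) \, d\gamma(\nu)$, so $\Gamma_{\mathcal{G}}$ is weakly compact. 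Assumption (B1) together with the envelope from (B2) makes $\gamma \mapsto \int s \, d\gamma$ upper semi-continuous in the weak topology, so the supremum $P$ is attained.

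The core step, and the main obstacle, is strong duality in this continuum-marginal setting. Weak duality $P \le D$ is routine via the Lagrangian
\[
L(\gamma, u) = \int_{\mathcal{G}} \Bigl[ s(\nu) - \int_I u \, d\nu \Bigr] d\gamma(\nu) + \int_I u \, d\mu,
\]
whose sup-inf gives $P$ and whose inf-sup gives $D$. For the reverse inequality I would apply Fenchel-Rockafellar, or equivalently a direct Hahn-Banach separation, in the pairing between $C_b(I)$ and $\mathcal{M}(I)$, separating the convex set $\{(T\gamma, \int s \, d\gamma): \gamma \in \mathcal{M}_+(\mathcal{G})\}$, where $T\gamma(A) = \int \nu(A) \, d\gamma(\nu)$, from the half-space determined by the primal value. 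Assumption (B2) plays a double role: it exhibits $u \equiv a$ as an explicit feasible dual point, so $D < \infty$, and it traps a minimizing sequence inside a uniformly integrable subset of $L^1(I,\mu)$, so that a Komlos- or Dunford-Pettis-type argument extracts a dual minimizer $u^* \in L^1(I,\mu)$.

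Given $(\gamma^*, u^*)$ with $\int s \, d\gamma^* = \int u^* \, d\mu$, stability is essentially automatic. The no-blocking condition is just the membership $u^* \in \mathcal{U}$. For feasibility, the consistency condition rewrites $\int u^* \, d\mu = \int_{\mathcal{G}} ( \int_I u^* \, d\nu ) \, d\gamma^*(\nu)$, which together with $\int u^* \, d\nu \ge s(\nu)$ pointwise on $\mathcal{G}$ forces $\int u^* \, d\nu = s(\nu)$ for $\gamma^*$-almost every $\nu$. The genuine difficulty is the strong duality: because $\mathcal{G}$ is an infinite-dimensional subset of $\mathcal{M}_+(I)$ rather than a finite product, the multi-marginal duality results (such as Kellerer's) invoked in Section 2 do not apply off the shelf, and extra care is needed with the topology on $\mathcal{G}$, the upper semi-continuity of $s$, and the measurability of $\nu \mapsto \int u \, d\nu$ when $u$ is merely a $\mu$-equivalence class in $L^1(I,\mu)$.
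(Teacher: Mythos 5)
Your proposal matches the paper's proof in all essentials: primal attainment via Prokhorov compactness of $\Gamma_{\mathcal{G}}$ together with upper semi-continuity of $\gamma \mapsto \int_{\mathcal{G}} s \, d\gamma$ (reducing to non-positive $s$ by subtracting the envelope $\nu \mapsto \int_I a\, d\nu$ from (B2)), dual attainment via Dunford--Pettis applied to the uniformly integrable set $\{u \in \mathcal{U} : 0 \le u \le \norm{a}_\infty\}$, strong duality by a direct Hahn--Banach argument in $C(\mathcal{G})$ followed by Riesz representation, and stability of the optimal pair by complementary slackness. The only cosmetic difference is that the paper runs the duality step in the extension form of Hahn--Banach (following Dudley's proof of Kantorovich--Rubinstein duality, extending $F_u \mapsto \int_I u\, d\mu$ from the subspace $\{F_u : u \in L^1(I,\mu)\}$ of $C(\mathcal{G})$ while preserving its infimum over $\{F \ge s\}$) rather than the separation form you sketch, and these are equivalent.
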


\begin{theorem}\label{theorem: infdimduality}
	For any game $((I,\mu),s, \varepsilon', \varepsilon)$ satisfying \hyperref[B1]{Assumption (B1)} and  \hyperref[B2]{Assumption (B2)}, 
	$$\sup_{\gamma\in\Gamma_{\mathcal{G}}} \int_\mathcal{G} s d\gamma=\inf_{u\in\mathcal{U}}\int_I u d\mu$$
	where $\mathcal{U}=\{u\in L^1(\mu): \int_I u d\nu\ge s(\nu), \forall \nu\in\mathcal{G} \}$ and the infimum could be attained.
\end{theorem}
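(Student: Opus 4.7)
My plan is to establish the duality via a Kantorovich-style argument in three stages: weak duality by Fubini, strong duality via Fenchel-Rockafellar, and attainment of the infimum via compactness.

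\textbf{Weak duality.} For any feasible pair $(\gamma, u) \in \Gamma_\mathcal{G} \times \mathcal{U}$, the consistency condition $\int_\mathcal{G} \nu(A)\, d\gamma(\nu) = \mu(A)$ extends via a monotone class argument from indicators to all $u \in L^1(\mu)$, giving $\int_\mathcal{G} \int_I u\, d\nu\, d\gamma(\nu) = \int_I u\, d\mu$. Combined pointwise with $s(\nu) \le \int_I u\, d\nu$, this yields $\int_\mathcal{G} s\, d\gamma \le \int_I u\, d\mu$, hence $\sup \le \inf$.

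\textbf{Strong duality.} I would work in the paired spaces $(C_b(I), \mathcal{M}_b(I))$ and apply Fenchel-Rockafellar. On $C_b(I)$, define the continuous linear functional $\Phi(u) = \int_I u\, d\mu$ and the convex indicator $\Psi(u) = 0$ if $\int_I u\, d\nu \ge s(\nu)$ for every $\nu \in \mathcal{G}$, and $+\infty$ otherwise. Properness of $\Psi$ is immediate from (B2), since the bounded continuous function $a$ itself satisfies the constraint. Lower semi-continuity holds because (B1) keeps the constraint set closed under uniform convergence. The Legendre conjugate $\Psi^*$ is the support function of the convex cone cut out by the linear inequalities $\{\int_I u\, d\nu \ge s(\nu)\}_{\nu \in \mathcal{G}}$; the substantive step is to identify
$$\Psi^*(-\gamma) = -\int_\mathcal{G} s\, d\gamma \quad \text{if } \gamma \in \Gamma_\mathcal{G}, \qquad +\infty \text{ otherwise,}$$
by a dual-LP argument over the cone of positive measures on $\mathcal{G}$. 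Fenchel-Rockafellar then delivers the duality equality.

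\textbf{Attainment of the infimum.} I would first establish compactness on the primal side: $\Gamma_\mathcal{G}$ is weakly compact because total masses are bounded by $1/\varepsilon'$ and groups are tight as elements dominated by $\mu$, while $\gamma \mapsto \int s\, d\gamma$ is upper semi-continuous thanks to (B1) together with the domination from (B2). So a primal maximizer $\gamma^*$ exists. For the dual infimum, take a minimizing sequence $u_n \in \mathcal{U}$; by (B2), $a - u_n$ has controlled positive part, and applying Koml\'os' theorem I would extract a subsequence whose Ces\`aro averages converge $\mu$-a.e. to some $u_\infty \in L^1(\mu)$. Convexity of $\mathcal{U}$ and Fatou's lemma applied to the upper semi-continuous constraint from (B1) place $u_\infty$ in $\mathcal{U}$ and show it attains the infimum.

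\textbf{Main obstacle.} The substantive step is computing $\Psi^*$ and verifying that it reproduces precisely the primal problem over $\Gamma_\mathcal{G}$. Unlike standard optimal transport, where marginals live on a finite product of base spaces, here each ``marginal direction'' is parametrized by an entire measure $\nu \in \mathcal{M}_+(I)$, so the constraint set is indexed by an infinite-dimensional parameter. Closing the gap requires combining the size bounds $\nu(I) \in [\varepsilon', \varepsilon]$ (which yield weak compactness and tightness of $\mathcal{G}$) with the upper semi-continuity of $s$ (which prevents the dual support function from placing mass outside $\mathcal{G}$), and these are exactly the roles of assumptions (B1) and (B2).
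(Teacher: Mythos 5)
There is a genuine gap in your ``strong duality'' step, and it is located exactly where you flag the ``main obstacle.'' If you run Fenchel--Rockafellar on the pairing $(C_b(I),\mathcal{M}_b(I))$ with $\Phi(u)=\int_I u\,d\mu$ and $\Psi$ the indicator of the constraint set, the dual variable is a signed measure $m$ on $I$; the conjugate $\Phi^*$ forces $m=\mu$, so the identity collapses to $\inf_{u}\int_I u\,d\mu=-\Psi^*(-\mu)$, which is a tautology. All of the content is then hidden in the formula you assert for $\Psi^*$, and that formula is not even well-typed: a transport plan $\gamma\in\Gamma_{\mathcal{G}}$ is a measure on $\mathcal{G}\subset\mathcal{M}_+(I)$, not an element of $\mathcal{M}_b(I)$, and the natural map $\gamma\mapsto m_\gamma$, $m_\gamma(A)=\int_{\mathcal{G}}\nu(A)\,d\gamma(\nu)$, is many-to-one, so $\Psi^*$ evaluated at $-m_\gamma$ cannot equal $-\int_{\mathcal{G}}s\,d\gamma$ (the left side depends only on the intensity $m_\gamma$, the right side does not). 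The correct functional-analytic home is $C(\mathcal{G})$ rather than $C_b(I)$: since $\mathcal{G}$ is weakly compact, one works with the subspace $L=\{F_u:u\in L^1(I,\mu)\}\subset C(\mathcal{G})$, $F_u(\nu)=\int_I u\,d\nu$, the convex set $H=\{F\in C(\mathcal{G}):F\ge s\}$ (which has interior by (B2)), and the linear form $r(F_u)=\int_I u\,d\mu$; one extends $r$ by Hahn--Banach so that the extension's infimum over $H$ equals $\inf_{L\cap H}r$, represents the extension by a positive Borel measure $\rho$ on $\mathcal{G}$ via Riesz, and verifies $\rho\in\Gamma_{\mathcal{G}}$ by testing against indicators $u=\mathbb{1}_A$. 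That is the paper's route (following Theorem 11.8.2 of Dudley), and it is precisely the production of a measure \emph{on the group space} that your setup cannot deliver.

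A secondary, more repairable gap is in your attainment argument. Koml\'os needs a uniform $L^1$ bound on the minimizing sequence, and, more importantly, passing the constraint $\int_I \bar u_n\,d\nu\ge s(\nu)$ to the a.e.\ limit requires the \emph{reverse} Fatou inequality, i.e.\ an integrable upper envelope for the Ces\`aro averages with respect to every $\nu\le\mu$; plain Fatou gives $\int_I u_\infty\,d\nu\le\liminf_n\int_I\bar u_n\,d\nu$, which is the wrong direction. You therefore need a truncation/domination step (the paper truncates by $\norm{a}_\infty$ and uses Dunford--Pettis plus a nested-compacts argument) before any such limit argument closes. Your weak-duality paragraph and the primal compactness argument are fine and match the paper's.
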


The proofs of these two theorems are postponed to \hyperref[appendix: sect3]{Appendix \ref{appendix: sect3}}.

\subsection{Related Literature}

Small groups of positive sizes have been studied by \cite{Schmeidler72} in exchange economies. In his work, Schmeidler proved that any core allocation cannot be blocked by any small group of epsilon sizes. However, since core allocations are not feasible in general, this result is not enough to imply the existence of a stable assignment.

In this section, we developed a game model with small groups of positive sizes and proved the existence of a stable assignment. In particular, we do not assume the surplus function to be linear or super-additive. Therefore, our model can be used to analyze exchange economies with externalities.

To prove the existence result, we generalized the
Koopmans-Kantorovitch duality theorem to a ``continuum"-marginal case. Our proof is built on the proof of duality theorem in multi-marginal transport problem in \cite{Kellerer84} and \cite{dudley2002real}.

\section{Conclusions}
In this paper, we study a game with a continuum of agents who form small groups in order to share group surpluses. Group sizes are exogenously bounded by natural numbers or percentiles. We prove that there exists a stable assignment, where no group of agents can jointly do better. Conceptually, our work provides the only existence result to this problem on our level of generality, as well as a uniform way to understand diverse solution concepts, such as stable matching, fractional core, f-core, and epsilon-sized core. Computationally, when there are finitely many types of players and group sizes are finite, we reduce the number of unknowns in the problem of finding stable assignments from about $|I|^N$ to about $|I|$, where $|I|$ is the number of player types, $N$ is the maximum  group size and $|I|$ is much larger than $N$. We achieve this reduction by reformulating the welfare maximization problem as a symmetric transport problem.

\addcontentsline{toc}{section}{References}

\bibliographystyle{abbrvnat}
\bibliography{RS}

\appendix
\section{Notations}
In this section, we clarify notations we used in this paper. In the following definitions, $I$ is always assumed to be a Polish (complete separable metric) space.

\begin{itemize}
	\item $\Theta(x^k)$: the growth rate of a function on $\mathbb{N}$ is exactly $x^k$ (See Remark 1)
	\item $\mathcal{M}_+(I)$: the space of non-negative Borel measures on I
	\item $\mathcal{P}(I)$: the space of probability measures on $I$
	\item $S_n$: the set of bijective maps from $\{1,2,...,n\}$ to itself
	\item $USC(I)$: the set of upper semi-continuous continuous functions on $I$
	\item $LSC(I)$: the set of lower semi-continuous functions on $I$
	\item $C(I)$: the set of continuous functions on $I$
	\item $C_b(I)$: the set of bounded continuous functions on $I$
	\item $Lip_1(I)$: the set of 1-Lipchitz continuous functions on $I$
	\item $L^1(I,\mu)$: the set of integrable functions on $(I,\mu)$
	\item $\norm{f}_\infty$: the essential suppremum of the function $f$
	\item $\norm{\nu}$: the total measure of a nonnegative measure $\nu$ on I
	\item $\mu_1\le \mu_2$: $\mu_1$ is less than equal to $\mu_2$ on all measurable sets
	\item $\supp(\mu)$: the support of measure $\mu$, see (Remark 2)
	\item $T_\#\mu$: the push-forward measure of $\mu$ under the map $T$ (See Remark 3)
	\item Symmetric measure: a measure $\gamma$ on $I^n$ is symmetric if 
	$$\gamma(A_1,...,A_n)=\gamma(A_{\sigma(1)},...,A_{\sigma(n)}), \forall A_1,...,A_n\subset I, \sigma\in S_n$$
\end{itemize}

\noindent {\bf Remark 1}: For a function $f:\mathbb{N}\rightarrow \mathbb{R}$, $f(x)=\Theta(x^k)$ if $cx^k\le f(x)\le C x^k$ for some constants $c, C>0$. Since big $\Theta$ notation characterizes functions according to their growth rates, different functions with the same growth rate may be represented using the same big $\Theta$ notation.

\noindent {\bf Remark 2}: For a measure $\mu$ on $I$, $\supp(\mu)$ is a closed set satisfying: 1. $\mu((\supp \mu)^c)=0$. 2. for every open $G$ intersecting $\supp(\mu)$, $\mu(G\cap \supp(\mu))>0$. When $I$ is a Polish space, $\mu$ has a unique support.\footnote{Definitions and results are taken from \cite{aliprantis2007infinite}.}

\noindent {\bf Remark 3}: For Polish spaces $I, J$, a measure $\mu$ on $I$ and a measurable function $T$ from $I$ to $J$, $T_\#\mu$ is a measure on $J$ such that for any measurable $A\subset J$, $T_\#\mu(A)=\mu(T^{-1}(A))$.\footnote{This definition is taken from \cite{Villani08}.}

\section{Omitted Proofs in Section 2}

\subsection{Metrizability of The Quotient space $I^n/\sim_n$}\label{appendix:equivalenceclass}
In this subsection, we show the quotient space $I^n/\sim_n$, defined in \hyperref[subsection: groups1]{Section \ref{subsection: groups1}}, is metrizable. In this proof, we omit the subscript of the equivalence relation and write it as $\sim$. 

To start with, it is well known that the quotient space of a metric space $(I,d_0)$ is endowed with a pseudo-metric defined by
$$d([x],[y])= \inf \sum_{k=1}^n d_0(p_k,q_k)$$
where the choice set is given by the set of all finite sequence $(p_k,q_k)_{k=1}^n$ such that $p_k,q_k\in I$, $p_1=x$, $q_n=y$ and $q_k\sim p_{k+1}$ for all $k\in \mathbb{N}$. Now, we prove the pseudo-metric $d$ defined above is a metric 
on $I^n/\sim_n$. 

\begin{lemma}\label{lemma: quotientmetric}
$$d([x],[y])=\min_{y'\sim y} d_0(x,y')$$
\end{lemma}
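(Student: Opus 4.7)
The plan is to prove the lemma by a two-sided inequality, using the permutation-invariance of any reasonable product metric $d_0$ on $I^n$ (for instance $d_0((x_1,\ldots,x_n),(y_1,\ldots,y_n))=\sum_k d_I(x_k,y_k)$), together with the observation that each equivalence class in $I^n/\sim_n$ contains at most $n!$ representatives, so ``$\inf_{y'\sim y}$'' is automatically a minimum.

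For the easy direction, $d([x],[y])\le \min_{y'\sim y} d_0(x,y')$, I would fix any $y'\sim y$ and test the length-two chain $(p_1,q_1)=(x,y')$, $(p_2,q_2)=(y,y)$. The chain condition $q_1\sim p_2$ is exactly $y'\sim y$, the endpoints $p_1=x$, $q_2=y$ are correct, and the total cost is $d_0(x,y')+d_0(y,y)=d_0(x,y')$. Taking the minimum over $y'\sim y$ gives the bound.

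The interesting direction is $d([x],[y])\ge \min_{y'\sim y}d_0(x,y')$. I would take an arbitrary admissible chain $(p_k,q_k)_{k=1}^m$ and ``unfold'' it by repeatedly applying coordinate permutations. For each $k$, write $p_{k+1}=\sigma_k(q_k)$ for some coordinate permutation $\sigma_k\in S_n$, and define recursively $\tilde p_1=p_1$, $\tilde q_1=q_1$, and then $\tilde p_{k+1}=\tilde q_k$, $\tilde q_{k+1}=(\sigma_k\cdots\sigma_1)^{-1}(q_{k+1})$ — applying the same permutation to both $p_{k+1}$ and $q_{k+1}$. Because $d_0$ is invariant under simultaneous permutation of coordinates, each summand is preserved: $d_0(\tilde p_k,\tilde q_k)=d_0(p_k,q_k)$. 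The unfolded chain now telescopes ($\tilde p_{k+1}=\tilde q_k$), so by the triangle inequality in $I^n$,
$$\sum_{k=1}^m d_0(p_k,q_k)=\sum_{k=1}^m d_0(\tilde p_k,\tilde q_k)\ge d_0(\tilde p_1,\tilde q_m)=d_0(x,\tilde q_m).$$
Since $\tilde q_m$ is obtained from $q_m=y$ by a coordinate permutation, $\tilde q_m\sim y$, so the right-hand side is at least $\min_{y'\sim y}d_0(x,y')$. Taking the infimum over chains gives the claimed inequality.

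The main obstacle is mostly bookkeeping: making the unfolding argument clean so that the permutation-invariance is applied to the correct pair of points at each step. Once that is handled, the corollary that $d$ is a genuine metric on $I^n/\sim_n$ is immediate: the minimum on the right is over a finite set (at most $n!$ elements), so $d([x],[y])=0$ forces $d_0(x,y')=0$ for some $y'\sim y$, i.e.\ $x=y'\sim y$, hence $[x]=[y]$. This in turn shows $I^n/\sim_n$ is metrizable, and the quotient topology coincides with that induced by $d$, which is what the paper needs for $\mathcal G_n$ to be a Polish space on which measures can be placed.
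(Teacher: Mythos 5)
Your proof is correct and follows essentially the same route as the paper's: the same one- or two-step chain for the upper bound, and for the lower bound the same ``unfolding'' of an arbitrary chain by simultaneous coordinate permutations (your $\tilde q_k$ is exactly the paper's inductively defined $r_k$ with $r_k\sim q_k$ and $d_0(p_k,q_k)=d_0(r_{k-1},r_k)$), followed by telescoping and the triangle inequality. The only difference is presentational: you make the permutations $\sigma_k$ and the composite $(\sigma_k\cdots\sigma_1)^{-1}$ explicit where the paper leaves the induction implicit.
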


This lemma will imply that  $d$ is a metric on the quotient space: for any $y\in I^n$, there are only finite many $y'\in I^n$ such that $y\sim y'$. Therefore, $d([x],[y])=0$ implies $d_0(x,y')=0$ for some $y'\sim y$. Hence, $x\sim y$.

\begin{proof}
Firstly, by definition, we have $d([x],[y]) = d([x],[y'])\le d_0(x,y') $ for all $y'\sim y$. Therefore,
$d([x],[y]) \le \min_{y'\sim y} d_0(x,y') $. Now we prove the converse $d([x],[y]) \ge \min_{y'\sim y} d_0(x,y') $. To prove the converse, we use the distance preserving property of the permutation: the distance of two points  in $I^n$  are not changed if their indices are permuted by the same permutation. Consequently, for any sequence $(p_k,q_k)_{k=1}^n$ in the choice set, there is a sequence of $(r_k)_{k=2}^n$ in $I^n$, defined inductively, such that  
$$r_1=q_1$$
$$d_0(p_k,q_k)=d_0(r_{k-1},r_k), \forall k\ge 2$$
$$r_k\sim q_k$$
Therefore,
$$\sum_{k=1}^n d_0(p_k,q_k)= d_0(x,r_1)+\sum_{k=2}^n d_0(r_{k-1},r_k)\ge d(x,r_k)$$
Since $r_n\sim q_n=y$,
$$\sum_{k=1}^n d_0(p_k,q_k)\ge \min_{y'\sim y} d_0(x,y')$$
As the choice of $(p_k,q_k)_{k=1}^n$ in the choice set is arbitrary, the lemma is proved.
\end{proof}

\subsection{Measurability of $\mathcal{G}_{n}(A,k)$}
\label{appendix: Gnkmeasurable}
In this subsection, we show the $\mathcal{G}_{n}(A,k)$, defined in \hyperref[subsection: assignments1]{Section \ref{subsection: assignments1}}, is measurable in $I^n/\sim_n$. 

Firstly, we prove for any Borel measurable $A_1,...,A_n\subset I$, 
$$Z=\{[i_1,...,i_n]: i_k\in A_k, \forall 1\le k\in n\}$$ is Borel measurable in $I^n/\sim_n$. By definition of Borel measurability, we only need to prove the cases that $A_1,...,A_n$ are open: fix $i_1,...,i_n\in A_1\times A_2\times...\times A_n$, for any $\delta>0$ and $[j_1,...,j_n]\in I^n/\sim_n$, if $d([i_1,...,i_n], [j_1,...,j_n])<\delta$, by \hyperref[lemma: quotientmetric]{Lemma \ref{lemma: quotientmetric}}, there is a permutation $\sigma\in S_n$ such that 
$$d_0((i_1,...,i_n), (j_{\sigma(1)},...,j_{\sigma(n)}))<\delta$$
Therefore, $d_I(i_m, j_{\sigma(m)})<\delta$ for all $1\le m\le n$. Here, $d_I$ is the metric in $I$. On the other hand, all $A_m$ are open. By taking $\delta$ small enough, we have $j_{\sigma(m)}\in A_m$. Therefore, $[j_1,...,j_n]\in Z$. Consequently, $Z$ is open, thus Borel measurable.

Next, for any $1\le k\le n$ and measurable $A\subset I$, we take $A_1=...=A_k=A$ and $A_{k+1}=...=A_n=A^c$, we have
$$\mathcal{G}_{n}(A,k)=\{[i_1,...,i_n]: i_k\in A_k, \forall 1\le k\in n\}$$
Therefore, $\mathcal{G}_{n}(A,k)$ is Borel measurable.

\subsection{Measurability of The Function $c$}\label{appendix:cmeasurability}
In this subsection, we show the function $c:I^n\rightarrow  \mathbb{R}$ defined in Section \ref{subsubsection: pullback} is Borel measurable in $I^n$. Recall,
$$c(i_1,...,i_n)=\frac{1}{(n-1)!}\prod_{i\in I} n_i!$$
where $n_i=|\{k: i_k=i\}|$. 

Since there are finitely many sequence $(n_i)_{i\in I}$ such that, for all $i\in I$, $n_i\in\mathbb{N}$ and $\sum_{i\in I} n_i=n$, the range of the function $c$ is a finite set in $\mathbb{R}$. Consequently, to prove $c$ is Borel measurable, it is sufficient to show for any sequence of positive integers $m_1\ge m_2\ge ...\ge m_k\ge 1$ such that $m_1+...+m_k=n$, the set 
$$S_0(m_1,...m_k)=\{(\underbrace{i_1,...,i_1}_{m_1~\text{many}},\underbrace{i_2,...,i_2}_{m_2~\text{many}},...,\underbrace{i_k,...,i_k}_{m_k~\text{many}})\in I^n: i_1,...,i_k \text{~are~disjoint}\}$$
is Borel measurable. 

To see the sufficiency, we first define a set $S(m_1,...m_k)$ to be a subset of $I^n$ consisting of all permuted elements in the set $S_0(m_1,...m_k)$. Clearly, the cardinality of $S$ depends on the sequence $(m_1,...m_k)$. If $S_0(m_1,...m_k)$ is measurable, we have $S(m_1,...m_k)$, as a finite union of measurable sets, is measurable. Moreover, there are at most finitely many decreasing sequences of positive integers $(m_1',...,m_{l}')$ such that 
$m_1'!...m_l'!=m_1!...m_k!$ and $m_1'+...+m_l'=n$. Therefore, the set  $c^{-1}(\frac{m_1!...m_k!}{(n-1)!})\subset I^n$ is at most a finite union of the set $S(m_1',...,m_{l}')$ over all positive decreasing sequences $(m_1',...,m_{l}')$ such that 
$m_1'!...m_l'!=m_1!...m_k!$, thus is measurable.

Lastly, we prove, for any decreasing sequence $m_1\ge m_2\ge ...\ge m_k\ge 1$ such that $m_1+...+m_k=n$, $S_0(m_1,...,m_k)$ is measurable. Firstly, the set 
$$\{(\underbrace{i_1,...,i_1}_{m_1~\text{many}},\underbrace{i_2,...,i_2}_{m_2~\text{many}},...,\underbrace{i_k,...,i_k}_{m_k~\text{many}})\in I^n: i_1,...,i_k\in I\}$$
is Borel measurable as it is closed. In addition, $S_0(m_1,...,m_k)$ is obtained by subtracting a finite union of degenerate cases from this set, all of which are closed. Therefore, $S_0(m_1,...,m_k)$ is Borel measurable.

\subsection{Proof of Lemma 1} \label{appendix:shatprop}
In this subsection, we prove \hyperref[lemma:shatprop]{Lemma \ref{lemma:shatprop}}. 

Firstly, it is easy to prove that the set $K_n$ is closed in $I^{N!}/\sim_{N!}$ as it is a finite union of closed set. 

Moreover, for any permitted group size $n$, $\hat{s}_n$ is upper semi-continuous.
We prove by contradiction. Suppose there is a sequence such that $\hat{G}_k\rightarrow \hat{G}$ in $I^{N!}/\sim_{N!}$ such that $\limsup_{\hat{G}_k\rightarrow \hat{G}} \hat{s}_n(\hat{G}_k)>\hat{s}_n(\hat{G})$. Firstly, when $\hat{G}\notin K_n$, we have $\hat{G}_k\notin K_n$ for all large enough $k$. Therefore, $\limsup_{\hat{G}_k\rightarrow \hat{G}} \hat{s}_n(\hat{G}_k)=\hat{s}_n(\hat{G})=0$, which yields contradiction. Secondly, $\hat{G}\in K_n$. Then $\hat{G}_k\in K_n$ for infinitely many large k, since otherwise the limsup term will be zero. But $\hat{s}_n\circ P_n=s_n$ in $K_n$. By the upper semi-continuity of $s_n$, we have
$$\limsup_{\hat{G}_k\rightarrow \hat{G}} \hat{s}_n(\hat{G}_k) \le \hat{s}_n(\hat{G})$$
Contradiction. Recall that $\hat{s}$ is defined by
$$\hat{s}=N\max\left(\frac{1}{N'}\hat{s}_{N'},...,\frac{1}{N}\hat{s}_N\right)$$ 
Therefore, $\hat{s}$ is upper semi-continuous.

On the other hand, by \hyperref[A2]{Assumption (A2)}, there is a lower semi-continuous function $a\in L^1(I,\mu)$ such that
$$s_n([i_1,...,i_n])\le a(i_1)+...+a(i_n)$$
for all $N'\le n\le N$, $i_1,...,i_n\in I$. Consequently, 
$$\hat{s}_n(P_{n}^{-1}([i_1,...,i_n]))\le a(i_1)+...+a(i_n)$$
Moreover, since $s_n$ is non-negative, $a$ is non-negative everywhere. Therefore, for any $[i_1,...,i_{N!}]\in R_0$,
$$\hat{s}([i_1,...,i_{N!}])=0\le \sum_{k=1}^{N!} a(i_k)$$
For any  $[\underbrace{i_1,...,i_1}_{N!/n \text{~many}},...,\underbrace{i_n,...,i_n}_{N!/n \text{~many}}]\in R_n\subset K_n$, by the property of $R_n$, we have 
$$\hat{s}([i_1,...,i_1,...,i_n,...,i_n])=\frac{N}{n}\hat{s}_n([i_1,...,i_1,...,i_n,...,i_n])=\frac{N}{n}s_n([i_1,...,i_n])\le \frac{N}{n}\sum_{k=1}^{n} a(i_k)$$
Therefore, we define $\hat{a}=a/ (N-1)!$, we have,
$$\hat{s}([i_1,...,i_{N!}])\le \sum_{k=1}^{N!}\hat{a}(i_k)$$

\subsection{Proof of Lemma 3}\label{appendix: gamma}
In this subsection, we prove \hyperref[lemma:gamma]{Lemma \ref{lemma:gamma}}.
For any $S\subset I^n/\sim_n$, 
	\begin{align*}
	(Q_n)_{\#}\gamma_n(S)&=\gamma_n(Q_n^{-1}(S))\\
	&=\int_{Q_n^{-1}(S)} c d(Q_n^\#\tau_n)\\
	&=\sum_{\beta\in\mathcal{A}} \int_{Q_n^{-1}(S)\cap J_\beta} c d(Q_n^\#\tau_n)
	\end{align*}
	For each fixed $\beta$, by definition, 
	$J_\beta=J_{m,\sigma}$ for some $m\in\mathcal{M}$ and $\sigma\in S_n$. Therefore, $c=\frac{m_1!...m_k!}{(n-1)!}$ in $J_\beta$ and we have
	\begin{align*}
	\int_{Q_n^{-1}(S)\cap J_\beta} c d(Q_n^\#\tau_n)&= \frac{m_1!...m_k!}{(n-1)!}Q_n^\#\tau_n (Q_n^{-1}(S)\cap J_\beta)\\
	&=\frac{m_1!...m_k!}{(n-1)!}\tau_n(Q_n(Q_n^{-1}(S)\cap J_\beta))
	\end{align*}
	On the other hand, there are 
	$\frac{n!}{m_1!...m_k!}$ many $\alpha\in\mathrm{A}$ such that $Q_n(Q_n^{-1}(S)\cap J_\beta)=Q_n(Q_n^{-1}(S)\cap J_\alpha)$. We use  $[\beta]\subset \mathcal{A}$ to denote this collection of such indices. By definition, $\mathrm{A}$ can be partitioned into $|\mathcal{M}|$ components $\{[\beta_m]\}_{m\in\mathcal{M}}$ and
	$$\bigcup_{m\in\mathcal{M}} Q_n(Q_n^{-1}(S)\cap J_{\beta_m})=S$$
	where $\beta_m$ is a element in $[\beta_m]$. Therefore,
	\begin{align*}
	(Q_n)_{\#}\gamma_n(S)&=\sum_{\beta\in\mathcal{A}} \int_{Q_n^{-1}(S)\cap J_\beta} c d(Q_n^\#\tau_n)\\
	&=\sum_{m\in\mathcal{M}}\sum_{\beta\in[\beta_m]}\int_{Q_n^{-1}(S)\cap J_\beta} c d(Q_n^\#\tau_n)\\
	&=\sum_{m\in\mathcal{M}} \frac{n!}{m_1!...m_k!} \cdot \frac{m_1!...m_k!}{(n-1)!}\tau_n(Q_n(Q_n^{-1}(S)\cap J_{\beta_m}))\\
	&=\sum_{m\in\mathcal{M}} n \tau_n(Q_n(Q_n^{-1}(S)\cap J_{\beta_m}))\\
	&= n \tau_n(S)
	\end{align*}

\subsection{Proof of Lemma 4}\label{appendix:gammaprop}
	In this subsection, we prove \hyperref[lemma:gammaprop]{Lemma \ref{lemma:gammaprop}}. 
	
	Firstly, we prove the measure $\gamma_n$ is symmetric. For any permutation $\sigma\in S_n$, and any measurable sets $A_1,...,A_n\subset I$, we have
	\begin{align*}
	\gamma_n(A_{\sigma(1)}\times...\times A_{\sigma(n)})&= \int_{A_{\sigma(1)}\times...\times A_{\sigma(n)}} c d(Q_n^{\#}\tau_n)\\
	&=\sum_{\alpha\in\mathrm{A}} c(\alpha) Q_{n}^{\#}\tau_n(A_{\sigma(1)}\times...\times A_{\sigma(n)}\cap J_\alpha)\\
	&=\sum_{\alpha\in\mathrm{A}} c(\alpha) \tau_n(Q_n(A_{\sigma(1)}\times...\times A_{\sigma(n)}\cap J_\alpha))\\
	&=\sum_{\alpha\in\mathrm{A}} c(\alpha) \tau_n(Q_n(A_{1}\times...\times A_{n}\cap J_\alpha))\\
	&=\sum_{\alpha\in\mathrm{A}} c(\alpha) Q_{n}^{\#}\tau_n(A_{1}\times...\times A_{n}\cap J_\alpha)\\
	&=\gamma_n(A_{1}\times...\times A_{n})
	\end{align*}
	where $\{c(\alpha)\}_{\alpha\in\mathrm{A}}$ are a collection of constant determined by the function $c$ on the set $J_\alpha$.

	The second property will be proved by using the symmetric property of $\gamma_n$ and \hyperref[lemma:gamma]{Lemma \ref{lemma:gamma}}. By \hyperref[lemma:gamma]{Lemma \ref{lemma:gamma}}, we have 
	$$\sum_{k=0}^n k \tau_n(\mathcal{G}_n(A,k))=\sum_{k=0}^n \frac{k}{n} (Q_n)_{\#}\gamma_n(\mathcal{G}_n(A,k))=\sum_{k=0}^n \frac{k}{n} \gamma_n(Q_n^{-1}(\mathcal{G}_n(A,k)))$$
	In particular,
	$Q_n^{-1}(\mathcal{G}_n(A,k))$ is the union of Cartesian products of $k$ sets $A$ and $n-k$ sets $A^c$. There are $\frac{n!}{k!(n-k)!}$ many such Cartesian products. Therefore, since $\gamma_n$ is symmetric, 
	$$\gamma_n(Q_n^{-1}(\mathcal{G}_n(A,k)))=\frac{n!}{k!(n-k)!}\gamma_n(\underbrace{A\times...\times A}_{k~\text{many}}\times \underbrace{A^c\times...\times A^c}_{n-k~\text{many}})$$
	On the other hand, there are $\frac{(n-1)!}{(k-1)!(n-k)!}$  Cartesian products of $k$ sets $A$ and $n-k$ sets $A^c$ in the form $A\times S_2\times ...\times S_n$ where $S_j\in \{A,A^c\}$. We use $\mathcal{S}_k$ to denote the collection of such Cartesian products. i.e.
	$$\mathcal{S}_k=\{A\times S_2\times ...\times S_n: S_j\in \{A,A^c\}, |\{j: S_j=A\}|=k-1\}$$
	Therefore, again by the symmetry of the measure $\gamma_n$, 
	$$\gamma_n(\cup_{S\in\mathcal{S}_k}S )=\frac{(n-1)!}{(k-1)!(n-k)!}\gamma_n(\underbrace{A\times...\times A}_{k~\text{many}}\times \underbrace{A^c\times...\times A^c}_{n-k~\text{many}})$$
	Consequently,
	$$\sum_{k=0}^n k \tau_n(\mathcal{G}_n(A,k))=\sum_{k=0}^n \frac{k}{n} \gamma_n(Q_n^{-1}(\mathcal{G}_n(A,k)))=\sum_{k=1}^n\gamma_n(\cup_{S\in\mathcal{S}_k}S)=\gamma_n(\cup_{S\in\cup_{k=1}^n\mathcal{S}_k}S)$$
	Note that,
	$$\cup_{k=1}^n\mathcal{S}_k=\{A\times S_2\times ...\times S_n: S_j\in \{A,A^c\}\}$$
	Thus, 
	$$\cup_{S\in\cup_{k=1}^n\mathcal{S}_k}S=A\times I\times...\times I$$
	In conclusion,
	$$\sum_{k=0}^n k \tau_n(\mathcal{G}_n(A,k))=\gamma_n(A\times I\times...\times I)$$

\subsection{Proof of Lemma 5} \label{appendix:duality}
In this subsection, we prove \hyperref[lemma: duality]{Lemma \ref{lemma: duality}}.  We start by stating and proving the following lemma:
\begin{lemma}\label{lemma: symmetricS}
	If $S$ is a symmetric function,
	$$\sup_{\hat{\gamma}\in \hat{\Gamma}_{sym}} \int_{I^{N!}} S d\hat{\gamma}=\sup_{\hat{\gamma}\in \hat{\Gamma}} \int_{I^{N!}} S d\hat{\gamma}$$ 
\end{lemma}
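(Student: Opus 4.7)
The plan is to prove the two inequalities separately. The direction $\sup_{\hat{\gamma}\in\hat{\Gamma}_{sym}}\le \sup_{\hat{\gamma}\in\hat{\Gamma}}$ is immediate from the inclusion $\hat{\Gamma}_{sym}\subset\hat{\Gamma}$, since any symmetric measure with all marginals equal to $\mu$ is in particular a measure with all marginals equal to $\mu$.

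For the converse direction, my strategy is to construct, from any $\hat{\gamma}\in \hat{\Gamma}$, a symmetrization $\hat{\gamma}_{sym}\in\hat{\Gamma}_{sym}$ with the same value of the integral. Concretely, for each permutation $\sigma\in S_{N!}$, let $T_\sigma:I^{N!}\to I^{N!}$ be the coordinate permutation $T_\sigma(i_1,\dots,i_{N!})=(i_{\sigma(1)},\dots,i_{\sigma(N!)})$, and define
$$\hat{\gamma}_{sym}=\frac{1}{(N!)!}\sum_{\sigma\in S_{N!}} (T_\sigma)_{\#}\hat{\gamma}.$$
I would then verify three things: (i) $\hat{\gamma}_{sym}$ is symmetric, which follows because pre-composing $T_\sigma$ with another permutation just reindexes the sum; (ii) every marginal of $\hat{\gamma}_{sym}$ equals $\mu$, which follows because the $k$-th marginal of $(T_\sigma)_\#\hat{\gamma}$ equals the $\sigma(k)$-th marginal of $\hat{\gamma}$, which is $\mu$ by the assumption $\hat{\gamma}\in\hat{\Gamma}$, and averaging over $\sigma$ preserves this; (iii) $\int_{I^{N!}} S\,d\hat{\gamma}_{sym}=\int_{I^{N!}} S\,d\hat{\gamma}$, which follows from the symmetry of $S$: by the change of variables formula, $\int S\,d(T_\sigma)_{\#}\hat{\gamma}=\int S\circ T_\sigma\,d\hat{\gamma}=\int S\,d\hat{\gamma}$, and then averaging over $\sigma$ gives the claim.

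Taken together, for every $\hat{\gamma}\in\hat{\Gamma}$ we have produced $\hat{\gamma}_{sym}\in\hat{\Gamma}_{sym}$ with the same integral, so $\sup_{\hat{\gamma}\in\hat{\Gamma}_{sym}} \int_{I^{N!}} S\,d\hat{\gamma}\ge \sup_{\hat{\gamma}\in\hat{\Gamma}} \int_{I^{N!}} S\,d\hat{\gamma}$, completing the proof. I do not anticipate a serious obstacle: the only subtlety worth flagging is ensuring that the integrability required to apply change of variables is available, which is handled by the bound \eqref{eqn: bigSbdd} on $S$ together with the marginal condition defining $\hat{\Gamma}$.
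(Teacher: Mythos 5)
Your proposal is correct and matches the paper's own argument essentially verbatim: one direction from the inclusion $\hat{\Gamma}_{sym}\subset\hat{\Gamma}$, and the other by averaging the push-forwards $(T_\sigma)_{\#}\hat{\gamma}$ over all $\sigma\in S_{N!}$ and using the symmetry of $S$ via change of variables. In fact you spell out the verification of symmetry and of the marginal condition, which the paper explicitly omits as routine.
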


\begin{proof}
	Since $\hat{\Gamma}_{sym}\subset\hat{\Gamma} $, we have 
	$$\sup_{\hat{\gamma}\in \hat{\Gamma}_{sym}} \int_{I^{N!}} S d\hat{\gamma}\le\sup_{\hat{\gamma}\in \hat{\Gamma}}\int_{I^{N!}} S d\hat{\gamma}$$ 
	Conversely, for any $\gamma_0\in \hat{\Gamma}$, we define
	$$\hat{\gamma}=\dfrac{1}{(N!)!}\sum_{\sigma\in S_{N!}}(f_\sigma)_\#\gamma_0$$
	where $f_{\sigma}: I^{N!}\rightarrow I^{N!}$ is defined by
	$f_{\sigma}(i_1,...,i_{N!})=(i_{\sigma(1)},...,i_{\sigma(N!)})$.

	We claim $\hat{\gamma}$ is in $\hat{\Gamma}_{sym}$ and
	$$\int_{I^{N!}} S d\hat{\gamma}= \int_{I^{N!}} S d\gamma_0$$ 
	We omit the routine work to check $\hat{\gamma}$ is in $\hat{\Gamma}_{sym}$. Since $S$ is symmetric, ${\gamma}$ induces the same total welfare as $\gamma$:
	\begin{align*}
	\int_{I^{N!}} S d\hat{\gamma}=& \dfrac{1}{(N!)!}\sum_{\sigma\in S_{N!}}\int_{I^{N!}} S d(f_\sigma)_\#\gamma_0=\dfrac{1}{(N!)!}\sum_{\sigma\in S_{N!}}\int_{I^{N!}} s\circ f_\sigma d\gamma_0\\
	= & \frac{1}{(N!)!} \sum_{\sigma\in S_{N!}}\int_{I^{N!}} S d\gamma_0=\int_{I^{N!}} S d\gamma_0
	\end{align*}
	Since $\gamma_0\in\hat{\Gamma}$ is arbitrary,
	$$\sup_{\hat{\gamma}\in \hat{\Gamma}_{sym}} \int_{I^{N!}} S d\hat{\gamma}\ge\sup_{\hat{\gamma}\in \hat{\Gamma}}\int_{I^{N!}} S d\hat{\gamma}$$ 
\end{proof}

By the duality theorem for multi-marginal transport problem \cite{Kellerer84}, we have
$$\sup_{\hat{\gamma}\in \hat{\Gamma}} \int_{I^{N!}} \frac{S}{N} d\hat{\gamma}= \inf_{(u_j)_{j=1}^{N!}\in \tilde{\mathcal{U}}} \sum_{j=1}^{N!} \int_I u_j d\mu$$
where 
$$\tilde{\mathcal{U}}=\left\{(u_1,..., u_{N!}) \in (L^1(I,\mu))^{N!}: \sum_{j=1}^{N!} u_j(i_j)\ge \frac{S(i_1,...,i_{N!})}{N}, \forall i_1,i_2,...,i_{N!}\in I \right\}$$ and the infimum could be achieved. Moreover, by \hyperref[lemma: symmetricS]{Lemma \ref{lemma: symmetricS}},
$$\sup_{\hat{\gamma}\in \hat{\Gamma}_{sym}} \int_{I^{N!}} \frac{S}{N} d\hat{\gamma}=\sup_{\hat{\gamma}\in \hat{\Gamma}} \int_{I^{N!}} \frac{S}{N} d\hat{\gamma}$$ 

So it remains to show 
\begin{equation}\label{eqn: syminf}
\inf_{(u_j)_{j=1}^{N!}\in \tilde{\mathcal{U}}} \sum_{j=1}^{N!} \int_I u_j d\mu=\inf_{u\in \hat{\mathcal{U}}} \int_I ud\mu
\end{equation}
and the infimum on the right hand side could be achieved.

Firstly, take any $u\in\hat{\mathcal{U}}$, we have 
$(u,u,...,u)\in \tilde{\mathcal{U}}$. Therefore,
$$\inf_{(u_j)_{j=1}^{N!}\in \tilde{\mathcal{U}}} \sum_{j=1}^{N!} \int_I u_j d\mu\le\inf_{u\in \hat{\mathcal{U}}} \int_I ud\mu$$
Conversely, for any minimizer $(u_j^*)_{j=1}^{N!}\in \tilde{\mathcal{U}}$ solving the left hand side of \hyperref[eqn: syminf]{Equation \ref{eqn: syminf}}, we define 
\begin{equation}\label{eqn: appendixu}
u=\sum_{n=1}^{N!} u_n^*
\end{equation}
Then, for any $i_1,...,i_{N!}\in I$, we have, 
\begin{align*}
\sum_{j=1}^{N!} u(i_j)& =\sum_{j=1}^{N!}\sum_{n=1}^{N!} u_n^*(i_j) =\sum_{j=1}^{N!}\left[ \sum_{k=0}^{{N!}-1} u_j^*(i_{j+k})\right]\\
& \ge \frac{1}{N}\left[S(i_1,..., i_{N!}) + S(i_2,..., i_{N!}, i_1)+ ... + S(i_{N!}, i_1, ...., i_{N!-1})\right] \\
& = (N-1)!S(i_1,...,i_{N!})
\end{align*}
where $i_{N!+k}$ is defined to be $i_{k}$ for $1\le k\le N!$ and the last equality is implied by the symmetry of $S$. Moreover, as a finite sum of upper semi-continous integrable functions, $u\in L^1(I,\mu)$.  Thus, $u\in \mathcal{U}$, and we have  
$$\inf_{(u_j)_{j=1}^{N!}\in \tilde{\mathcal{U}}} \sum_{j=1}^{N!} \int_I u_j d\mu\ge\inf_{u\in \hat{\mathcal{U}}} \int_I ud\mu$$
In conclusion,  
$$\inf_{(u_j)_{j=1}^{N!}\in \tilde{\mathcal{U}}} \sum_{j=1}^{N!} \int_I u_j d\mu=\inf_{u\in \hat{\mathcal{U}}} \int_I ud\mu$$
and $u=\sum_{n=1}^{N!} u_n^*$ solves the minimization problem on the right hand side of \hyperref[eqn: syminf]{Equation \ref{eqn: syminf}}.

\subsection{Proof of Lemma 6} \label{appendix: Uhat }
In this subsection, we prove \hyperref[lemma: Uhat]{Lemma \ref{lemma: Uhat}}.
Firstly, take any $u\in \hat{\mathcal{U}}$ and any $[i_1,...,i_1,i_2,...,i_2,...,i_n,...,i_n]\in K_n$, We have
\begin{align*}
\sum_{k=1}^{n} \frac{N!}{n} u(i_k)\ge &(N-1)!\hat{s}([i_1,...,i_1,i_2,...,i_2,...,i_n,...,i_n])\\
\ge &\frac{N!}{n}\hat{s}_n([i_1,...,i_1,i_2,...,i_2,...,i_n,...,i_n])\\
=&\frac{N!}{n}s_n([i_1,...,i_n])
\end{align*}
Therefore, $u\in \mathcal{U}_n$. Since $n$ is arbitrary, $u\in \mathcal{U}=\cap_{n=N'}^N\mathcal{U}_n$. Conversely, take any $u\in \mathcal{U}$, we check $u\in \hat{\mathcal{U}}$ region by region. Firstly, take $[i_1,...,i_{N!}]\in R_0$, it is clear that $$\sum_{k=1}^{N!} u(i_k)\ge 0= (N-1)!\hat{s}([i_1,...,i_{N!}])$$
Moreover, for any permitted group size $N'\le n\le N$, we take $[i_1,...,i_1,i_2,...,i_2,...,i_n,...,i_n]\in R_n$,
\begin{align*}
\sum_{k=1}^n \frac{N!}{n} u(i_k)&= \frac{N!}{n} \sum_{k=1}^n  u(i_k)\\
&\ge \frac{N!}{n}   s_n([i_1,...,i_n])\\
&= \frac{N!}{n}  \hat{s}_n([i_1,...,i_1,i_2,...,i_2,...,i_n,...,i_n])\\
&= (N-1)! \hat{s}([i_1,...,i_1,i_2,...,i_2,...,i_n,...,i_n])
\end{align*}
Since $\{R_0,R_{N'},...,R_{N}\}$ is a partition of $I^{N!}/\sim_{N!}$, we have $u\in \hat{ \mathcal{U}}$.

\subsection{Proof of Theorem 2} \label{appendix: ctsmin}

Now, we finish the \hyperref[proof: duality]{proof of duality theorem} by proving there is a continuous minimizer for the welfare minimization problem
\begin{equation} \label{eqn: min}
\inf_{u\in {\mathcal{U}}} \int_{I} u d\mu
\end{equation}
where $\mathcal{U}=\{u\in L^1(I,\mu): \sum_{i\in G} u(i)\ge s(G), \forall G\in \mathcal{G}\}$.

Firstly, by \hyperref[lemma: duality]{Lemma \ref{lemma: duality}} and \hyperref[lemma: Uhat]{Lemma \ref{lemma: Uhat}}, there is a minimizer $u^*\in L^1(I,\mu)$ minimizing \hyperref[eqn: min]{Equation \ref{eqn: min}}. On the other hand, by \hyperref[lemma: maxachieved]{Lemma \ref{lemma: maxachieved}}, there is a maximizer $\tau^*$ for the welfare maximization problem. Define $\mu_n=\tau^*(\mathcal{G}_n)$. That is, $\mu_n$ is the type distribution of players who are assigned to some $n$-person group under assignment $\tau^*$. Clearly, $\sum_{n=N'}^N \mu_n=\mu$.

We claim that, for any permitted group size $n$, there is a continuous function $v_n\in C(I)\cap \mathcal{U}_n$ such that $v_n\le u^*$ on $I$ and $\int_I u^* d\mu_n=\int_I v_n d\mu_n$. We note, if the claim is proved to be true, we can take $v=\max(v_{N'},...,v_N)$. It is clear, as the maximum of finitely many continuous functions, $v$ is continuous. Moreover, as $v_n\in  \mathcal{U}_n$, $v$ is in $\mathcal{U}$. By $v\le u^*$,  $v$ is also a minimizer of \hyperref[eqn: min]{Equation \ref{eqn: min}}.

Now, we fixed a permitted group size $n$ and prove the claim. Our proof is based on the trick used in the proof of Theorem 4.1.1 in \cite{pass2011structural}. This trick can also be found in \cite{gangbo1998optimal} and \cite{carlier2008optimal}.

We define $\gamma^*_n$ by $\tau^*_n$ using the change of variable trick in \hyperref[eqn: defgamma]{Equation \ref{eqn: defgamma}}. By \hyperref[lemma:gamma]{Lemma \ref{lemma:gamma}},
$$\int_{I^n} \frac{\tilde{s}_n}{n} d\gamma^*_n=\int_{\mathcal{G}_n} {s}_n d\tau^*_n$$
where $\tilde{s}_n=s\circ Q_n$ is a continuous function on $I^n$. Moreover, by the proof of equivalence lemma in \hyperref[appendix: equivalencerelation]{Appendix \ref{appendix: equivalencerelation}}, $\sum_{i\in G} u^*(i)=s_n(G)$ for $\tau_n^*$-almost all $G\in \mathcal{G}_n$. Therefore, 
$$\int_{I^n} \tilde{s}_n d\gamma^*_n=n\int_I u^* d\mu_n$$

We define $n$ functions $(w_1,...,w_n)$ as follows:
$$w_1(i_1)=\sup_{i_k\in I, k\ge 2}\left(\tilde{s}_n(i_1,...,i_k)- \sum_{k=2}^n u^*(i_k)  \right)$$
Inductively, for $m\ge 2$, 
$$w_m(i_m)=\sup_{i_k\in I, k\neq m}\left(\tilde{s}_n(i_1,...,i_k)- \sum_{k=1}^{m-1} w_k(i_k) -\sum_{k=m+1}^n u^*(i_k)  \right)$$
Inductively, by $\sum_{k=1}^{n} u^*(i_k)\ge \tilde{s}_n(i_1,...,i_k)$ for all $i_1,...,i_k\in I$, we have 
\begin{equation}\label{eqn: wleu}
u^*(x_k)\ge w_k(x_k)
\end{equation}
for all $1\le k\le n$, $x_k\in I$. Moreover, recall,
$$w_n(i_n)=\sup_{i_k\in I, k\neq n}\left(\tilde{s}_n(i_1,...,i_n)- \sum_{k=1}^{n-1} w_k(i_k) \right)$$
we have for all $1\le m\le n$,
$$w_m(i_m)\ge \sup_{i_k\in I, k\neq m}\left(\tilde{s}_n(i_1,...,i_n)- \sum_{k\neq m} w_k(i_k) \right)$$
The definition of $w_{m-1}$ implies for all $i_1,...,i_n\in I$,
$$u^*(i_m)\ge s_n(i_1,...,i_n)-\sum_{k=1}^{m-1} u^*(i_k)-\sum_{k=m+1}^{n} w_k(i_k)$$
Therefore,
\begin{align*}
w_m(x_m) &= \sup_{i_k\in I, k\neq m}\left(\tilde{s}_n(i_1,...,i_k)- \sum_{k=1}^{m-1} w_k(i_k) -\sum_{k=m+1}^n u^*(i_k)  \right)\\
& \le \sup_{i_k\in I, k\neq m}\left(\tilde{s}_n(i_1,...,i_n)- \sum_{k\neq m} w_k(i_k) \right)
\end{align*}
Consequently, for all $1\le m\le n$ and all $i_1,...,i_k\in I$,
$$w_m(i_m)=\sup_{i_k\in I, k\neq m}\left(\tilde{s}_n(i_1,...,i_n)- \sum_{k\neq m} w_k(i_k) \right)$$
That is, $(w_1,...,w_n)$ are $\tilde{s}_n$-conjugate. 
Since $\tilde{s}_n$ is continuous and $I$ is compact, $w_1,...,w_n$ are continuous. 
Define $v_n=\frac{1}{n}(w_1+...+w_n)\in C(I)$. By \hyperref[eqn: wleu]{Equation \ref{eqn: wleu}}, $v_n\le u^*$. By the symmetry of $\tilde{s}_n$,  $v_n\in \mathcal{U}_n$. Moreover, as all marginals of $\gamma_n^*$ are $\mu_n$,
$$n\int_I u^* d\mu_n\ge\sum_{k=1}^n \int_I w_k d\mu_n
\ge \int_{I^n} \tilde{s}_n d\gamma_n^*$$
But $\int_{I^n} \tilde{s}_n d\gamma_n^*=n\int_I u^* d\mu_n$. Thus, $\int_I u^* d\mu_n= \int_I v_n d\mu_n$. Therefore, the claim is proved.

\subsection{Proof of Lemma 7}\label{appendix: maxachieved}
In this subsection, we prove \hyperref[lemma: maxachieved]{Lemma \ref{lemma: maxachieved}}. By \hyperref[prop:linearprogrammingtotransport]{Proposition \ref{prop:linearprogrammingtotransport}} and \hyperref[lemma: symmetricS]{Lemma \ref{lemma: symmetricS}}, we have 
	$$\sup_{\tau\in \Tau} \sum_{n=N'}^N \int_{\mathcal{G}_n} s_n d\tau_n=\sup_{\hat{\gamma}\in \hat{\Gamma}_{sym}} \int_{I^{N!}} \frac{S}{N} d\hat{\gamma}=\sup_{\hat{\gamma}\in \hat{\Gamma}} \int_{I^{N!}} \frac{S}{N} d\hat{\gamma}$$
	
	We first show there is a maximizer $\hat{\gamma}$ solving the right maximization problem, then we construct a maximizer $\tau$ for the left maximization problem.
	
	To show the existence of a maximizer $\hat{\gamma}$, we show $\hat{\Gamma}$ is compact and the map $\hat{\gamma}\rightarrow \int_{I^{N!}} S d\hat{\gamma}$ is upper semi-continuous. 

	\begin{lemma}\label{lemma: compact}
		$\hat{\Gamma}$ is compact in the weak-* topology.
	\end{lemma}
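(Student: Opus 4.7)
The plan is to show $\hat{\Gamma}$ is both relatively weak-$*$ compact and weak-$*$ closed, then conclude.

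For relative compactness, since $I$ is compact the product $I^{N!}$ is compact metric, and $\mathcal{M}(I^{N!})=C(I^{N!})^{*}$. Taking $A=I$ in the marginal condition shows that every $\hat{\gamma}\in\hat{\Gamma}$ has total mass $\hat{\gamma}(I^{N!})=\mu(I)$, so $\hat{\Gamma}$ lies in the closed ball of radius $\mu(I)$ in the dual. By Banach--Alaoglu this ball is weak-$*$ compact, so $\hat{\Gamma}$ is relatively weak-$*$ compact. For the Polish-space extension discussed in the Extensions paragraph, one would reach the same conclusion via Prokhorov's theorem: given $\varepsilon>0$, tightness of $\mu$ yields a compact $K_\varepsilon\subset I$ with $\mu(I\setminus K_\varepsilon)<\varepsilon/N!$, and a union bound over the coordinate projections $\pi_j:I^{N!}\to I$ gives
$$\hat{\gamma}(I^{N!}\setminus K_\varepsilon^{N!})\le \sum_{j=1}^{N!}\hat{\gamma}(\pi_j^{-1}(I\setminus K_\varepsilon))=N!\,\mu(I\setminus K_\varepsilon)<\varepsilon$$
uniformly over $\hat{\gamma}\in\hat{\Gamma}$.

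For closedness, I would rewrite the marginal condition in a form manifestly continuous in the weak-$*$ topology. The condition $\hat{\gamma}(A\times I\times\cdots\times I)=\mu(A)$ for all Borel $A\subset I$ is the statement that the pushforward $(\pi_1)_{\#}\hat{\gamma}$ equals $\mu$; by the Riesz representation theorem on the compact metric space $I$ this is equivalent to the family of equalities $\int_{I^{N!}} g(i_1)\,d\hat{\gamma}=\int_I g\,d\mu$ for every $g\in C(I)$, and analogously for the other $N!-1$ coordinates. Since $g\circ\pi_j\in C(I^{N!})$, each such equality defines a weak-$*$ closed affine slice of $\mathcal{M}_+(I^{N!})$. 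The set $\hat{\Gamma}$ is the intersection of these slices over $j\in\{1,\dots,N!\}$ and $g\in C(I)$, so it is weak-$*$ closed.

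Combining the two parts, $\hat{\Gamma}$ is a weak-$*$ closed subset of a weak-$*$ compact set, hence weak-$*$ compact. I do not anticipate any real obstacle: this is a textbook Banach--Alaoglu argument (or Prokhorov in the Polish case). The only mildly delicate point is translating the Borel marginal condition into the continuous-test-function form needed for closedness, which is immediate from Riesz representation.
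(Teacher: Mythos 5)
Your proof is correct and follows the same two-step skeleton as the paper: establish relative weak-$*$ compactness of $\hat{\Gamma}$ and then show it is closed. For the first step the paper invokes Ulam's theorem to get tightness and then Prokhorov's theorem, which is exactly your fallback argument for the Polish case; in the compact case your Banach--Alaoglu route is an equally valid (and arguably cleaner) substitute, since uniform boundedness of the total masses is immediate from the marginal condition. The one place where your write-up genuinely diverges, and improves on the paper, is the closedness step. The paper asserts that for a weak-$*$ convergent sequence $\hat{\gamma}_k\rightarrow\hat{\gamma}$ one has $\hat{\gamma}(S)=\lim_k\hat{\gamma}_k(S)$ for \emph{all} measurable $S$, and then evaluates on cylinder sets $A\times I\times\cdots\times I$. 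Taken literally this is false: weak-$*$ convergence only controls $\hat{\gamma}(S)$ for sets whose boundary is $\hat{\gamma}$-null (Portmanteau), as the standard example $\delta_{1/k}\rightarrow\delta_0$ shows. Your reformulation of the marginal constraint as the family of identities $\int_{I^{N!}}g\circ\pi_j\,d\hat{\gamma}=\int_I g\,d\mu$ for $g\in C(I)$, each of which cuts out a weak-$*$ closed affine slice, is the rigorous way to close this gap, and the equivalence with the Borel-set formulation via Riesz representation (or a monotone class argument) is exactly the missing bridge. So: same decomposition, same compactness mechanism, but your closedness argument is the one that actually withstands scrutiny.
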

	\begin{proof}
	 Since $I$ is Polish space, by Ulam's theorem, $\{\mu/\norm{\mu}\}$ is tight in $I$. i.e. for any $\varepsilon>0$, there exists a compact set $K_\varepsilon$, such that $\mu(I-K_\varepsilon)<\varepsilon$. Hence, $\hat{\Gamma}$ is uniformly tight as,  for any $\hat{\gamma}\in \hat{\Gamma}$,
	$$\hat{\gamma}(I^{N!}-K_\epsilon^{N!})\le N!\mu(I-K_\varepsilon)\le N!\varepsilon $$
	Moreover, $\hat{\Gamma}$ is uniformly bounded as, for any $\hat{\gamma}\in \hat{\Gamma}$, $\norm{\hat{\gamma}}=\norm{\mu}$.
	By Prokhorov's theorem (Theorem 8.6.2 in \cite{Bogachev2}), $\hat{\Gamma}$ has compact closure in weak* topology. 
	
	Thus, to show $\hat{\Gamma}$ is compact, it is sufficient to show it is closed: for any convergent sequence $(\hat{\gamma}_k)_{k\in\mathbb{N}}$ in $\hat{\Gamma}$ such that $\hat{\gamma}_k\rightarrow \hat{\gamma}$, we have, for all measurable $S\subset I^{N!}$,
	$$\hat{\gamma}(S)=\lim_{k\rightarrow \infty} \hat{\gamma}_k(S)$$
	For any measurable set $A\subset I$, by taking $S$ to be the Cartesian product of one set A and $N!-1$ sets $I$, we can show all the marginals of $\hat{\gamma}$ is $\mu$. Therefore, $\hat{\gamma}\in \hat{\Gamma}$. i.e. $\hat{\Gamma}$ is closed.
\end{proof}

\begin{lemma}\label{lemma: usc}
The functional $\hat{\gamma}\rightarrow \int_{I^{N!}} S d\hat{\gamma}$ is upper semi-continuous in the weak-* topology.
\end{lemma}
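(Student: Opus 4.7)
The plan is to exploit the bound from Lemma \ref{lemma:shatprop} to rewrite $\int S\, d\hat{\gamma}$ as a constant minus the integral of a non-negative lower semi-continuous function, and then apply the standard Portmanteau-type fact for non-negative LSC functions. Define $T:I^{N!}\to [0,\infty)$ by
$$T(i_1,\ldots,i_{N!}) = \sum_{k=1}^{N!} \hat{a}(i_k) - S(i_1,\ldots,i_{N!}).$$
By Lemma \ref{lemma:shatprop}, $\hat{a}$ is LSC and $S$ is USC, so $T$ is LSC; and $T\ge 0$ by the same bound. Moreover $T$ is integrable against any $\hat\gamma\in\hat\Gamma$ since $0\le S\le \sum_k\hat{a}(i_k)$ forces $0\le T\le \sum_k\hat{a}(i_k)$, which is integrable (see below).

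The key observation is that for every $\hat\gamma\in\hat\Gamma$ all marginals are $\mu$, hence
$$\int_{I^{N!}}\sum_{k=1}^{N!}\hat{a}(i_k)\,d\hat\gamma(i_1,\ldots,i_{N!})=N!\int_I \hat{a}\,d\mu,$$
which is a fixed finite constant (using $\hat a\in L^1(I,\mu)$). Therefore
$$\int_{I^{N!}} S\,d\hat\gamma = N!\int_I \hat{a}\,d\mu - \int_{I^{N!}} T\,d\hat\gamma,$$
so upper semi-continuity of $\hat\gamma\mapsto \int S\,d\hat\gamma$ on $\hat\Gamma$ is equivalent to lower semi-continuity of $\hat\gamma\mapsto \int T\,d\hat\gamma$.

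To establish the latter, I would use that on the metric space $I^{N!}$ any non-negative LSC function is the pointwise supremum of an increasing sequence of non-negative bounded continuous functions $\phi_m\uparrow T$. For each $m$, the definition of weak-$*$ convergence gives $\int \phi_m\,d\hat\gamma_n\to \int \phi_m\,d\hat\gamma$ whenever $\hat\gamma_n\to\hat\gamma$ in $\hat\Gamma$, so
$$\liminf_{n\to\infty}\int T\,d\hat\gamma_n \;\ge\; \liminf_{n\to\infty}\int \phi_m\,d\hat\gamma_n \;=\; \int \phi_m\,d\hat\gamma.$$
Letting $m\to\infty$ and invoking monotone convergence on the right yields $\liminf_n \int T\,d\hat\gamma_n\ge \int T\,d\hat\gamma$, the desired LSC.

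The only delicate point is the truncation/integrability bookkeeping: since $T$ is unbounded in general, the Portmanteau theorem for bounded continuous test functions does not apply directly to $T$, which is why the monotone approximation by bounded continuous $\phi_m$ is used. The uniform domination $T\le \sum_k \hat a(i_k)$ together with the constant value of $\int \sum_k \hat a(i_k)\,d\hat\gamma$ on $\hat\Gamma$ is what makes the monotone convergence step legal. With these pieces in place the proof is essentially a two-line computation.
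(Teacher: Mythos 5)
Your proposal is correct and follows essentially the same route as the paper: the paper likewise uses the bound $S\le\sum_k\hat a(i_k)$ to reduce to a one-signed semicontinuous function (it works with the non-positive USC function $S-A$ and a decreasing sequence of continuous approximants, where you work with the non-negative LSC function $A-S$ and an increasing sequence), and both arguments conclude by monotone convergence, using that $\int\sum_k\hat a(i_k)\,d\hat\gamma=N!\int_I\hat a\,d\mu$ is constant on $\hat\Gamma$. Your write-up merely makes that last constancy step explicit, which the paper leaves implicit.
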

\begin{proof}
Firstly, we claim it is without loss of generality to assume $S$ is non-positive: by \hyperref[eqn: bigSbdd]{Equation \ref{eqn: bigSbdd}}, $S$ is bounded from above by the function lower-semi continuous and integrable $A:I^{N!}\rightarrow \mathbb{R}$ where
$$ A(i_1,...,i_{N!})= \sum_{j=1}^{N!} \hat{a}(i_j)$$
We can study $s-A$, which is a non-positive upper semi-continuous and integrable function. By the non-positive upper semi-continuity, there is a decreasing sequence of $S_l$ converging to $S$ pointwisely, where $S_l$ is a continuous bounded function on $I^{N!}$. Taking $\hat{\gamma}_k\rightarrow \hat{\gamma}$, by monotone convergence theorem, we have
$$\int_{I^{N!}} S d\hat{\gamma}=\lim_{l\rightarrow \infty}\int_{I^{N!}} S_l d\hat{\gamma}=\lim_{l\rightarrow \infty}\lim_{k\rightarrow \infty} \int_{I^{N!}} S_ld\hat{\gamma}_k\ge \limsup_{k\rightarrow\infty}\int_{I^{N!}} Sd\hat{\gamma}_k$$
\end{proof} 

Therefore, by \hyperref[lemma: compact]{Lemma \ref{lemma: compact}} and \hyperref[lemma: usc]{Lemma \ref{lemma: usc}}, there is a $\hat{ \gamma}^*\in\hat{ \Gamma}$ solves the maximization problem
$$\sup_{\hat{\gamma}\in \hat{\Gamma}} \int_{I^{N!}} \frac{S}{N} d\hat{\gamma}$$
By the construction in \hyperref[lemma: symmetricS]{Lemma \ref{lemma: symmetricS}}, we have a maximizer $\hat{\gamma}^*_{sym}\in \hat{ \Gamma}_{sym}$. Then we repeat the construction in \hyperref[eqn: constructtau]{Equation \ref{eqn: constructtau}}, and obtain a  $\tau^*=(\tau^*_{N'},...,\tau^*_{N})\in \Tau$ such that
$$\sum_{n=N'}^N \int_{\mathcal{G}_n} s_n d\tau_n^*=\int_{I^{N!}} \frac{S}{N} d\hat{\gamma}^*_{sym}=\sup_{\hat{\gamma}\in \hat{\Gamma}_{sym}} \int_{I^{N!}} \frac{S}{N} d\hat{\gamma}=\sup_{\tau\in \Tau} \sum_{n=N'}^N \int_{\mathcal{G}_n} s_n d\tau_n$$
i.e. $\tau^*$ solves the maximization problem
$$\sup_{\tau\in \Tau} \sum_{n=N'}^N \int_{\mathcal{G}_n} s_n d\tau_n$$

\subsection{Proof of Lemma 8} \label{appendix: equivalencerelation}

In this subsection, we prove \hyperref[lemma: equivalencerelation]{Lemma \ref{lemma: equivalencerelation}}. Given any maximizer $\tau^*$ and minimizer $u^*$ in the duality equation \hyperref[eqn: duality]{Equation \ref{eqn: duality}}, by $u^*\in \mathcal{U}$, we have the no-blocking condition:
$$\sum_{i\in G} u^*(i)\ge s(G), \forall G\in \mathcal{G}$$
For every permitted group size $n$, we define $\gamma_n^*$ by $\tau_n^*$ by \hyperref[eqn: defgamma]{Equation \ref{eqn: defgamma}}. By \hyperref[lemma:gamma]{Lemma \ref{lemma:gamma}}, we have
$$\sum_{n=N'}^N \int_{\mathcal{G}_n} \left(u^*(i_1)+...+u^*(i_n)\right) d\tau_n^*=\sum_{n=N'}^N \int_{I^n} \frac{u^*(i_1)+...+u^*(i_n)}{n} d\gamma_n^*
$$
Moreover, by \hyperref[lemma:gammaprop]{Lemma \ref{lemma:gammaprop}}, $\gamma^*\in\Gamma$. Therefore, we have
$$\sum_{n=N'}^N \int_{I^n} \frac{u^*(i_1)+...+u^*(i_n)}{n} d\gamma_n^*=\int_I u^* d\mu$$
Consequently,
$$\int_I u^* d\mu=\sum_{n=N'}^N \int_{\mathcal{G}_n} \left(u^*(i_1)+...+u^*(i_n)\right) d\tau_n^*\ge \sum_{n=N'}^N \int_{\mathcal{G}_n} s_n d\tau_n^*$$
By the duality relation, \hyperref[eqn: duality]{Equation \ref{eqn: duality}}, 
$$\int_I u^* d\mu= \sum_{n=N'}^N \int_{\mathcal{G}_n} s_n d\tau_n^*$$
Therefore, $\sum_{i\in G} u^*(i)\le s(G)$ for $\tau^*$-almost all $G\in \supp(\tau^*)$. By the continuity of $u$ and $s$, we have the feasibility condition
$$\sum_{i\in G} u^*(i)\le s(G), \forall G\in \supp(\tau^*)$$
That is, $\tau^*$ is a stable assignment with imputation given by $u^*$.

Conversely, given a stable assignment $\tau^*$ with an imputation $u^*$, we show $\tau^*$ solves the maximization problem and $u^*$ solves the minimization problem. Firstly, since $u^*\in \mathcal{U}$ is an imputation such that there is no blocking coalitions, for any assignment $\tau\in \Tau$, we have
$$\sum_{n=N'}^N \int_{\mathcal{G}_n} s_n d\tau_n\le \int_I u^* d\mu$$
However, since $\tau^*$ is stable, by feasibility condition,
$$\sum_{n=N'}^N \int_{\mathcal{G}_n} s_n d\tau_n^*\ge \int_I u^* d\mu$$
Therefore, $\tau^*$ solves the maximization problem in \hyperref[eqn: duality]{Equation \ref{eqn: duality}}.
Secondly, by the no-blocking condition of stable assignments, for any imputation $u\in \mathcal{U}$, we have
$$\sum_{n=N'}^N \int_{\mathcal{G}_n} s_n d\tau_n^*\le \int_I u d\mu$$
Again by feasibility condition, we have 
$$\sum_{n=N'}^N \int_{\mathcal{G}_n} s_n d\tau_n^*\ge \int_I u^* d\mu$$
Therefore, $u^*$ solves the minimization problem in \hyperref[eqn: duality]{Equation \ref{eqn: duality}}.

\section{Omitted Proofs in Section 3} \label{appendix: sect3}

In this section, we prove \hyperref[theorem: infdimexistence]{Theorem \ref{theorem: infdimexistence}} and \hyperref[theorem: infdimduality]{Theorem \ref{theorem: infdimduality}}. We use the same three-step procedure introduced \cite{gretsky1992nonatomic} to show the existence of a stable assignment. Firstly, we study the optimization problems and show the optimizers exist. Secondly, we prove the duality property to link the assignment and the imputation.  Lastly, we prove the equivalence between the solutions of optimization problems and the set of stable assignments.

\begin{lemma}
	$\Gamma_{\mathcal{G}}$ is a compact set in $\mathcal{M}_+(\mathcal{G})$.
\end{lemma}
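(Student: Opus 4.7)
My plan is to apply Prokhorov's theorem twice (once to the ground set $\mathcal{G}$, then once more to $\Gamma_{\mathcal{G}}$) together with a continuity argument that reframes the consistency constraint as the preimage of $\{\mu\}$ under a continuous pushforward map. The two ingredients I will need are (i) a uniform tightness/boundedness bound on $\Gamma_{\mathcal{G}}$, which will come almost for free once $\mathcal{G}$ itself is shown to be compact, and (ii) continuity of the map $\gamma \mapsto \pi(\gamma)$ sending an assignment to its induced marginal on $I$.

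First I would establish that $\mathcal{G}$ is compact in the weak topology on $\mathcal{M}_+(I)$. Uniform boundedness of total mass by $\varepsilon$ is immediate. Uniform tightness is also immediate because every $\nu \in \mathcal{G}$ satisfies $\nu \le \mu$, and $\mu$ is tight by Ulam's theorem (since $I$ is Polish): any compact $K_\delta$ with $\mu(I \setminus K_\delta) < \delta$ simultaneously controls every element of $\mathcal{G}$. Prokhorov then gives relative compactness. For closedness, if $\nu_n \to \nu$ weakly with $\nu_n \in \mathcal{G}$, Portmanteau yields $\nu(U) \le \liminf_n \nu_n(U) \le \mu(U)$ for every open $U$, and outer regularity on the Polish space $I$ extends $\nu \le \mu$ to all Borel sets; moreover, testing against the constant function $1 \in C_b(I)$ preserves $\varepsilon' \le \nu(I) \le \varepsilon$.

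Next I would bound the total mass of any $\gamma \in \Gamma_{\mathcal{G}}$ uniformly: substituting $A = I$ in the consistency condition gives $\int_\mathcal{G} \nu(I)\, d\gamma(\nu) = \mu(I) = 1$, and $\nu(I) \ge \varepsilon'$ forces $\gamma(\mathcal{G}) \le 1/\varepsilon'$. Because the base space $\mathcal{G}$ is already compact, uniform tightness is automatic, so Prokhorov applied in $\mathcal{M}_+(\mathcal{G})$ yields relative compactness of $\Gamma_{\mathcal{G}}$. For closedness, define $\pi: \mathcal{M}_+(\mathcal{G}) \to \mathcal{M}_+(I)$ by $\pi(\gamma)(A) = \int_\mathcal{G} \nu(A)\, d\gamma(\nu)$, so that $\Gamma_{\mathcal{G}} = \pi^{-1}(\{\mu\})$. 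For each $f \in C_b(I)$ the map $\phi_f: \nu \mapsto \int_I f\, d\nu$ lies in $C_b(\mathcal{G})$ (continuity holds by definition of the weak topology on $\mathcal{G}$, and $|\phi_f(\nu)| \le \varepsilon \|f\|_\infty$), and Fubini gives $\int_I f\, d(\pi\gamma) = \int_\mathcal{G} \phi_f\, d\gamma$. Hence weak convergence $\gamma_n \to \gamma$ in $\mathcal{M}_+(\mathcal{G})$ implies weak convergence $\pi(\gamma_n) \to \pi(\gamma)$ in $\mathcal{M}_+(I)$, so $\pi$ is continuous and $\Gamma_{\mathcal{G}} = \pi^{-1}(\{\mu\})$ is closed.

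The subtlest step is the closedness argument, because the consistency condition is stated for all Borel $A$ while the weak topology only sees $C_b$-integrals; rather than arguing directly with the evaluation maps $\nu \mapsto \nu(A)$ (which need not be weakly continuous for $A$ with $\mu(\partial A) > 0$), I route the condition through the pushforward map $\pi$ and use that two Borel measures on a Polish space agreeing on $C_b$-integrals coincide. Once this is set up, both compactness and closedness fall out cleanly.
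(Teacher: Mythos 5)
Your proposal is correct and follows the same overall architecture as the paper's proof: show $\mathcal{G}$ is compact via Ulam tightness ($\nu\le\mu$) plus Prokhorov, bound $\gamma(\mathcal{G})\le 1/\varepsilon'$ by taking $A=I$ in the consistency condition, apply Prokhorov again, and then check closedness. The one place where you genuinely diverge is the closedness step, and your version is the more careful one. The paper argues directly that for every Borel set $A\subset I$ the evaluation map $\nu\mapsto\nu(A)$ is weakly continuous and bounded, and passes to the limit in $\int_{\mathcal{G}}\nu(A)\,d\gamma_n$; but that continuity claim is false for general Borel $A$ (e.g.\ $\delta_{x_n}\to\delta_x$ with $x_n\in A$ and $x\in\partial A\setminus A$), and the paper's supporting inequality bounds a Kantorovich--Rubinstein-type quantity by itself in the wrong direction. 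You instead define the marginal map $\pi(\gamma)(A)=\int_{\mathcal{G}}\nu(A)\,d\gamma(\nu)$, verify via Fubini that $\int_I f\,d(\pi\gamma)=\int_{\mathcal{G}}\phi_f\,d\gamma$ for $f\in C_b(I)$ with $\phi_f\in C_b(\mathcal{G})$, conclude that $\pi$ is weak-to-weak continuous, and identify $\Gamma_{\mathcal{G}}=\pi^{-1}(\{\mu\})$ using that $C_b$-integrals separate finite Borel measures on a metric space. This buys you a closedness argument that only ever tests against continuous bounded functions, which is exactly what the weak topology controls, at the modest cost of checking that $\pi(\gamma)$ is a well-defined Borel measure (measurability of $\nu\mapsto\nu(A)$ plus monotone convergence) and that $\{\mu\}$ is weakly closed. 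Your Portmanteau-plus-outer-regularity argument for the closedness of $\mathcal{G}$ itself is likewise sound and slightly more explicit than the paper's ``routine to check.''
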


\begin{proof}
	Since $I$ is a Polish space,  $\mathcal{M}_+(I)$, the set of nonnegative Borel measures on $I$, with weak topology on it is a Polish space by Theorem 8.9.4 in \cite{Bogachev2}. By the same argument, $\mathcal{M}_+(\mathcal{M}_+(I))$ is a Polish space.
	
	On the other hand, since $I$ is a Polish space and $\mu$ is a probability measure, by Ulam's tightness theorem, $\{\mu\}\subset \mathcal{M}(I)$ is tight. That is, for any $\delta>0$, there is a compact set $K_\delta\subset I$ such that $\mu(I-K_\delta)<\delta$.

	Next, we show $\mathcal{G}$, as a subset of $\mathcal{M}_+(I)$, is compact. Firstly, $\mathcal{G}$ is uniformly tight. Since for any $\delta>0$, there is a compact set $K_\delta\subset I$ such that $|\nu|(I-K_\delta)\le \mu(I-K_\delta)<\delta$, for all $\nu\in\mathcal{G}$. Moreover, $\mathcal{G}$ is uniformly bounded in variation norm as $\norm{\nu}\le \varepsilon$ for all $\nu\in\mathcal{G}$. By Prohorov's theorem for Borel measures (Theorem 8.6.2 in \cite{Bogachev2}), $\mathcal{G}$ has compact closure.  It is routine to check $\mathcal{G}$ is closed. Therefore, $\mathcal{G}$ is compact.
	
	Furthermore, we show $\Gamma_{\mathcal{G}}$ is compact in $\mathcal{M}_+(\mathcal{M}_+(I))$. Firstly, $\Gamma_{\mathcal{G}}$ is uniformly tight since $\gamma(\mathcal{M}_+(I)-\mathcal{G})=0$ for all $\gamma\in \Gamma_{\mathcal{G}}$. Moreover, the total variation norm of $\gamma$ is uniformly bounded as for all $\gamma\in \Gamma_{\mathcal{G}}$
	$$\gamma(\mathcal{M}_+(I))=\gamma(\mathcal{G})\le \frac{1}{\varepsilon'}\int_{\mathcal{G}} \norm{\nu} d\gamma=\gamma(\mathcal{G})\le \frac{1}{\varepsilon'}\int_{\mathcal{G}} \nu(I) d\gamma=\frac{\mu(I)}{\varepsilon'}=\frac{\norm{\mu}}{\varepsilon'} $$
	Again by Prohorov's theorem for Borel measures, $\mathcal{G}$ has compact closure in $\mathcal{M}_+(\mathcal{M}_+(I))$. Moreover, since the map $\nu\rightarrow \nu(A)$ is continuous and bounded on $\mathcal{M}_+(I)$ for any Borel set $A\subset I$, for any convergent sequence $\gamma_n$ in $\Gamma_{\mathcal{G}}$ and any Borel set $A\subset I$,
	$$\mu(A)=\lim_{n\rightarrow \infty}\int_\mathcal{G} \nu(A) d\gamma_n= \int_\mathcal{G} \nu d\gamma$$
	Therefore, $\Gamma_{\mathcal{G}}$ is closed. Hence, 	$\Gamma_{\mathcal{G}}$ is compact.
\end{proof}

\begin{lemma}
	For the cooperative game $((I,\mu),s, \varepsilon', \varepsilon)$  satisfying \hyperref[B1]{Assumption (B1)} and \hyperref[B2]{Assumption (B2)}, there is an assignment $\gamma\in\Gamma_{\mathcal{G}}$ solving the welfare maximization problem
	$$\sup_{\gamma\in \Gamma_{\mathcal{G}}} \int_{\mathcal{G}} s d\gamma$$
\end{lemma}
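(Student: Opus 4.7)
The plan is to apply the direct method. The previous lemma establishes that $\Gamma_{\mathcal{G}}$ is compact in the weak-$*$ topology on $\mathcal{M}_+(\mathcal{G})$, so it suffices to show that the welfare functional $W : \gamma \mapsto \int_{\mathcal{G}} s \, d\gamma$ is upper semi-continuous on $\Gamma_{\mathcal{G}}$. Once that is in hand, the extreme value theorem supplies a maximizer and the lemma follows.

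The upper semi-continuity argument is the natural ``continuum-marginal'' analogue of Lemma \ref{lemma: usc}. The previous lemma's compactness proof actually shows that $\mathcal{G}$ itself is compact as a subset of $\mathcal{M}_+(I)$ (uniform tightness from Ulam, uniform boundedness from $\nu \le \mu$, and closedness because the evaluation maps $\nu \mapsto \nu(A)$ are weakly continuous). Since $s$ is USC on this compact metric space by Assumption (B1), $s$ is bounded above, say by a constant $M$. Then $M - s$ is a bounded non-negative lower semi-continuous function on the metric space $\mathcal{G}$, and hence admits a pointwise increasing approximation by bounded continuous functions $g_k \uparrow M - s$. Setting $s_k = M - g_k$ yields a decreasing sequence in $C_b(\mathcal{G})$ with $s_k \downarrow s$ pointwise and $s_k \ge s$ for every $k$.

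Now take any weak-$*$ convergent sequence $\gamma_n \to \gamma$ in $\Gamma_{\mathcal{G}}$. For each fixed $k$, continuity and boundedness of $s_k$ give $\int_{\mathcal{G}} s_k \, d\gamma_n \to \int_{\mathcal{G}} s_k \, d\gamma$ as $n \to \infty$, while dominated convergence (the $s_k$ are uniformly bounded by $M$ and $\gamma(\mathcal{G}) \le \norm{\mu}/\varepsilon'$ is finite) yields $\int_{\mathcal{G}} s_k \, d\gamma \to \int_{\mathcal{G}} s \, d\gamma$ as $k \to \infty$. Chaining
$$\limsup_{n\to\infty} \int_{\mathcal{G}} s \, d\gamma_n \le \limsup_{n\to\infty} \int_{\mathcal{G}} s_k \, d\gamma_n = \int_{\mathcal{G}} s_k \, d\gamma$$
and then letting $k \to \infty$ gives $\limsup_n W(\gamma_n) \le W(\gamma)$, which is the desired upper semi-continuity.

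The argument is essentially routine once the preceding compactness lemma is in place; the only mildly subtle point is the decreasing approximation of the bounded USC integrand by bounded continuous functions, which is standard on a metric space. Assumption (B2) plays no direct role in this particular step but will be needed to control the dual functional in the proof of the duality theorem.
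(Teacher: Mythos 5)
Your proposal is correct and follows essentially the same route as the paper: compactness of $\Gamma_{\mathcal{G}}$ plus upper semi-continuity of $\gamma\mapsto\int_{\mathcal{G}}s\,d\gamma$, the latter obtained by a decreasing approximation of the integrand by bounded continuous functions and a monotone/dominated convergence exchange of limits. The only (harmless) divergence is that the paper reduces to a non-positive integrand by subtracting $h(\nu)=\int_I a\,d\nu$ via Assumption (B2), whereas you subtract the constant bound $M$ supplied by upper semi-continuity of $s$ on the compact set $\mathcal{G}$, which is why (B2) correctly drops out of your argument.
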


\begin{proof}
	
	Firstly, we prove the map $\gamma\rightarrow \int_\mathcal{G} s d\gamma$ is upper-semi continuous on $\mathcal{M}_+(\mathcal{M}_+(I))$ if $s$ is non-positive. Since $s$ is a upper semi-continuous function and has a uniform bound, we take a decreasing sequence of $s_l$ converging to s pointwisely, where $s_l$ is continuous. Let $\gamma_n\rightarrow \gamma$ in $\mathcal{M}_+(\mathcal{G})$, by the monotone convergence theorem and the fact $s_l$ is decreasing, we have
	$$\int_\mathcal{G} s d\gamma=\lim_{l\rightarrow \infty}\int_\mathcal{G} s_ld\gamma=\lim_{l\rightarrow \infty}\lim_{n\rightarrow \infty} \int_\mathcal{G} s_ld\gamma_n\ge \limsup_{n\rightarrow\infty} \int_\mathcal{G} s d\gamma_n$$
	
	Moreover, define a function $h:\mathcal{G}\rightarrow \mathbb{R}$ by $h(\nu)=\int_I a d\nu$. By \hyperref[B2]{Assumption (B2)}, $s-h$ is non-positive in $\mathcal{G}$. Therefore, taking a sequence of $\gamma_n\in\Gamma_{\mathcal{G}}$ approaching the maximum, by compactness of the choice set $\Gamma_{\mathcal{G}}$, a subsequence of it converges to some $\gamma\in \Gamma_{\mathcal{G}}$. By selecting the subsequence, we suppose $\gamma_n\rightarrow \gamma$ in $\mathcal{M}_+(\mathcal{G})$. Therefore,
	$$\int_\mathcal{G} s d\gamma-\int_\mathcal{G} h d\gamma \ge \limsup_{n\rightarrow\infty} \left( \int_\mathcal{G} s d\gamma_n- \int_\mathcal{G} h d\gamma_n\right)$$
	However, we note h is continuous and bounded on $\mathcal{G}$ since
	$h(\nu)=\int_I a d\nu\le \norm{a}_{\infty}\norm{\nu}_0 $ and $h(\nu)\le \norm{a}_{\infty} \varepsilon $. Therefore, by the definition of weak convergence on in $\mathcal{M}_+(\mathcal{G})$, we have
	$\lim_{n\rightarrow \infty}\int_\mathcal{G} h d\gamma_n=\int_\mathcal{G} h d\gamma$. Therefore, 
	$$\int_\mathcal{G} s d\gamma \ge \limsup_{n\rightarrow\infty} \int_\mathcal{G} s d\gamma_n$$
	i.e. $\gamma$ is a maximizer.
\end{proof}

\begin{lemma}
	For the game $((I,\mu),s, \varepsilon', \varepsilon)$ satisfying \hyperref[B1]{Assumption (B1)} and \hyperref[B2]{Assumption (B2)}, there is an imputation $u\in \mathcal{U}$ solving the payoff minimization problem  
	$$\inf_{u\in \mathcal{U}} \int_{I} u d\mu$$
\end{lemma}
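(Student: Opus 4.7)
The plan is to parallel the proof of the preceding lemma on primal attainment by combining (i) weak continuity of the objective $u \mapsto \int u\,d\mu$ on $L^1(\mu)$ with (ii) a compactness argument for sublevel sets of the objective in $\mathcal{U}$. First, observe that $\mathcal{U}$ is nonempty: Assumption (B2) furnishes $a \in C_b(I)$ with $\int a\,d\nu \ge s(\nu)$ for every $\nu \in \mathcal{G}$, so $a \in \mathcal{U}$. Since $\varepsilon' \mu \in \mathcal{G}$ and $s \ge 0$, taking $\nu = \varepsilon'\mu$ in the defining inequality of $\mathcal{U}$ yields $\int u\,d\mu \ge 0$ for every $u \in \mathcal{U}$, so the infimum is finite.

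I would then take a minimizing sequence $u_n \in \mathcal{U}$ with $\int u_n\,d\mu \searrow \inf_\mathcal{U}\int u\,d\mu$. The core step is to adjust this sequence so that it becomes uniformly integrable in $L^1(\mu)$. The negative part of $u_n$ is already controlled by the constraint: for every Borel $B \subset I$ with $\mu(B) \in [\varepsilon',\varepsilon]$, the restriction $\mu|_B$ lies in $\mathcal{G}$, so $\int_B u_n\,d\mu \ge s(\mu|_B) \ge 0$. For the positive part, I would replace $u_n$ by an $s$-conjugate $\tilde{u}_n$, defined as a pointwise infimum analogous to the $c$-transform in classical two-marginal optimal transport, that still lies in $\mathcal{U}$, is pointwise dominated by $a$ $\mu$-almost everywhere, and satisfies $\int \tilde{u}_n\,d\mu \le \int u_n\,d\mu$. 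Together these bounds imply that $\{\tilde{u}_n\}$ is uniformly integrable in $L^1(\mu)$.

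Applying the Dunford-Pettis theorem yields a weakly $L^1$-convergent subsequence $\tilde{u}_{n_k} \rightharpoonup u^*$. For each fixed $\nu \in \mathcal{G}$, since $\nu \le \mu$ the Radon-Nikodym density $d\nu/d\mu$ lies in $L^\infty(\mu)$, so weak convergence gives $\int u^*\,d\nu = \lim_k \int \tilde{u}_{n_k}\,d\nu \ge s(\nu)$; hence $u^* \in \mathcal{U}$. Weak convergence of $\tilde{u}_{n_k}$ against the constant test function $1 \in L^\infty(\mu)$ also gives $\int u^*\,d\mu = \lim_k \int \tilde{u}_{n_k}\,d\mu = \inf_\mathcal{U}\int u\,d\mu$, so $u^*$ is a minimizer.

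The main obstacle is constructing the $s$-conjugate $\tilde{u}_n$ satisfying the three desiderata simultaneously: (a) preserves $\int \tilde{u}_n\,d\nu \ge s(\nu)$ for every $\nu \in \mathcal{G}$, (b) caps $\tilde{u}_n \le a$ pointwise $\mu$-almost everywhere, and (c) does not worsen the objective. Naive truncations such as $\min(u_n, a)$ fail (a), since $\mathcal{U}$ is not closed under pointwise minima. An alternative route that bypasses the explicit transform is to first establish strong duality via a Fenchel-Rockafellar argument in the pairing $\bigl(C_b(I), \mathcal{M}(I)\bigr)$, exploiting the fact that the consistency condition is a \emph{single} vector-valued constraint on $\gamma$ admitting only one Lagrange multiplier $u \in L^1(\mu)$, and then to read off attainment of the infimum directly from the duality theorem, as in Kellerer (1984).
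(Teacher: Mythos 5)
Your skeleton is the same as the paper's: extract a uniformly integrable minimizing sequence, pass to a weak $L^1$ limit via Dunford--Pettis, and check that $\mathcal{U}$ is weakly closed. Two of your ingredients are correct and clean: the lower bound $\int_I u\,d\mu\ge s(\varepsilon'\mu)/\varepsilon'\ge 0$ obtained from the group $\varepsilon'\mu\in\mathcal{G}$, and the observation that $\nu\le\mu$ gives a density $d\nu/d\mu\in L^\infty(\mu)$ so that each constraint $\int_I u\,d\nu\ge s(\nu)$ is a weakly closed half-space (this is a tidier justification of weak closedness than the paper offers). The problem is that the one step you leave open is the entire content of the lemma. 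You need the minimizing sequence to be uniformly integrable; you correctly note that the obvious truncation $\min(u_n,a)$ is not known to remain in $\mathcal{U}$; and you then do not construct the ``$s$-conjugate'' $\tilde u_n$ with properties (a)--(c). Your fallback does not rescue this: a Fenchel--Rockafellar or Hahn--Banach argument yields equality of the two optimal values, not attainment of the infimum, so there is nothing to ``read off'' --- indeed the paper proves the duality identity separately and still needs the present lemma for dual attainment. Moreover, your claim that the negative parts are ``already controlled by the constraint'' gives less than you need: padding a small set $C$ out to a group of mass $\varepsilon'$ of the form $\mu|_C+c\,\mu|_{C^c}$ yields only $\int_C u_n^-\,d\mu\le \varepsilon'\norm{a}_\infty+\norm{a}_\infty\mu(C)$, a uniform $L^1$ bound that does not tend to $0$ as $\mu(C)\to 0$, so uniform integrability of the negative parts does not follow from what you wrote even granting the cap $\tilde u_n\le a$.

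For comparison, the paper closes this step by asserting that $\min(u,a)\in\mathcal{U}$ for every $u\in\mathcal{U}$ and that every $u\in\mathcal{U}$ is pointwise non-negative, which places the minimizing sequence in the order interval $[0,\norm{a}_\infty]$ where uniform integrability is automatic; it then applies Dunford--Pettis to the nested sublevel sets $\mathcal{U}_a(k)$ and Cantor's intersection theorem. You have put your finger on exactly the point where the paper is terse --- closure of $\mathcal{U}$ under $\min(\cdot,a)$ is not a formal consequence of the definition, since $\mathcal{U}$ is not closed under pointwise minima in general --- but identifying the crux is not the same as resolving it. As written, your argument is a reduction of the lemma to an unproved truncation/conjugation claim, i.e., it has a genuine gap.
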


\begin{proof}
	By By \hyperref[B2]{Assumption (B2)}, there is a function $a\in L^1(I,\mu)$ such that $\int_I ad\nu\ge s(\nu)$ for all $\nu\in\mathcal{G}$. Therefore, $a\in \mathcal{U}$.
	Note that for any $u\in \mathcal{U}$, $\min(u,a)\in \mathcal{U}$. Therefore, we define a subset $\mathcal{U}_a\subset \mathcal{U}$ where
	$$\mathcal{U}_a=\{u\in\mathcal{U}: 0\le u\le \norm{a}_\infty \}$$
	Clearly, $\inf_{u\in \mathcal{U}} \int_{I} u d\mu=\inf_{u\in \mathcal{U}_a} \int_{I} u d\mu$ as all $u\in \mathcal{U}$ is pointwisely non-negative.
	
	Next, we define a sequence of spaces $(\mathcal{U}_a(k))_{k\in \mathbb{N}}$ in  $L^1(\mu)$ where
	$$\mathcal{U}_a(k)=\left\{u\in \mathcal{U}_a: \int_I u d\mu\le \inf_{u\in \mathcal{U}} \int_{I} u d\mu+1/k  \right\}$$
	We note for any $k\in \mathbb{N}$, $\mathcal{U}_a(k)$ is non-empty. Moreover, it is uniformly integrable: for any $c>0$, we take $\delta=c/\norm{a}_\infty>0$. Then, for any Borel measurable $E\subset I$ with $\mu(E)<\delta$, $\int_{E} u d\mu\le c$. By Dunford-Pettis theorem (Theorem 3 in \cite{diestel1978vector}), the weak closure of $\mathcal{U}_a(k)$ is weakly compact in $L^1(\mu)$. But it is easy to see $\mathcal{U}_a(k)$ is closed in weak topology. Therefore, 	$\mathcal{U}_a(k)$ is non-empty, closed and compact in $L^1(\mu)$ endowed with weak topology. Note the sequence $\mathcal{U}_a(k)$ is decreasing, by Cantor's intersection theorem,
	$$\bigcap_{k\in \mathbb{N}} \mathcal{U}_a(k)\neq \emptyset$$
	The element $u\in \bigcap_{k\in \mathbb{N}} \mathcal{U}_a(k)$ has the property that $\int_I u d\mu\le \inf_I u d\mu$.	
\end{proof}

Next, we prove the \hyperref[theorem: infdimduality]{Theorem \ref{theorem: infdimduality}}. The proof is based on the proof of Theorem 11.8.2 in \cite{dudley2002real}.
\begin{proof}
	
	For any $u\in L^1(I,\mu)$, we define $F_u:\mathcal{G}\rightarrow \mathbb{R}$ by
	$$F_u(\nu)=\int_I u d\nu$$
	Since $I$ is a metric space and $\mu$ is a finite Borel measure, $C_b(I)$ is dense in $L^1(I,\mu)$. Therefore,
	for any $\nu_1,\nu_2\in \mathcal{G}$ and $g\in C_b(I)$, we have 
	$$ |F_u(\nu_1)-F_u(\nu_2)|\le 2\norm{u-g}_1+\int_I g d(\nu_1-\nu_2)\le 2\norm{u-g}_1+\norm{g}_\infty \norm{\nu_1-\nu_2}_0$$
	Therefore, $F_u$ is weak continuous in $\mathcal{G}$. i.e. $F_u\in C(\mathcal{G})$.

	We define $L$ to be the space containing all such function $F_u$. That is, 
	$$L=\{F_u\in C(\mathcal{G}):u\in L^1(I,\mu)\}$$
	Since, for any $c\in\mathbb{R}$, $u_1,u_2\in L^1(I,\mu)$, we have $F_{cu_1+u_2}=cF_{u_1}+F_{u_2}$, $L$ is a linear subspace of $C(\mathcal{G})$.
	
	Next, define $H\subset C(\mathcal{G})$ by 
	$$H=\{F\in C(\mathcal{G}): F(\nu)\ge s(\nu), \forall \nu\in\mathcal{G}\}$$ 
	It is easy to see H is convex. Moreover, H is nonempty since, by \hyperref[B2]{Assumption (B2)}, $F_a\in H$. Consequently, we also have $F_a+1\in int(H)$.	
	
	Now, we define a linear form $r$ on $L\subset C(\mathcal{G})$ by 
	$$r(F_u)=\int_I ud\mu$$
	It is easy to check $r$ is a linear map. By \hyperref[B2]{Assumption (B2)}, $L\cap H\neq \emptyset$ as $F_a\in L\cap H$. Moreover, $r$ is bounded from below in $L\cap H$, since, for any $\gamma\in \Gamma_{\mathcal{G}}$,
	$$r(F_u)=\int_I ud\mu=\int_{\mathcal{G}} \int_I ud\nu d\gamma\ge \int_{\mathcal{G}} s(\nu) d\gamma\ge \left(\inf_{\nu\in\mathcal{G}} s(\nu)\right) \gamma(\mathcal{G})$$
	
	By By \hyperref[B1]{Assumption (B1)}, $s$ is upper-semi continuous on a compact set $\mathcal{G}$, Therefore, $\inf_{\nu\in\mathcal{G}} s(\nu)>-\infty$. In addition, we have 
	$1=\mu(I)=\int_{\mathcal{G}} \nu(I) d\gamma\le \varepsilon \gamma(\mathcal{G})$.
	Therefore, r is bounded from below by $\left(\inf_{\nu\in\mathcal{G}} s(\nu)\right)/\varepsilon$. By Hahn-Banach theorem (Theorem 6.2.11 in \cite{dudley2002real}), r can be extended to a linear functional $\tilde{r}$ on $C(\mathcal{G})$ such that,
	$$\inf_{F\in H} \tilde{r}(F)=\inf_{F\in L\cap H} r(F)$$
	We claim $\tilde{r}$ is a bounded positive functional on $C(\mathcal{G})$. To see the $r$ is positive for any $F\ge 0$ in $C(\mathcal{G})$ and real number $c\ge 0$, we have $\tilde{s}+cF+1\in H$, where $\tilde{s}$ is a continuous approximation of $s$. Note we have
	$$\inf_{F\in H} \tilde{r}(F)=\inf_{F\in H\cap L} r(F)\ge \frac{\inf_{\nu\in\mathcal{G}} s(\nu)}{\varepsilon}$$
	Therefore, by taking c large enough, we get $\tilde{r}(F)\ge 0$.
	To see $\tilde{r}$ is bounded, we note for any $F\in C(\mathcal{G})$, $|\tilde{r}(F)|\le |\tilde{r}(1)|\norm{F}_{\infty}$. 
	
	Hence, by Riesz representation theorem (Theorem 7.4.1 in \cite{dudley2002real}), there exists a positive Borel measure $\rho$ on $\mathcal{G}$ such that $$\tilde{r}(F)=\int_{\mathcal{G}}Fd\rho$$ 
	for any $F\in C(\mathcal{G})$. Furthermore, we show $\rho\in\Gamma_{\mathcal{G}}$. Note $\tilde{r}=r$ on $L$, for any Borel measurable set $A\subset I$, we take $u=\mathbbm{1}_A$, $F_u(\nu)=\nu(A)$ and we have
	$$\int_{\mathcal{G}} \nu(A)d\rho=\tilde{r}(F_u)=r(F_u)=\int_I ud\mu$$
	Therefore, 
	$$\inf_{u\in\mathcal{U}}\int_I u d\mu=\inf_{F\in H\cap L}r(F)=\inf_{F\in H}\tilde{r}(F)=\inf_{F\in H}\int_{\mathcal{G}} Fd\rho= \int_{\mathcal{G}} s d\rho\le \sup_{\gamma\in\Gamma_{\mathcal{G}}} \int_\mathcal{G} s d\gamma$$
	Conversely, for any $u\in L^1(I,\mu)$, we approximate $u$ by step functions such that
	$\int_I u d\nu\ge s(\nu)$ for all $\nu\in \mathcal{G}$. Therefore, for any indicate function $u=1_{A}$, we have
	$$\int_I ud\mu=\int_{\mathcal{G}}\int_I ud\nu d\gamma\ge \int_{\mathcal{G}} s d\gamma$$
	Consequently,  
	$$\sup_{\gamma\in\Gamma_{\mathcal{G}}} \int_\mathcal{G} v d\gamma\le\inf_{u\in\mathcal{U}}\int_I u d\mu$$
\end{proof}

Lastly, to prove the existence of a stable assignment, we prove establish the equivalence relation between duality theorem \hyperref[theorem: infdimduality]{Theorem \ref{theorem: infdimduality}}.

\begin{lemma}
	Any maximizer $\gamma\in \Gamma_{\mathcal{G}}$ in the duality equation is a stable assignment with the imputation given by the minimizer $u\in \mathcal{U}$. 
	
	Conversely, any stable assignment $\gamma$ solves the maximization problem and the imputation $u$ associated with the assignment solves the minimization problem.	
\end{lemma}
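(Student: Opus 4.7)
The plan is to adapt the argument from \hyperref[appendix: equivalencerelation]{the equivalence lemma for finite groups} (\hyperref[lemma: equivalencerelation]{Lemma \ref{lemma: equivalencerelation}}), with the consistency condition defining $\Gamma_{\mathcal{G}}$ playing the role of the change-of-variable identities that worked in the finite case. The proof reduces to one Fubini-type identity plus two applications of the duality equation in \hyperref[theorem: infdimduality]{Theorem \ref{theorem: infdimduality}}.

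First I would establish the key identity: for any $u \in L^1(I,\mu)$ and any assignment $\gamma \in \Gamma_{\mathcal{G}}$,
$$\int_I u \, d\mu \;=\; \int_{\mathcal{G}} \int_I u \, d\nu \, d\gamma(\nu).$$
When $u = \mathbbm{1}_A$ for a Borel set $A \subset I$, this is precisely the consistency condition in the definition of $\Gamma_{\mathcal{G}}$. It extends to simple nonnegative functions by linearity, to nonnegative measurable functions by monotone convergence, and to general $u \in L^1(I,\mu)$ by splitting $u = u^+ - u^-$; \hyperref[B2]{Assumption (B2)} ensures both sides are finite. For the forward direction, let $\gamma^*$ attain the supremum and $u^* \in \mathcal{U}$ attain the infimum (whose existence was established in the preceding lemmas). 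By definition of $\mathcal{U}$, the no-blocking condition $\int_I u^* \, d\nu \ge s(\nu)$ holds for every $\nu \in \mathcal{G}$. Applying the identity and \hyperref[theorem: infdimduality]{Theorem \ref{theorem: infdimduality}},
$$\int_{\mathcal{G}} \Bigl(\int_I u^* \, d\nu - s(\nu)\Bigr) d\gamma^*(\nu) \;=\; \int_I u^* \, d\mu - \int_{\mathcal{G}} s \, d\gamma^* \;=\; 0.$$
Since the integrand is pointwise nonnegative, it vanishes $\gamma^*$-almost everywhere, which is exactly the feasibility condition. Hence $\gamma^*$ is a stable assignment with imputation $u^*$.

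For the converse, let $\gamma^*$ be a stable assignment with imputation $u^*$. For any $\tau \in \Gamma_{\mathcal{G}}$, the identity and the no-blocking property of $u^*$ give
$$\int_{\mathcal{G}} s \, d\tau \;\le\; \int_{\mathcal{G}} \int_I u^* \, d\nu \, d\tau(\nu) \;=\; \int_I u^* \, d\mu,$$
while the identity and the $\gamma^*$-almost everywhere feasibility give the reverse inequality $\int_{\mathcal{G}} s \, d\gamma^* \ge \int_I u^* \, d\mu$. Combining, $\gamma^*$ attains the supremum and this common value equals $\int_I u^* \, d\mu$. Symmetrically, for any $u \in \mathcal{U}$, no-blocking yields $\int_I u \, d\mu \ge \int_{\mathcal{G}} s \, d\gamma^* = \int_I u^* \, d\mu$, so $u^*$ attains the infimum.

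The main subtlety is the first identity: since consistency is stated only for indicators of Borel sets, I must extend it to general $L^1$ integrands by the monotone-convergence approximation outlined above, relying on the joint measurability of $(\nu,i) \mapsto u(i)$ on $\mathcal{G} \times I$ and the integrability control $s(\nu) \le \int_I a \, d\nu$ with $a \in C_b(I)$ from \hyperref[B2]{Assumption (B2)} to keep every integral finite. Once this identity is in hand, the remaining steps are formally identical to the finite-group-size proof and require no further input beyond the existence of the optimizers and the duality equality.
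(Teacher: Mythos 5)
Your proposal is correct and follows essentially the same route as the paper: the central identity $\int_I u\,d\mu=\int_{\mathcal G}\int_I u\,d\nu\,d\gamma$, combined with the attained duality equation, yields feasibility $\gamma$-a.e.\ in one direction and optimality of both $\gamma$ and $u$ in the other. Your explicit extension of the consistency condition from indicators to general $L^1$ integrands via monotone convergence is in fact a more careful justification of the step the paper attributes loosely to the Riesz representation theorem, but it is not a different argument.
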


\begin{proof}
	Firstly, we suppose $\gamma\in\Gamma_{\mathcal{G}}$ and $u\in \mathcal{U}$ are the solutions of the maximization and minimization problem respectively. 
	By definition, $\int_I u d\nu\ge s(\nu)$ for all groups $\nu\in \mathcal{G}$. In contrast, by Riesz representation theorem,
	$$\int_I u d\mu=\int_{\mathcal{G}} \left(\int_I u d\nu \right)d\gamma = \int_{\mathcal{G}} s d\gamma$$	
	Therefore, $\int_I u d\nu\le s(\nu)$ for $\gamma-$almost all $\nu\in\mathcal{G}$.
	
	Conversely, we take a stable assignment $\gamma\in\Gamma_{\mathcal{G}}$ with its imputation $u\in\mathcal{U}$. By definition,  $\int_I u d\nu\ge s(\nu)$ for all $\nu\in\mathcal{G}$, and $\int_I u d\nu\le v(\nu)$, for $\gamma$-almost all $\nu\in\mathcal{G}$. Therefore, by Riesz representation theorem,
	$$\int_{\mathcal{G}} sd\gamma= \int_{\mathcal{G}} \int_I ud\nu d\gamma= \int_I ud\mu$$
	For any $y\in \mathcal{U}$, we have that $\int_I y d\nu \ge s(\nu)$ for all $\nu\in\mathcal{G}$. As a result, 
	$$\int_I ud\mu=\int_{\mathcal{G}} sd\gamma\le \int_{\mathcal{G}} \int_I yd\nu d\gamma= \int_I yd\mu$$
	That is, $u$ solves the minimization problem. 
	For any $\tau\in\Gamma_{\mathcal{G}}$, 
	$$\int_{\mathcal{G}} sd\tau\le \int_{\mathcal{G}} \int_I u d\nu d\tau=\int_I ud\mu=\int_{\mathcal{G}} sd\gamma$$
	That is, $\gamma$ solves the maximization problem. 
\end{proof}

\end{document}